\renewcommand{\tilde}{\widetilde}
\renewcommand\hat[1]{%
\savestack{\tmpbox}{\stretchto{%
  \scaleto{%
    \scalerel*[\widthof{\ensuremath{#1}}]{\kern-.6pt\bigwedge\kern-.6pt}%
    {\rule[-\textheight/2]{1ex}{\textheight}}
  }{\textheight}%
}{0.5ex}}%
\stackon[1pt]{#1}{\tmpbox}%
}
\newcommand{\bcell}{\cellcolor{black!10}}
\newcommand{\inlineitem}[1][]{%
\ifnum\enit@type=\tw@
    {\descriptionlabel{#1}}
  \hspace{\labelsep}%
\else
  \ifnum\enit@type=\z@
       \refstepcounter{\@listctr}\fi
    \quad\@itemlabel\hspace{\labelsep}%
\fi}
\newcommand{\xrightarrowtail}[1]{\!\!{\xymatrix@C=1em{\ar@{>->}[r]^{#1}&}}\!\!\!}
\newcommand{\xleftarrowtail}[1]{\!\!\!{\xymatrix@C=1em{&\ar@{>->}[l]_{#1}}}\!\!}
\tikzset{H/.style={draw,color=black,fill={rgb:black,1;white,3}, rectangle}}
\tikzset{rn/.style={}}
\tikzset{simple/.style={}}
\tikzset{none/.style={}}
\tikzset{nothing/.style={}}
\tikzset{thick/.style={draw, line width=0.5mm }}
\tikzset{X/.style={draw,fill=white, circle,scale=1, inner sep=0pt, minimum size=10pt}}
\tikzset{Z/.style={draw,fill={rgb:black,1;white,3}, text=black, circle,scale=1, inner sep=0pt, minimum size=10pt }}
\tikzset{Xthick/.style={draw,fill=white, circle,scale=1, inner sep=0pt, minimum size=15pt,line width=0.5mm}}
\tikzset{Zthick/.style={draw,fill={rgb:black,1;white,3}, text=black, circle,scale=1, inner sep=0pt, minimum size=15pt,line width=0.5mm }}
\tikzset{phase/.style={draw,fill=white, diamond,scale=1, inner sep=0pt, minimum size=10pt}}
\tikzset{discard/.style={draw, xscale=2.2,ground, rotate=90}}
\tikzset{mmixed/.style={draw, quantum, yscale=-2.2,ground, rotate=180}}
\tikzset{quantum/.style={line width=.6mm}}
\tikzset{map/.style={draw,color=black,fill=white, rectangle}}
\tikzset{mapthick/.style={draw,color=black,fill=white, rectangle, inner sep=0pt, minimum size=15pt,line width=0.5mm }}
\tikzset{otimes/.style={draw,fill=white,rotate=45, scale=0.9,minimum height=.1cm,circle,append after command={
[shorten >=\pgflinewidth, shorten <=\pgflinewidth,]
(\tikzlastnode.north) edge (\tikzlastnode.south)
(\tikzlastnode.east) edge (\tikzlastnode.west)
}
}
}
\tikzset{dot/.style={thick, fill=black, circle, scale=1, inner sep = .05cm}}
\tikzset{oplus/.style={draw, scale=0.9,minimum height=.1cm,circle,append after command={
[shorten >=\pgflinewidth, shorten <=\pgflinewidth,]
(\tikzlastnode.north) edge (\tikzlastnode.south)
(\tikzlastnode.east) edge (\tikzlastnode.west)
}
}
}
\tikzset{andin/.style={
draw,
and gate US,
scale=1,
fill=white,
label={center:{\it \&}}
}}
\tikzset{andout/.style={
draw,
and gate US,
rotate=-180,
scale=1,
fill=white,
label={center:{\it \&}}
}}
\tikzset{tri/.style={
draw,
shape border rotate=-30,
regular polygon,
regular polygon sides=3,
fill={rgb:black,1;white,3},
inner sep = .1cm
}
}
\tikzset{triflip/.style={
draw,
shape border rotate=30,
regular polygon,
regular polygon sides=3,
fill={rgb:black,1;white,3},
inner sep = .1cm
}
}
\tikzset{fanin/.style={
draw,
shape border rotate=30,
regular polygon,
regular polygon sides=3,
fill=white,
inner sep = .1cm
}
}
\tikzset{fanout/.style={
draw,
shape border rotate=-30,
regular polygon,
regular polygon sides=3,
fill=white,
inner sep = .1cm
}
}
\tikzset{onein/.style={
draw,
shape border rotate=30,
regular polygon,
regular polygon sides=3,
fill=black,
inner sep = .04cm,
scale=1.2
}
}
\tikzset{oneout/.style={
draw,
shape border rotate=-30,
regular polygon,
regular polygon sides=3,
fill=black,
inner sep = .04cm,
scale=1.2
}
}
\tikzset{zeroin/.style={
draw,
shape border rotate=30,
regular polygon,
regular polygon sides=3,
fill=white,
inner sep = .04cm,
scale=1.2
}
}
\tikzset{zeroout/.style={
draw,
shape border rotate=-30,
regular polygon,
regular polygon sides=3,
fill=white,
inner sep = .04cm,
scale=1.2
}
}
\tikzstyle{strings}=[baseline={([yshift=-.5ex]current bounding box.center)}]
\tikzset{every picture/.append style={scale=.75}, transform shape,strings}
  \newtheorem{thm}{Theorem}[section]
  \newtheorem{corollary}[thm]{Corollary}
  \newtheorem{lemma}[thm]{Lemma}
  \newtheorem{proposition}[thm]{Proposition}
  \newtheorem{definition}[thm]{Definition}
  \newtheorem{example}[thm]{Example}
  \newtheorem{remark}[thm]{Remark}
\newcommand{\cnot}{\mathsf{cnot}}
\newcommand{\tof}{\mathsf{tof}}
\newcommand{\Not}{\mathsf{not}}
\newcommand{\zeroin}{|0\rangle}
\newcommand{\zeroout}{\langle 0|}
\newcommand{\CNOT}{\mathsf{CNOT}}
\newcommand{\Sets}{\mathsf{Set}}
\newcommand{\FinOrd}{\mathsf{FinOrd}}
\newcommand{\TOF}{\mathsf{TOF}}
\newcommand{\Span}{\mathsf{Span}}
\newcommand{\op}{\mathsf{op}}
\newcommand{\Hilb}{\mathsf{Hilb}}
\newcommand{\FHilb}{\mathsf{FHilb}}
\newcommand{\CP}{\mathsf{CP}}
\newcommand{\FPinj}{\mathsf{FPinj}}
\newcommand{\Pinj}{\mathsf{Pinj}}
\newcommand{\Par}{\mathsf{Par}}
\newcommand{\ParIso}{\mathsf{ParIso}}
\newcommand{\Total}{\mathsf{Total}}
\newcommand{\Mat}{\mathsf{Mat}}
\newcommand{\STOCH}{\mathsf{STOCH}}
\newcommand{\X}{\mathbb{X}}
\newcommand{\N}{\mathbb{N}}
\newcommand{\F}{\mathbb{F}}
\newcommand{\M}{\mathcal{M}}
\newcommand{\eq}[1]{\stackrel{\scalebox{.6}{#1}}{=}}
\newcommand{\ZXA}{\mathsf{ZX}\textit{\&}}
\newcommand{\ZX}{\mathsf{ZX}}
\newcommand{\ZH}{\mathsf{ZH}}
\DeclareMathSymbol{\bot}{\mathord}{symbols}{"3F}
\renewcommand{\epsilon}{\varepsilon}
\renewcommand{\bar}[1]{\overline{#1}\hspace*{.01cm}}
\title{The $\ZXA$-calculus: A complete graphical calculus for classical circuits using spiders}
\date{\today}
\author{Cole Comfort\\ Department of Computer Science, University of Oxford}
\begin{document}


\maketitle

\begin{abstract}
We give a complete presentation for the fragment, $\ZXA$, of the $\ZX$-calculus generated by the Z and X spiders (corresponding to copying and addition) along with the {\sf not} gate and the {\sf and} gate.  To prove completeness, we freely add a unit and counit to the category $\TOF$ generated by the Toffoli gate and ancillary bits, showing that this yields the full subcategory of finite ordinals and functions with objects powers of two; and then perform a two way translation between this category and $\ZXA$.  A translation to some extension of $\TOF$, as opposed to some fragment of the $\ZX$-calculus, is a natural choice because of the multiplicative nature of the Toffoli gate.  To this end, we show that freely adding counits to the semi-Frobenius algebras of a discrete inverse category is the same as constructing the Cartesian completion. In particular, for a discrete inverse category, the category of classical channels, the Cartesian completion and adding counits all produce the same category.  Therefore, applying these constructions to $\TOF$ produces the full subcategory of finite ordinals and partial maps with objects powers of two.  By glueing together the  free counit completion and the free unit completion, this yields ``qubit multirelations.''
\end{abstract}


\section{Introduction}

In this paper a complete set of identities is provided for the fragment, $\ZXA$, of the $\ZX$-calculus, generated by black and white spiders, the not gate and the {\sf and} gate. We show that this is a universal and complete presentation of ``qubit multirelations,'' or equivalently $2^n \times 2^m$ dimensional matrices over $\N$.
 To prove completeness and universality requires much exposition.  Along the way we show that the category of classical channels of a discrete inverse category is the Cartesian completion of that discrete inverse category.  We then show that the corresponding environment structure is precisely the free counit completion of the chosen Frobenius structure.  This allows us to present the Cartesian completion of, $\TOF$, the category generated by the Toffoli gate, $|1\rangle$ and $\langle 1|$ by only adding the $|+\rangle$ state and the unitality equation.  By freely adding both the unit and counit to $\TOF$, corresponding to $\sqrt{2}|+\rangle$ and  $\sqrt{2}\langle +|$, this yields an isomorphism with spans between ordinals $2^n$, $n\in \N$, or equivalently, ``qubit multirelations.''

The identities which are given by this two way translation are {\em almost} the union of the complete identities for Boolean functions \cite[Thm. 10]{lafont} (functions of type $\F_2^n \to \F_2$) and the identities for $\Span^\sim(\Mat(\F_2))$ \cite[Def. 5.1]{linrel}.  These classes of circuits, and these identities for that matter, are nothing new; however, we provide a completeness result, as well as a structural account of how the full classical qubit fragment of $\FHilb$ can be obtained from adding discarding and codiscarding to the full classically reversible Boolean fragment.  In fact, some of these identities are presented in \cite[Chap. 5]{herrmann},  and they are used in the $\ZH$-calculus \cite{zh,zhpi}, as well as in some presentations of the $\ZX$-calculus with the triangle generator as a primitive \cite{munson2019note,ringZX}.  This is particularity unsurprising for the latter, \cite{ringZX}, where the author proves completeness of the $\ZX$-calculus over arbitrary semirings, which subsumes the completeness result herein.  Albeit, the presentation given here is substantially simpler.  It worth mentioning that $\ZXA$ is not a $\ZX*$-calculus in the sense of \cite{zxstar}, because the {\sf and} gate is not a spider.  $\ZXA$ should be instead though of as the ``classical fragment'' of the phase-free $\ZH$-calculus: retaining the monoid for ``and'' without $H$-boxes.   From this presentation only natural-number H-boxes can be derived.

We assume familiarity with the theory of  monoidal categories and categorical quantum mechanics.
Most of the paper will be devoted to reviewing the required categorical machinery of restriction and inverse categories, and developing it further, in order to prove the main result.  With all of mathematics reviewed and developed in generality, the desired result follows from abstract nonsense after a mechanical calculation. 

In Section \ref{sec:rest}, the theory of restriction categories and inverse categories is reviewed.  In Section \ref{sec:cpm}, we construct classical channels in the setting of discrete inverse categories, showing that the ``environment structures'' of the classical channels corresponds to adding counits to the base discrete inverse category.  Finally, in Section \ref{sec:ZXA}, we actually compute the (co)unit completion of $\TOF$.  We show that this category has a much more canonical presentation, $\ZXA$, in terms of interacting monoids/comonoids which very much resembles the $\ZH$-calculus.  We also show that this category is isomorphic to the category spans between ordinals $2^n$.
\section{Restriction and Inverse Categories}

\label{sec:rest}

Restriction and inverse categories provide a categorical semantics for partial computing and reversible computing, respectively.  We review how weakened products can be constructed in both settings; relating one to the other.

\begin{definition}\cite[\S 2.1.1]{resti}
A {\bf restriction category} is a category along with a restriction operator:

\hfil
$
(A \xrightarrow{f} B )\mapsto (A \xrightarrow{\bar f} A)
$\\
such that:\footnote{Using diagrammatic composition.}

\begin{center}
\begin{multicols}{4}
\begin{enumerate}[label={\bf [R.\arabic*]}, ref={\bf [R.\arabic*]}]
\item $\bar f f  = f$
\label{R.1}
\item $\bar f \bar g = \bar g \bar f$
\label{R.2}
\item $\bar f \bar g = \bar{\bar f g}$
\label{R.3}
\item $f \bar g = \bar{fg} f$
\label{R.4}
\end{enumerate}
\end{multicols}
\end{center}

Maps of the form $\bar f$ are called restriction idempotents.
The canonical example of a restriction category is $\Par$, sets and partial maps.  The restriction in this case, just restricts partial functions to their domain of definition.

Restriction categories have a partial order on homsets given by $f \leq g \iff \bar f g = f$.

A map $f$ in a restriction category is called a {\bf partial isomorphism}, in case there exists a map $g$ called the partial inverse of $f$ so that $fg=\bar f$ and $gf = \bar g$.  Similarly, a map $f$ in a restriction category is {\bf total} if $\bar f =1$.  Denote the subcategories of partial isomorphisms and total maps of a restriction category $\X$, respectively by $\ParIso(\X)$ and $\Total(\X)$.

\end{definition}

\begin{example} \cite[p. 101]{pcat} \cite[\S 5]{restiii}
A {\bf counital copy category} (or a p-category with a one element object) is a monoidal category with a family of commutative comonoids on every object compatible with the monoidal structure, with a natural comultiplication.  This gives a restriction via copying and then discarding:
$$
\begin{tikzpicture}
	\begin{pgfonlayer}{nodelayer}
		\node [style=none] (0) at (0, 0.75) {};
		\node [style=none] (1) at (-2, 0.75) {};
		\node [style=map] (2) at (-1, 0.75) {$\bar f$};
	\end{pgfonlayer}
	\begin{pgfonlayer}{edgelayer}
		\draw [style=simple] (0.center) to (2);
		\draw [style=simple] (2) to (1.center);
	\end{pgfonlayer}
\end{tikzpicture}
:=
\begin{tikzpicture}
	\begin{pgfonlayer}{nodelayer}
		\node [style=map] (0) at (0, -0) {$f$};
		\node [style=X] (1) at (1, -0) {};
		\node [style=X] (2) at (-1, -0.5) {};
		\node [style=none] (3) at (1, -1) {};
		\node [style=none] (4) at (-2, -0.5) {};
	\end{pgfonlayer}  
	\begin{pgfonlayer}{edgelayer}
		\draw [style=simple] (1) to (0);
		\draw [style=simple, in=27, out=180, looseness=1.00] (0) to (2);
		\draw [style=simple] (2) to (4.center);
		\draw [style=simple, in=180, out=-30, looseness=1.00] (2) to (3.center);
	\end{pgfonlayer}
\end{tikzpicture}
$$
\end{example}

\begin{definition}\cite[\S 3.1]{resti}
A {\bf stable system of monics} $\M$ of $\X$ is a collection of monics in $\X$ containing all isomorphisms; where for any cospan $ X\xrightarrow{f} Z \xleftarrowtail{m} Y$  in $\X$, where $m'$ is in $\M$, the following pullback exists:


\hfil$
\xymatrixrowsep{.005in}
\xymatrixcolsep{.13in}
  \xymatrix{
  	& W \ar@{>->}[dl]_{m'} \ar[dr]^{f'}\\
  	X \ar[dr]_f &  & Y \ar@{>->}[dl]^m\\
  	& Z
  }
$

Where $m'$ is in $\M$.

\end{definition}

Stable systems of monics allow one to represent the domains of definition of a partial functions as a subobjects:

\begin{definition}\cite[\S 3.1]{resti}
Given a stable system of monics $\M$ in a category $\X$, the {\bf partial map category} $\Par(\X,\M)$ is given by the same objects as in $\X$ where morphisms $X\to Y$, given by isomorphism classes of spans $X\xleftarrowtail{m} Z \xrightarrow{f} Y$ where $f$ is a map in $\X$ and $m$ is a map in $\M$.  Composition is given by pullback and the identity is given by the trivial span.

Partial map categories have a restriction structure given by:  $(X\xleftarrowtail{m} Z \xrightarrow{f} Y) \mapsto (X\xleftarrowtail{m} Z \xrightarrowtail{m} X)$.  Moreover, a partial isomorphism is a span $X\xleftarrowtail{e} Z \xrightarrowtail{m} Y$ where $e,m \in \M$; the partial inverse given by  $Y\xleftarrowtail{m} Z \xrightarrowtail{e} X$.
\end{definition}

$\Par$ is equivalently the partial map category $\Par(\Sets,\M)$ where $\M$ is all monics in $\Sets$.



Let $\Span^\sim(\X)$ denote the category given by isomorphism classes of spans over $\X$. Given a stable system of monics $\M$ over $\X$, if $\X$ is finitely complete, then $\Span^\sim(\X)$ exists, and thus, there is a faithful functor $\Par(\X,\M)\to \Span^\sim(\X)$.

\begin{definition}\cite[\S 2.3.2]{resti}
An {\bf inverse category} is a restriction category in which all maps are partial isomorphisms.  The subcategory of partial isomorphisms of $\Par$ is called $\Pinj$.
\end{definition}

Inverse categories can be presented with a dagger functor taking maps to their partial inverses:

\begin{thm}\cite[Thm. 2.20]{resti}
A restriction category $\X$ is an inverse category if and only if there is a dagger functor $(\_)^\circ:\X^\op\to\X$ such that for all $X\xleftarrow{f} Z \xrightarrow{g} Y$:
\begin{center}
\begin{tabular}{cc}
 $f f^\circ f = f$ & 
 $f f ^\circ gg^\circ = gg^\circ f f ^\circ $
\end{tabular}
\end{center}
\end{thm}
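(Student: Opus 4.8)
The statement is an equivalence between a structural property (being an inverse category) and the existence of extra data (the dagger), so the plan is to treat the two directions separately; throughout I would use that in a restriction category partial inverses are \emph{unique}, a short consequence of \ref{R.1}--\ref{R.4}, which is what makes $f\mapsto f^\circ$ well defined below. Recall also that the composition is diagrammatic.

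For the forward direction I would set $f^\circ$ to be the partial inverse of $f$, which exists since $\X$ is an inverse category and is unique by the previous remark; thus by definition $ff^\circ=\bar f$ and $f^\circ f=\overline{f^\circ}$. It remains to see that $(\_)^\circ$ is an identity-on-objects, involutive, contravariant functor validating the two laws. Involutivity $(f^\circ)^\circ=f$ and $1^\circ=1$ are immediate from the symmetry of the equations defining a partial inverse. The one real computation is preservation of composition: one checks that $g^\circ f^\circ$ is a partial inverse of $fg$, e.g.\ $(fg)(g^\circ f^\circ)=f(gg^\circ)f^\circ=f\bar g f^\circ=\overline{fg}\,ff^\circ=\overline{fg}\,\bar f=\overline{fg}$ using \ref{R.4} and $\overline{fg}\leq\bar f$, and dually on the other side, so that uniqueness forces $(fg)^\circ=g^\circ f^\circ$. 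The two laws are then restriction axioms in disguise: $ff^\circ f=\bar f f=f$ is \ref{R.1}, and $ff^\circ gg^\circ=\bar f\bar g=\bar g\bar f=gg^\circ ff^\circ$ is \ref{R.2}.

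For the converse I would show that every $f$ is a partial isomorphism with partial inverse $f^\circ$, which amounts to the two identities $ff^\circ=\bar f$ and $f^\circ f=\overline{f^\circ}$. I would first record the elementary consequences of the hypotheses: $1^\circ=1$ (from $1\cdot 1^\circ\cdot 1=1$); the dual law $f^\circ ff^\circ=f^\circ$ (apply $(\_)^\circ$ to $ff^\circ f=f$); that $ff^\circ$ and $f^\circ f$ are self-adjoint idempotents; the identity $\overline{hh^\circ}=\bar h$ (both inequalities follow from the fact that the restriction of a composite lies below the restriction of its first factor, together with $hh^\circ h=h$); and, crucially, that every restriction idempotent $r=\bar h$ (so $\bar r=r$) is self-adjoint---for this, \ref{R.4} gives $r^\circ r=\overline{r^\circ r}\,r^\circ=\overline{r^\circ}\,r^\circ=r^\circ$, and applying $(\_)^\circ$ yields $r^\circ r=r$, whence $r^\circ=r$. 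Using \ref{R.4} and $\overline{hh^\circ}=\bar h$ one then obtains $f^\circ\bar f=f^\circ$ and $f\,\overline{f^\circ}=f$, which exhibit $ff^\circ$ and $\bar f$ as self-adjoint idempotents with $ff^\circ\leq\bar f$ and $\overline{ff^\circ}=\bar f$; the remaining step is to identify these two self-adjoint idempotents, which yields $ff^\circ=\bar f$ and dually $f^\circ f=\overline{f^\circ}$, so that $f^\circ$ is the partial inverse of $f$ and $\X$ is an inverse category.

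The main obstacle is exactly this last identification $ff^\circ=\bar f$ in the converse: the axioms give quite directly that $ff^\circ$ is a self-adjoint idempotent lying below $\bar f$ in the idempotent order and with the same restriction, but upgrading this to equality---equivalently, showing that the idempotents split off by the dagger are \emph{precisely} the restriction idempotents---is where the commutation law $ff^\circ gg^\circ=gg^\circ ff^\circ$ and the self-adjointness of restriction idempotents have to be used in an essential, non-order-theoretic way, and is the point at which the restriction gets pinned down to $ff^\circ$. By contrast the functoriality computation in the forward direction, and all the remaining bookkeeping with \ref{R.1}--\ref{R.4}, are routine.
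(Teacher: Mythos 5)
The paper itself gives no proof of this theorem---it is quoted from Cockett and Lack \cite[Thm.\ 2.20]{resti}---so your attempt has to be measured against the source's argument. Your forward direction is fine and is essentially the standard one: partial inverses in a restriction category are unique, $g^\circ f^\circ$ is checked to be a partial inverse of $fg$ (your computation via \ref{R.4} and $\overline{fg}\,\bar f=\overline{fg}$ is correct), and the two stated laws collapse to \ref{R.1} and \ref{R.2}. The preliminary lemmas you assemble for the converse ($f^\circ ff^\circ=f^\circ$, $\overline{hh^\circ}=\bar h$, self-adjointness of restriction idempotents, $f^\circ\bar f=f^\circ$, $f\,\overline{f^\circ}=f$) are also all correctly derived.

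The genuine gap is exactly the step you flag and then leave undone: the identification $ff^\circ=\bar f$. This is not merely unproved in your write-up---it is \emph{unprovable} from your hypotheses, because you keep the ambient restriction fixed and arbitrary. Concretely, equip $\Pinj$ with the \emph{trivial} restriction $\bar f=1$ (this satisfies \ref{R.1}--\ref{R.4}) together with its usual partial-inverse dagger: then $ff^\circ f=f$ and $ff^\circ gg^\circ=gg^\circ ff^\circ$ hold, every one of your preliminary lemmas holds (restriction idempotents are identities, hence self-adjoint; $\overline{hh^\circ}=1=\bar h$; $ff^\circ$ sits below $\bar f$ in the idempotent order with $\overline{ff^\circ}=\bar f$), yet $ff^\circ\neq\bar f$ and this restriction category is not inverse, since a partial isomorphism for the trivial restriction is an honest isomorphism. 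So no amount of non-order-theoretic cleverness with the commutation law closes your gap; the theorem must be read, as in Cockett--Lack, with the dagger \emph{determining} the restriction. The actual proof of the converse goes the other way: define $\bar f:=ff^\circ$ and verify the axioms directly---\ref{R.1} is $ff^\circ f=f$; \ref{R.2} is the commutation law; for \ref{R.3}, $\overline{\bar fg}=ff^\circ gg^\circ ff^\circ=ff^\circ gg^\circ=\bar f\bar g$; for \ref{R.4}, apply the commutation law to the common-domain pair $g,f^\circ$ to get $gg^\circ f^\circ f=f^\circ fgg^\circ$, whence $\overline{fg}f=fgg^\circ f^\circ f=ff^\circ fgg^\circ=f\bar g$. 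Every map is then a partial isomorphism by construction, since $ff^\circ=\bar f$ and $f^\circ f=\overline{f^\circ}$, and the resulting restriction structure is the unique one making the category inverse, which is what reconciles this construction with the paper's loose phrasing ``a restriction category $\X$ is an inverse category if and only if\,\dots''.
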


Since restriction categories  and inverse categories give a categorical semantics for partial computing  and reversible computing, respectively, it is natural to ask when these categories have copying.

In the case of restriction categories, one must weaken the notion of the product to lax products using the partial order enrichment:

\begin{definition}\cite{restiii}
A restriction category has {\bf binary restriction products}, when for all objects  $X,Y$, there exists an object $X\times Y$ and total maps $X \xleftarrow{\pi_0}  X\times Y \xrightarrow{\pi_1} Y$, so that for all objects $Z$ and all maps $X \xleftarrow{f} Z \xrightarrow{g} Y$, the following diagram commutes there exists a unique $Z\xrightarrow{\langle f,g \rangle} X\times Y$ making the diagram commute:
\hfil
$
\xymatrixrowsep{0.2cm}
\xymatrixcolsep{0.4cm}
\xymatrix{
&& Z\ar@{..>}[dd]|-{\langle f, g\rangle} \ar@/_/[ddll]_f \ar@/^/[ddrr]^g &&\\
& \ar@{}[dr]|-{\geq} && \ar@{}[dl] |-{\leq} &\\
X &&  X\times Y \ar[rr]_{\pi_1} \ar[ll]^{\pi_0}  && Y
}
$

so that $\bar{\langle f, g\rangle \pi_0} f = \langle f, g\rangle \pi_0$ and $\bar{\langle f, g\rangle \pi_1} g = \langle f, g\rangle \pi_1$;
where additionally $\bar{\langle f, g\rangle} =  \bar f \bar g$.


A restriction category has a {\bf restriction terminal object} $\top$ when for all objects $X$, there exists a unique total map $!_X:X\to\top$ such that $f !_Y = \bar f !_X$.

A restriction category with a restriction terminal object and binary restriction products is a {\bf Cartesian restriction category}.

An object $A$ in a restriction category with restriction products is {\bf discrete} when the diagonal map $\Delta_X:=\langle 1_X, 1_X\rangle$ is a partial isomorphism. A restriction category is discrete when all objects are discrete.  Discrete Cartesian restriction categories are said to have restriction products.
\end{definition}

\begin{thm}\cite[Thm. 5.2]{restiii}
The structure of a  counital copy category structure is precisely that of a Cartesian restriction category.
\end{thm}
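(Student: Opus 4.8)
The plan is to exhibit two mutually inverse passages between the two structures on a fixed symmetric monoidal category, and to check that each is well defined. In one direction, a counital copy structure $(\Delta_X,\epsilon_X)$ is turned into a Cartesian restriction structure by (i) taking the restriction $\bar f$ defined in the Example above (copy the input, run $f$ on one copy and discard its output, keep the other copy), (ii) declaring the restriction terminal object to be the monoidal unit $I$ with ${!}_X := \epsilon_X$, and (iii) declaring $X\times Y := X\otimes Y$ with projections $\pi_0 := 1_X\otimes\epsilon_Y$, $\pi_1 := \epsilon_X\otimes 1_Y$ and pairing $\langle f,g\rangle := \Delta_Z\,(f\otimes g)$. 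In the other direction, a Cartesian restriction structure is turned into a counital copy structure by taking $\otimes := \times$, $I := \top$, $\Delta_X := \langle 1_X,1_X\rangle$, and $\epsilon_X := {!}_X$. Compatibility of the comonoids with the monoidal structure is precisely what lets the monoidal coherence data (associators, unitors) survive in either direction.

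\textbf{Forward direction.} First I would verify \ref{R.1}--\ref{R.4} for the copy-discard restriction. The comonoid laws give \ref{R.2} and \ref{R.3} (cocommutativity and coassociativity), while \ref{R.1} and \ref{R.4} are where naturality of $\Delta$ is essential: for \ref{R.1}, naturality pushes the trailing $f$ back through the copy, $\Delta_X(f\otimes f)(1\otimes\epsilon) = f\,\Delta_Y(1\otimes\epsilon) = f$ by counitality. Next I would check the universal properties via the lax projection formula $\langle f,g\rangle\pi_0 = \bar g\,f$ (and symmetrically), from which the lax triangles and $\bar{\langle f,g\rangle} = \bar f\bar g$ follow using \ref{R.1}--\ref{R.4}; $\epsilon_X$ is total and satisfies $f\epsilon_Y = \bar f\epsilon_X$, giving the restriction terminal object. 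The linchpin is a reconstruction lemma: every $h\colon Z\to X\otimes Y$ equals $\langle h\pi_0,h\pi_1\rangle$, which again rides on naturality of $\Delta$ together with counitality, and which secures the uniqueness clause of the restriction product.

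\textbf{Backward direction.} Here the restriction products supply a symmetric monoidal structure in the standard way, and $\Delta_X := \langle 1,1\rangle$, $\epsilon_X := {!}_X$ form a commutative comonoid by the lax universal property. The two points to check are that $\Delta$ is natural and that the copy-discard restriction built from these comonoids returns the original restriction. Both reduce to the lax projection identity $\langle a,b\rangle\pi_0 = \bar b\,a$: naturality follows since $\langle 1,1\rangle(f\times f)\pi_i = \bar f f = f$ by \ref{R.4}, matching the projections of $f\Delta_Y$, and the recovered restriction is $\langle 1_X,f\rangle\pi_0 = \bar f\,1_X = \bar f$, the original. That the two passages are mutually inverse then follows: the restriction product of the forward passage is the original $\otimes$ by the reconstruction lemma, and the comonoid recovered in the backward passage is the original $(\Delta,\epsilon)$ by uniqueness of pairings.

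\textbf{Main obstacle.} I expect the genuine work to be the reconstruction lemma and, relatedly, the verification that a restriction product organizes into a coherent symmetric monoidal structure: the associators and unitors must be produced from the lax universal property and shown natural and coherent. A subtler bookkeeping point is that the counit is only lax-natural, $f{!}_Y = \bar f{!}_X$, rather than strictly natural, since strict naturality of $\epsilon$ would force $\bar f = 1$ for every $f$, i.e. collapse to total maps and genuine products. The string-diagrammatic steps behind \ref{R.1}, \ref{R.4} and the reconstruction lemma all hinge on the single hypothesis that $\Delta$ (but not $\epsilon$) is natural, so isolating exactly where that hypothesis enters is the crux of the argument.
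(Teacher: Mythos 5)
The paper itself does not prove this theorem---it imports it by citation from Cockett and Lack \cite{restiii}---and your proposal correctly reconstructs the correspondence used in that source: copy-discard restriction, $X\otimes Y$ as restriction product with projections $1\otimes\epsilon$, $\epsilon\otimes 1$ and pairing $\Delta_Z(f\otimes g)$ in one direction, and $\Delta_X=\langle 1,1\rangle$, $\epsilon_X={!}_X$ in the other, with uniqueness of pairings secured by the reconstruction identity $h=\langle h\pi_0,h\pi_1\rangle$ (which holds since $\pi_i$ total gives $\bar{h\pi_0}\,\bar{h\pi_1}=\bar h$). Your flagged subtleties are exactly the right ones---naturality of $\Delta$ but only lax naturality of $\epsilon$ (i.e.\ $f{!}_Y=\bar f{!}_X$, since strict naturality would force totality), the prephase-style sliding of restriction idempotents through $\Delta$ needed for $\bar{\langle f,g\rangle}=\bar f\bar g$, and the coherence of the monoidal structure induced by restriction products---so the proposal is correct and essentially the standard proof.
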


\begin{proposition} \cite[\S 5.1]{restiii}
\label{prop:cartesian}

If $\X$ is a discrete Cartesian restriction category, then $\Total(\X)$ is Cartesian.
\end{proposition}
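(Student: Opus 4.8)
The plan is to show that the restriction terminal object $\top$ and the restriction products $X\times Y$ of $\X$ become, respectively, a genuine terminal object and genuine binary products once we pass to $\Total(\X)$. First I would verify that $\Total(\X)$ is indeed a subcategory: identities are total, and totals are closed under composition because of the standard restriction identity $\bar{fg}=\bar{f\bar g}$ (derivable from \ref{R.1}--\ref{R.4}), so if $\bar f=\bar g=1$ then $\bar{fg}=\bar{f}=1$. The projections $\pi_0,\pi_1$ and the maps $!_X$ are total by hypothesis, so they live in $\Total(\X)$. The terminal object then costs nothing: the defining data of a restriction terminal object already provides, for each $X$, a total map $!_X:X\to\top$ together with the assertion that it is the \emph{unique} total map of that type, which is exactly the universal property of a terminal object inside $\Total(\X)$. (For total $f$ the naturality equation $f!_Y=\bar f!_X$ specialises to the expected commuting triangle $f!_Y=!_X$.)

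For binary products I would show $X\times Y$ with $\pi_0,\pi_1$ is the categorical product in $\Total(\X)$. Given total $f:Z\to X$ and $g:Z\to Y$, form the restriction pairing $\langle f,g\rangle$; it is total, since $\bar{\langle f,g\rangle}=\bar f\bar g=1\cdot 1=1$. The one genuine step is upgrading the lax projection laws $\langle f,g\rangle\pi_0\le f$ and $\langle f,g\rangle\pi_1\le g$ to honest equalities, and this is where I expect the only real content to sit. Using $\bar{ab}=\bar{a\bar b}$ together with the totality of the projection, I compute $\bar{\langle f,g\rangle\pi_0}=\bar{\langle f,g\rangle\,\bar{\pi_0}}=\bar{\langle f,g\rangle}=\bar f\bar g=1$. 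Substituting this into the lax law $\bar{\langle f,g\rangle\pi_0}\,f=\langle f,g\rangle\pi_0$ collapses it to $f=\langle f,g\rangle\pi_0$, and symmetrically $g=\langle f,g\rangle\pi_1$.

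It remains to treat uniqueness. Any total $h:Z\to X\times Y$ with $h\pi_0=f$ and $h\pi_1=g$ satisfies the three conditions that characterise the restriction pairing, namely $\bar{h\pi_0}f=\bar f f=f=h\pi_0$, the analogous equation for $\pi_1$, and $\bar h=1=\bar f\bar g$; hence by the uniqueness clause in the definition of binary restriction products, $h=\langle f,g\rangle$. This shows $X\times Y$ is a genuine product in $\Total(\X)$, and combined with the terminal object $\top$ it follows that $\Total(\X)$ is Cartesian. I would note in passing that the argument invokes only the Cartesian restriction structure of $\X$: discreteness is part of the standing hypotheses but does not appear to be required for this particular statement.
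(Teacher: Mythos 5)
Your proof is correct, and there is nothing in the paper to compare it against: the proposition is cited from \cite{restiii} without proof, and your argument is the standard one, with the only substantive step being exactly where you locate it, namely upgrading the lax projection laws to equalities via $\bar{\langle f,g\rangle\pi_0}=\bar{\langle f,g\rangle\,\bar{\pi_0}}=\bar{\langle f,g\rangle}=\bar f\,\bar g=1$, after which uniqueness of the mediating map follows from the uniqueness clause in the definition of restriction pairings. Your closing remark is also accurate: the argument uses only the Cartesian restriction structure of $\X$, so $\Total(\X)$ is Cartesian for any Cartesian restriction category; discreteness is carried in the hypothesis only because that is the setting of the paper (it is needed elsewhere, e.g.\ for the equivalence with discrete inverse categories), not for this statement.
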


$\Par$ is a canonical example of a discrete Cartesian restriction category; the restriction product is given by the Cartesian product on underlying sets and the terminal object is  the singleton set.

The weakened notion of products in restriction categories is not satisfying for inverse categories because it does not impose enough equations governing the interaction between the diagonal map and its partial inverse.

\begin{definition}\cite[Def. 4.3.1]{giles}
A symmetric monoidal inverse category $\X$ is a {\bf discrete inverse category} when there is a natural, special commutative $\dag$-semi-Frobenius algebra\footnote{The ``semi'' adjective on Frobenius just means that the a semigroup and cosemigroup are interacting instead of a monoid and comonoid.} on every object (where the (co)multiplications are drawn as white bubbles)  compatible with the tensor product:

$$
\begin{tikzpicture}
	\begin{pgfonlayer}{nodelayer}
		\node [style=none] (0) at (0, -0) {};
		\node [style=none] (1) at (0, -1) {};
		\node [style=X] (2) at (-1, -0.5) {};
		\node [style=none] (3) at (-2, -0.5) {};
	\end{pgfonlayer}
	\begin{pgfonlayer}{edgelayer}
		\draw [style=simple] (3.center) to (2);
		\draw [style=simple, in=180, out=27, looseness=1.00] (2) to (0);
		\draw [style=simple, in=-27, out=180, looseness=1.00] (1) to (2);
	\end{pgfonlayer}
\end{tikzpicture}
=
\begin{tikzpicture}
	\begin{pgfonlayer}{nodelayer}
		\node [style=X] (0) at (0, -0) {};
		\node [style=X] (1) at (0, -1) {};
		\node [style=none] (2) at (-1, -0.5) {};
		\node [style=none] (3) at (-2, -0.5) {};
		\node [style=none] (4) at (1, -0) {};
		\node [style=none] (5) at (1, -1) {};
		\node [style=none] (6) at (2, -0) {};
		\node [style=none] (7) at (2, -1) {};
		\node [style=otimes] (8) at (-1, -0.5) {};
		\node [style=otimes] (9) at (1, -1) {};
		\node [style=otimes] (10) at (1, -0) {};
	\end{pgfonlayer}
	\begin{pgfonlayer}{edgelayer}
		\draw [style=simple] (3.center) to (2.center);
		\draw [style=simple, in=180, out=45, looseness=1.00] (2.center) to (0);
		\draw [style=simple] (0) to (5.center);
		\draw [style=simple, in=30, out=150, looseness=1.25] (4.center) to (0);
		\draw [style=simple, in=-150, out=-30, looseness=1.25] (1) to (5.center);
		\draw [style=simple] (1) to (4.center);
		\draw [style=simple, in=-45, out=180, looseness=1.00] (1) to (2.center);
		\draw [style=simple] (4.center) to (6.center);
		\draw [style=simple] (5.center) to (7.center);
	\end{pgfonlayer}
\end{tikzpicture}
\hspace*{1cm}
\begin{tikzpicture}
	\begin{pgfonlayer}{nodelayer}
		\node [style=X] (0) at (0, -0) {};
		\node [style=none] (1) at (1, 0.5) {};
		\node [style=none] (2) at (1, -0.5) {};
		\node [style=none] (3) at (-1, -0) {};
	\end{pgfonlayer}
	\begin{pgfonlayer}{edgelayer}
		\draw [style=dashed] (3.center) to (0);
		\draw [style=dashed, in=180, out=27, looseness=1.00] (0) to (1.center);
		\draw [style=dashed, in=-27, out=180, looseness=1.00] (2.center) to (0);
	\end{pgfonlayer}
\end{tikzpicture}
=
\begin{tikzpicture}
	\begin{pgfonlayer}{nodelayer}
		\node [style=none] (0) at (0, -0) {};
		\node [style=none] (1) at (1, 0.5) {};
		\node [style=none] (2) at (1, -0.5) {};
		\node [style=none] (3) at (-1, -0) {};
		\node [style=otimes] (4) at (0, -0) {};
	\end{pgfonlayer}
	\begin{pgfonlayer}{edgelayer}
		\draw [style=dashed] (3.center) to (0);
		\draw [style=dashed, in=180, out=27, looseness=1.00] (0) to (1.center);
		\draw [style=dashed, in=-27, out=180, looseness=1.00] (2.center) to (0);
	\end{pgfonlayer}
\end{tikzpicture}
$$

Where the tensor product is also required to preserve restriction in both components.
\end{definition}

In a discrete inverse category, restriction idempotents are prephases for the Frobenius algebra, so that:
$$
\begin{tikzpicture}
	\begin{pgfonlayer}{nodelayer}
		\node [style=X] (0) at (1.75, -3) {};
		\node [style=map] (1) at (1, -3) {$\bar f$};
		\node [style=none] (2) at (0.5, -3) {};
		\node [style=none] (3) at (2.5, -2.5) {};
		\node [style=none] (4) at (2.5, -3.5) {};
	\end{pgfonlayer}
	\begin{pgfonlayer}{edgelayer}
		\draw [style=simple, in=-27, out=180, looseness=1.00] (4.center) to (0);
		\draw [style=simple, in=180, out=27, looseness=1.00] (0) to (3.center);
		\draw [style=simple] (1) to (0);
		\draw [style=simple] (1) to (2.center);
	\end{pgfonlayer}
\end{tikzpicture}
=
\begin{tikzpicture}
	\begin{pgfonlayer}{nodelayer}
		\node [style=X] (0) at (2, -3) {};
		\node [style=none] (1) at (1.5, -3) {};
		\node [style=none] (2) at (3, -2.5) {};
		\node [style=none] (3) at (3, -3.5) {};
		\node [style=map] (4) at (3, -2.5) {$\bar f$};
		\node [style=none] (5) at (3.5, -3.5) {};
		\node [style=none] (6) at (3.5, -2.5) {};
	\end{pgfonlayer}
	\begin{pgfonlayer}{edgelayer}
		\draw [style=simple, in=-27, out=180, looseness=1.00] (3.center) to (0);
		\draw [style=simple, in=180, out=27, looseness=1.00] (0) to (2.center);
		\draw [style=simple] (6.center) to (2.center);
		\draw [style=simple] (5.center) to (3.center);
		\draw [style=simple] (0) to (1.center);
	\end{pgfonlayer}
\end{tikzpicture}
=
\begin{tikzpicture}
	\begin{pgfonlayer}{nodelayer}
		\node [style=X] (0) at (2, -3) {};
		\node [style=none] (1) at (1.5, -3) {};
		\node [style=none] (2) at (3, -3.5) {};
		\node [style=none] (3) at (3, -2.5) {};
		\node [style=map] (4) at (3, -3.5) {$\bar f$};
		\node [style=none] (5) at (3.5, -2.5) {};
		\node [style=none] (6) at (3.5, -3.5) {};
	\end{pgfonlayer}
	\begin{pgfonlayer}{edgelayer}
		\draw [style=simple, in=27, out=180, looseness=1.00] (3.center) to (0);
		\draw [style=simple, in=180, out=-27, looseness=1.00] (0) to (2.center);
		\draw [style=simple] (6.center) to (2.center);
		\draw [style=simple] (5.center) to (3.center);
		\draw [style=simple] (0) to (1.center);
	\end{pgfonlayer}
\end{tikzpicture}
\hspace*{.6cm}
\begin{tikzpicture}
	\begin{pgfonlayer}{nodelayer}
		\node [style=X] (0) at (3, -3) {};
		\node [style=none] (1) at (3.5, -3) {};
		\node [style=none] (2) at (2, -3.5) {};
		\node [style=none] (3) at (2, -2.5) {};
		\node [style=map] (4) at (2, -3.5) {$\bar f$};
		\node [style=none] (5) at (1.5, -2.5) {};
		\node [style=none] (6) at (1.5, -3.5) {};
	\end{pgfonlayer}
	\begin{pgfonlayer}{edgelayer}
		\draw [style=simple, in=153, out=0, looseness=1.00] (3.center) to (0);
		\draw [style=simple, in=0, out=-153, looseness=1.00] (0) to (2.center);
		\draw [style=simple] (6.center) to (2.center);
		\draw [style=simple] (5.center) to (3.center);
		\draw [style=simple] (0) to (1.center);
	\end{pgfonlayer}
\end{tikzpicture}
=
\begin{tikzpicture}
	\begin{pgfonlayer}{nodelayer}
		\node [style=X] (0) at (3, -3) {};
		\node [style=none] (1) at (3.5, -3) {};
		\node [style=none] (2) at (2, -2.5) {};
		\node [style=none] (3) at (2, -3.5) {};
		\node [style=map] (4) at (2, -2.5) {$\bar f$};
		\node [style=none] (5) at (1.5, -3.5) {};
		\node [style=none] (6) at (1.5, -2.5) {};
	\end{pgfonlayer}
	\begin{pgfonlayer}{edgelayer}
		\draw [style=simple, in=-153, out=0, looseness=1.00] (3.center) to (0);
		\draw [style=simple, in=0, out=153, looseness=1.00] (0) to (2.center);
		\draw [style=simple] (6.center) to (2.center);
		\draw [style=simple] (5.center) to (3.center);
		\draw [style=simple] (0) to (1.center);
	\end{pgfonlayer}
\end{tikzpicture}
=
\begin{tikzpicture}
	\begin{pgfonlayer}{nodelayer}
		\node [style=X] (0) at (1.25, -3) {};
		\node [style=map] (1) at (2, -3) {$\bar f$};
		\node [style=none] (2) at (2.5, -3) {};
		\node [style=none] (3) at (0.5, -2.5) {};
		\node [style=none] (4) at (0.5, -3.5) {};
	\end{pgfonlayer}
	\begin{pgfonlayer}{edgelayer}
		\draw [style=simple, in=-153, out=0, looseness=1.00] (4.center) to (0);
		\draw [style=simple, in=0, out=153, looseness=1.00] (0) to (3.center);
		\draw [style=simple] (1) to (0);
		\draw [style=simple] (1) to (2.center);
	\end{pgfonlayer}
\end{tikzpicture}
$$

Discrete inverse categories are the ``right'' notion of weakened products for monoidal inverse categories:

\begin{thm}\cite[Thm. 5.2.6]{giles}
There is an equivalence of categories between the category of discrete inverse categories and the category of discrete Cartesian categories.
\end{thm}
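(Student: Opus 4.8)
The plan is to prove the equivalence by exhibiting the two evident functors together with a \emph{graph} construction mediating between the semi-Frobenius copy structure of an inverse category and the restriction products of a Cartesian restriction category. The equivalence cannot be the identity on objects: a discrete inverse category such as $\Pinj$ has no restriction terminal object (a total map $!_X$ would have to be a monic partial isomorphism, which fails once $X$ has more than one element), so one direction must genuinely adjoin discarding while the other forgets it.

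In the easy direction I would send a discrete Cartesian restriction category $\Y$ to $\ParIso(\Y)$. This is an inverse category by the results recalled above, and it inherits a symmetric monoidal structure from the restriction product (restricted to the partial isomorphisms, on which $\times$ is functorial and preserves partial inverses). The diagonal $\Delta_X=\langle 1_X,1_X\rangle$ is total and, by discreteness, a partial isomorphism; setting $\mu_X:=\Delta_X^\circ$ yields a comultiplication/multiplication pair. I would then check that commutativity follows from the symmetry, speciality $\Delta_X\Delta_X^\circ=\bar{\Delta_X}=1_X$ from totality, and that both the Frobenius law and naturality (on partial isomorphisms) follow from the universal property of the restriction product. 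The counit and terminal structure are simply forgotten, since projections and $!_X$ are not partial isomorphisms.

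For the converse functor $F$ I would exploit that in any Cartesian restriction category a morphism $f\colon X\to Y$ is recoverable from its graph $\langle\bar f,f\rangle\colon X\to X\times Y$, which is always a partial isomorphism onto a subobject. Accordingly I would define $F(\X)$ to have the objects of $\X$ and, as morphisms $X\to Y$, those maps $g\colon X\to X\otimes Y$ of $\X$ that ``are graphs'' (characterised purely by the copy structure as being total and monic in the first leg), with composition obtained by contracting the shared leg using the comultiplication and multiplication. The essential point is that every expected piece of Cartesian structure is expressible as such a graph using copying \emph{alone}: the terminal object is the monoidal unit $I$, the map $!_X$ is the graph of $1_X$ regarded as landing in $I$, and the projection $X\otimes Y\to X$ is the graph $X\otimes Y\to(X\otimes Y)\otimes X$ that copies the $X$-component with $\Delta_X$ while retaining $Y$ in the domain copy. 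Thus $F$ freely adjoins discarding without ever requiring a counit in $\X$.

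Finally I would produce the two natural isomorphisms. For $GF(\X)\cong\X$ one observes that the partial isomorphisms among graphs are exactly the graphs of partial isomorphisms of $\X$, and that graph composition restricts to composition in $\X$. For $FG(\Y)\cong\Y$ one sends each $f$ of $\Y$ to its graph $\langle\bar f,f\rangle\in\ParIso(\Y)$; this assignment is functorial, bijective on hom-sets (recover $f$ by post-composing the projections), and monoidal, hence an isomorphism of discrete Cartesian restriction categories. I expect the main obstacle to be the coherence computation showing that graph composition is associative and unital and, crucially, that the graph of a composite equals the contracted composite of graphs: this is a string-diagram argument that consumes all of the discrete-inverse axioms (speciality, commutativity, the Frobenius law, naturality, and compatibility of the algebra with $\otimes$ and with restriction). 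A secondary obstacle is upgrading the correspondence to functors and natural transformations, so that one obtains a genuine equivalence of the (2-)categories of these structures rather than merely a bijection on objects.
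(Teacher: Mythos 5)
Your strategy is workable, but it is genuinely different from the proof the paper relies on (Giles, Thm.~5.2.6), whose construction the paper reproduces as the Cartesian completion $\tilde{\X}$: there a morphism $X\to Y$ is a map $f:X\to Y\otimes S$ with an \emph{arbitrary} history ancilla $S$, taken modulo the latching equivalence, and the paper's Lemmas \ref{lem:latching} and \ref{theorem:cpstartheorem} exist precisely to manage that quotient. Your graph construction is in effect a normal-form presentation of the same category: you fix the ancilla to be $X$ itself and replace the quotient by a side condition (note that the paper's own verification that $\tilde{\Pinj}$ is $\Par$ uses exactly your representative, $x\mapsto(y,x)$). What your route buys is concrete hom-sets and no well-definedness checks on equivalence classes; what it costs is that every piece of structure (composition, identities, pairing, projections, restriction) must be shown to \emph{preserve} the graph condition, and associativity and the graph-of-a-composite identity become direct Frobenius computations. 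One technical point: composition cannot be performed ``using the comultiplication and multiplication'' alone. To delete the shared middle leg of $g;(1_X\otimes h):X\to X\otimes Y\otimes Z$ you must reassociate and apply $g^\circ\otimes 1_Z$ (the inverse-category dagger), since without a counit no leg of a tensor can be discarded; the same device is needed to recover a map of $\X$ from a partial isomorphism of $F(\X)$ when proving $GF\cong\mathrm{id}$. The ancilla-flexible formulation also pays off later in this paper, where $\tilde{\X}$ is identified with the free counital completion $c(\X)$ — a comparison your fixed-ancilla category would make more awkward. (A small phrasing slip: both $\tilde{\X}$ and your $F(\X)$ \emph{are} identity on objects; what genuinely changes between the two sides of the equivalence is the hom-sets, as your $\Pinj$ example correctly shows.)

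One local error needs fixing: your characterisation of graphs as ``total and monic in the first leg'' is wrong as stated. The graph $\langle\bar f,f\rangle$ of a partial map $f$ has restriction $\bar f$ and is total only when $f$ is; requiring totality would make $F(G(\Y))$ capture only the total maps of $\Y$, and the isomorphism $FG\cong\mathrm{id}$ would fail on any non-total morphism. The correct condition, expressible by the copy structure alone, is that the first leg duplicates the input, i.e.\ $g;(\Delta_X\otimes 1_Y)=\Delta_X;(1_X\otimes g)$ for $g:X\to X\otimes Y$ (in $\Pinj$: the first component of $g(x)$ is $x$ itself on the domain of definition). With that condition in place no separate monicity or totality hypothesis is needed — in an inverse category every map is already a partial isomorphism — and the rest of your outline, including the recovery of $f$ from $\langle\bar f,f\rangle$ by the projections and the functoriality and naturality checks, goes through as you describe.
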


To go from  discrete Cartesian restriction categories to discrete inverse categories, one takes the subcategory of partial isomorphisms.
The other direction is less trivial; in particular, this involves adding a restriction terminal object via the following construction which ``adds a history'' to a partial isomorphism:

\begin{definition}\cite[Def. 5.1.1]{giles}
Given a discrete inverse category $\X$, define its {\bf Cartesian completion} $\tilde \X$ as the category with:

\begin{description}
\item[Objects:] The same objects as $\X$.
\item[Maps:]
\hfil
$
\dfrac{ X\xrightarrow{f} Y \otimes S \in \X}{ X\xrightarrow{(f,S)} Y \in \tilde \X}
$

Where two parallel maps $X\xrightarrow{(f,S), (g,T)} Y $ are equivalent when either (both conditions are equivalent):
$$
\begin{tikzpicture}
	\begin{pgfonlayer}{nodelayer}
		\node [style=map] (0) at (0, -0) {$f$};
		\node [style=none] (1) at (-1, -0) {};
		\node [style=map] (2) at (1.5, -0) {$f^\circ$};
		\node [style=map] (3) at (2.5, -0) {$g$};
		\node [style=X] (4) at (0.75, 0.5) {};
		\node [style=X] (5) at (3.5, 0.5) {};
		\node [style=none] (6) at (4.5, 0.5) {};
		\node [style=none] (7) at (4.5, -0.25) {};
	\end{pgfonlayer}
	\begin{pgfonlayer}{edgelayer}
		\draw (0) to (1.center);
		\draw [in=-15, out=180, looseness=1.00] (7.center) to (3);
		\draw (6.center) to (5);
		\draw [in=30, out=150, looseness=1.00] (5) to (4);
		\draw (4) to (2);
		\draw [in=-30, out=-150, looseness=1.25] (2) to (0);
		\draw (0) to (4);
		\draw (3) to (2);
		\draw (3) to (5);
	\end{pgfonlayer}
\end{tikzpicture}
=
\begin{tikzpicture}
	\begin{pgfonlayer}{nodelayer}
		\node [style=map] (0) at (0, -0) {$g$};
		\node [style=none] (1) at (1, 0.5) {};
		\node [style=none] (2) at (1, -0.5) {};
		\node [style=none] (3) at (-1, -0) {};
	\end{pgfonlayer}
	\begin{pgfonlayer}{edgelayer}
		\draw [in=27, out=180, looseness=1.00] (1.center) to (0);
		\draw [in=180, out=-27, looseness=1.00] (0) to (2.center);
		\draw (0) to (3.center);
	\end{pgfonlayer}
\end{tikzpicture}
\hspace*{.3cm}
or
\hspace*{.3cm}
\begin{tikzpicture}
	\begin{pgfonlayer}{nodelayer}
		\node [style=map] (0) at (0, -0) {$g$};
		\node [style=none] (1) at (-1, -0) {};
		\node [style=map] (2) at (1.5, -0) {$g^\circ$};
		\node [style=map] (3) at (2.5, -0) {$f$};
		\node [style=X] (4) at (0.75, 0.5) {};
		\node [style=X] (5) at (3.5, 0.5) {};
		\node [style=none] (6) at (4.5, 0.5) {};
		\node [style=none] (7) at (4.5, -0.25) {};
	\end{pgfonlayer}
	\begin{pgfonlayer}{edgelayer}
		\draw (0) to (1.center);
		\draw [in=-15, out=180, looseness=1.00] (7.center) to (3);
		\draw (6.center) to (5);
		\draw [in=30, out=150, looseness=1.00] (5) to (4);
		\draw (4) to (2);
		\draw [in=-30, out=-150, looseness=1.25] (2) to (0);
		\draw (0) to (4);
		\draw (3) to (2);
		\draw (3) to (5);
	\end{pgfonlayer}
\end{tikzpicture}
=
\begin{tikzpicture}
	\begin{pgfonlayer}{nodelayer}
		\node [style=map] (0) at (0, -0) {$f$};
		\node [style=none] (1) at (1, 0.5) {};
		\node [style=none] (2) at (1, -0.5) {};
		\node [style=none] (3) at (-1, -0) {};
	\end{pgfonlayer}
	\begin{pgfonlayer}{edgelayer}
		\draw [in=27, out=180, looseness=1.00] (1.center) to (0);
		\draw [in=180, out=-27, looseness=1.00] (0) to (2.center);
		\draw (0) to (3.center);
	\end{pgfonlayer}
\end{tikzpicture}
$$

\item[Composition:]
\hfil
$
\begin{tikzpicture}
	\begin{pgfonlayer}{nodelayer}
		\node [style=map] (0) at (0, -0) {$f$};
		\node [style=none] (1) at (1, 0.5) {};
		\node [style=none] (2) at (1, -0.5) {};
		\node [style=none] (3) at (-1, -0) {};
	\end{pgfonlayer}
	\begin{pgfonlayer}{edgelayer}
		\draw [in=27, out=180, looseness=1.00] (1.center) to (0);
		\draw [in=180, out=-27, looseness=1.00] (0) to (2.center);
		\draw (0) to (3.center);
	\end{pgfonlayer}
\end{tikzpicture}
;
\begin{tikzpicture}
	\begin{pgfonlayer}{nodelayer}
		\node [style=map] (0) at (0, -0) {$g$};
		\node [style=none] (1) at (1, 0.5) {};
		\node [style=none] (2) at (1, -0.5) {};
		\node [style=none] (3) at (-1, -0) {};
	\end{pgfonlayer}
	\begin{pgfonlayer}{edgelayer}
		\draw [in=27, out=180, looseness=1.00] (1.center) to (0);
		\draw [in=180, out=-27, looseness=1.00] (0) to (2.center);
		\draw (0) to (3.center);
	\end{pgfonlayer}
\end{tikzpicture}
:=
\begin{tikzpicture}
	\begin{pgfonlayer}{nodelayer}
		\node [style=map] (0) at (0, -0) {$f$};
		\node [style=none] (1) at (1, -0.5) {};
		\node [style=none] (2) at (-1, -0) {};
		\node [style=map] (3) at (1, 0.5) {$g$};
		\node [style=none] (4) at (2, 1) {};
		\node [style=otimes] (5) at (2, -0) {};
		\node [style=none] (6) at (1, 0.5) {};
		\node [style=none] (7) at (3, 1) {};
		\node [style=none] (8) at (3, -0) {};
	\end{pgfonlayer}
	\begin{pgfonlayer}{edgelayer}
		\draw [in=180, out=-27, looseness=1.00] (0) to (1.center);
		\draw (0) to (2.center);
		\draw [in=27, out=180, looseness=1.00] (4.center) to (3);
		\draw (3) to (5.center);
		\draw [in=27, out=180, looseness=1.00] (6.center) to (0);
		\draw [in=-153, out=0, looseness=1.00] (1.center) to (5.center);
		\draw (5.center) to (8.center);
		\draw (4.center) to (7.center);
	\end{pgfonlayer}
\end{tikzpicture}
$

\item[Identity:]
\hfil
$
\begin{tikzpicture}
	\begin{pgfonlayer}{nodelayer}
		\node [style=none] (0) at (-1, -0) {};
		\node [style=none] (1) at (0.5, -0) {};
		\node [style=none] (2) at (0.5, -0.75) {};
		\node [style=none] (3) at (-0.25, -0) {};
	\end{pgfonlayer}
	\begin{pgfonlayer}{edgelayer}
		\draw [style=dashed, in=180, out=-75, looseness=1.00] (3.center) to (2.center);
		\draw [style=simple] (0.center) to (1.center);
	\end{pgfonlayer}
\end{tikzpicture}
$

\item[Restriction: ] 
\hfil
$
\bar{\left(
\begin{tikzpicture}
	\begin{pgfonlayer}{nodelayer}
		\node [style=map] (0) at (0, -0) {$f$};
		\node [style=none] (1) at (-1, -0) {};
		\node [style=none] (2) at (1, 0.5) {};
		\node [style=none] (3) at (1, -0.5) {};
	\end{pgfonlayer}
	\begin{pgfonlayer}{edgelayer}
		\draw [style=simple] (1.center) to (0);
		\draw [style=simple, in=27, out=180, looseness=1.00] (2.center) to (0);
		\draw [style=simple, in=-27, out=180, looseness=1.00] (3.center) to (0);
	\end{pgfonlayer}
\end{tikzpicture}
\right)}
:=
\begin{tikzpicture}
	\begin{pgfonlayer}{nodelayer}
		\node [style=map] (0) at (0, -0) {$\bar f$};
		\node [style=none] (1) at (-1, -0) {};
		\node [style=none] (2) at (1, -0) {};
		\node [style=none] (3) at (0.5, -0) {};
		\node [style=none] (4) at (1, -0.5) {};
	\end{pgfonlayer}
	\begin{pgfonlayer}{edgelayer}
		\draw [style=simple] (1.center) to (0);
		\draw [style=simple] (2.center) to (0);
		\draw [style=dashed, in=180, out=-75, looseness=1.00] (3.center) to (4.center);
	\end{pgfonlayer}
\end{tikzpicture}
$

\item[Restriction product:]
\hfil
$
\langle f,g \rangle:=
\begin{tikzpicture}
	\begin{pgfonlayer}{nodelayer}
		\node [style=map] (0) at (0, 0.25) {$f$};
		\node [style=none] (1) at (1, 0.25) {};
		\node [style=none] (2) at (1, -0.75) {};
		\node [style=none] (3) at (1, 0.25) {};
		\node [style=map] (4) at (0, -0.75) {$g$};
		\node [style=none] (5) at (1, -0.75) {};
		\node [style=otimes] (6) at (1, -0.75) {};
		\node [style=otimes] (7) at (1, 0.25) {};
		\node [style=X] (8) at (-1, -0.25) {};
		\node [style=none] (9) at (2, 0.25) {};
		\node [style=none] (10) at (2, -0.75) {};
		\node [style=none] (11) at (-2, -0.25) {};
	\end{pgfonlayer}
	\begin{pgfonlayer}{edgelayer}
		\draw [style=simple, in=27, out=150, looseness=1.00] (1.center) to (0);
		\draw [style=simple] (2.center) to (0);
		\draw [style=simple] (3.center) to (4);
		\draw [style=simple, in=-27, out=-150, looseness=1.00] (5.center) to (4);
		\draw [style=simple, in=-34, out=180, looseness=1.00] (4) to (8);
		\draw [style=simple, in=180, out=34, looseness=1.00] (8) to (0);
		\draw [style=simple] (9.center) to (1.center);
		\draw [style=simple] (2.center) to (10.center);
		\draw [style=simple] (8) to (11.center);
	\end{pgfonlayer}
\end{tikzpicture}
$

\item[Restriction terminal map:]
\hfil
$
\begin{tikzpicture}[yscale=-1]
	\begin{pgfonlayer}{nodelayer}
		\node [style=none] (0) at (-1, -0) {};
		\node [style=none] (1) at (0.5, -0) {};
		\node [style=none] (2) at (0.5, -0.75) {};
		\node [style=none] (3) at (-0.25, -0) {};
	\end{pgfonlayer}
	\begin{pgfonlayer}{edgelayer}
		\draw [style=dashed, in=180, out=-75, looseness=1.00] (3.center) to (2.center);
		\draw [style=simple] (0.center) to (1.center);
	\end{pgfonlayer}
\end{tikzpicture}
$

\item[Tensor product:]
\hfil
$
\begin{tikzpicture}
	\begin{pgfonlayer}{nodelayer}
		\node [style=map] (0) at (0, -0) {$f$};
		\node [style=none] (1) at (1, 0.5) {};
		\node [style=none] (2) at (1, -0.5) {};
		\node [style=none] (3) at (-1, -0) {};
	\end{pgfonlayer}
	\begin{pgfonlayer}{edgelayer}
		\draw [in=27, out=180, looseness=1.00] (1.center) to (0);
		\draw [in=180, out=-27, looseness=1.00] (0) to (2.center);
		\draw (0) to (3.center);
	\end{pgfonlayer}
\end{tikzpicture}
\otimes
\begin{tikzpicture}
	\begin{pgfonlayer}{nodelayer}
		\node [style=map] (0) at (0, -0) {$g$};
		\node [style=none] (1) at (1, 0.5) {};
		\node [style=none] (2) at (1, -0.5) {};
		\node [style=none] (3) at (-1, -0) {};
	\end{pgfonlayer}
	\begin{pgfonlayer}{edgelayer}
		\draw [in=27, out=180, looseness=1.00] (1.center) to (0);
		\draw [in=180, out=-27, looseness=1.00] (0) to (2.center);
		\draw (0) to (3.center);
	\end{pgfonlayer}
\end{tikzpicture}
:=
\begin{tikzpicture}
	\begin{pgfonlayer}{nodelayer}
		\node [style=map] (0) at (0, 0.25) {$f$};
		\node [style=none] (1) at (-1, 0.25) {};
		\node [style=map] (2) at (0, -0.5) {$g$};
		\node [style=none] (3) at (-1, -0.5) {};
		\node [style=otimes] (4) at (1, 0.25) {};
		\node [style=otimes] (5) at (1, -0.5) {};
		\node [style=none] (6) at (1.75, 0.25) {};
		\node [style=none] (7) at (1.75, -0.5) {};
	\end{pgfonlayer}
	\begin{pgfonlayer}{edgelayer}
		\draw (0) to (1.center);
		\draw (2) to (3.center);
		\draw [style=simple] (7.center) to (5);
		\draw [style=simple] (4) to (6.center);
		\draw [style=simple] (5) to (0);
		\draw [style=simple] (2) to (4);
		\draw [style=simple, bend left, looseness=1.00] (5) to (2);
		\draw [style=simple, bend left, looseness=1.25] (0) to (4);
	\end{pgfonlayer}
\end{tikzpicture}
$

\item[Tensor unit:]  The same as in $\X$.
\end{description}

\end{definition}

\begin{example}\cite[Ex. 5.3.3]{giles}
$\tilde \Pinj$ is $\Par$.
\end{example}
\begin{proof}
For a partial function $f:X\to Y$, $\{(x,(y,x)) | (x,y) \in f \}/\sim$ is a partial isomorphism.
\end{proof}

\begin{lemma}
\label{lemma:xtildefaithful}
The canonical functor $\iota:\X\to \tilde \X$ is faithful.
\end{lemma}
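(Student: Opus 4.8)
The plan is to observe that $\iota$ acts as the identity on objects and sends a morphism $f\colon X\to Y$ of $\X$ to the class of $f$ regarded as a map $X\to Y\otimes I\cong Y$ carrying the trivial history $S=I$. Faithfulness then amounts to the implication $(f,I)\sim(g,I)\implies f=g$ for parallel maps $f,g\colon X\to Y$ in $\X$, so everything reduces to unfolding the equivalence relation of $\tilde\X$ in the special case $S=T=I$.

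First I would specialise the defining condition to trivial histories. When $S=T=I$ the dashed history wire carries the tensor unit and so disappears, and the partial inverses $f^\circ,g^\circ$ become maps $Y\to X$ rather than $Y\otimes S\to X$. The left-hand side of the first displayed equation then becomes the composite $f\mathbin{;}\delta_Y\mathbin{;}(1_Y\otimes f^\circ)\mathbin{;}(1_Y\otimes g)\mathbin{;}\mu_Y$, where $\delta$ and $\mu$ are the comultiplication and the multiplication $\delta^\circ$ of the chosen $\dagger$-semi-Frobenius algebra on $Y$.

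Next I would simplify this composite using the structure of a discrete inverse category. Naturality of the comultiplication gives $f\mathbin{;}\delta_Y=\delta_X\mathbin{;}(f\otimes f)$, and since $f$ is a partial isomorphism we have $f f^\circ=\bar f$; hence the composite collapses to $\delta_X\mathbin{;}(f\otimes(\bar f\, g))\mathbin{;}\mu_Y$. Because the Frobenius algebra is special and commutative, pairing along $\delta_X$ and then merging along $\mu_Y$ behaves as a comparison, so the result is exactly $g$ cut down to the locus where $f$ is defined and agrees with $g$. Setting this equal to the right-hand side $g$ therefore forces $\bar g\, f=g$, that is $g\le f$ in the restriction order.

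Finally, the second displayed condition is obtained from the first by interchanging the roles of $f$ and $g$, and the definition asserts it is equivalent to the first; running the same computation yields $\bar f\, g=f$, i.e.\ $f\le g$. Equivalently, one may simply invoke symmetry of the equivalence relation $\sim$ applied to $(f,I)\sim(g,I)$ and $(g,I)\sim(f,I)$. Since the restriction order is antisymmetric, $g\le f$ together with $f\le g$ gives $f=g$, which is the claim. I expect the only real work to be the middle step: verifying that this copy--compare--merge composite reduces cleanly to the restriction inequality, which is exactly where the special commutative Frobenius identities and the partial-isomorphism law $f f^\circ=\bar f$ are needed.
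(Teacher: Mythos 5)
Your proposal is correct, and once the step you flag is written out it is essentially the computation the paper performs; the differences are organizational. You reduce the first condition with trivial histories to $\Delta_X;(f\otimes \bar f\,g);\nabla_Y=g$ via naturality of the comultiplication and $f f^\circ=\bar f$, whereas the paper instead inserts $f=ff^\circ f$ and slides the restriction idempotent $f^\circ f=\bar{f^\circ}$ through the Frobenius structure using the prephase property of restriction idempotents; both moves are legitimate and your naturality shortcut is a little quicker. The step you defer (``behaves as a comparison'') is precisely the body of the paper's proof, and the cleanest way to discharge it is again the prephase property: $\Delta_X;(f\otimes \bar f\,g);\nabla_Y=\Delta_X;(1\otimes\bar f);(f\otimes g);\nabla_Y=\bar f;\Delta_X;(f\otimes g);\nabla_Y=\Delta_X;(\bar f f\otimes g);\nabla_Y=\Delta_X;(f\otimes g);\nabla_Y$, the manifestly symmetric ``agreement'' of $f$ and $g$ (Giles' meet for discrete inverse categories). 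Note that with this in hand your order-theoretic detour becomes unnecessary, and in fact costs you an extra lemma: to pass from ``agreement $=g$'' to $\bar g\,f=g$ you need the lower-bound fact that the agreement lies below $f$ in the restriction order, which you would also have to prove (antisymmetry itself is fine, since the paper's order on homsets is a genuine partial order). The paper sidesteps both the lower-bound lemma and antisymmetry with a single chain: the first condition evaluates the agreement to $g$, commutativity of the Frobenius algebra swaps the two branches, and undoing the same rewriting with the roles of $f$ and $g$ interchanged turns the diagram into the second condition's composite, which evaluates to $f$, giving $g=f$ directly. So the mathematics is the same; your order-theoretic endgame buys conceptual clarity (the composite is a meet) at the price of one additional verification, while the paper's diagram chain is self-contained.
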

The proof is contained in \S \ref{proof:xtildefaithful}.

\begin{lemma}
The induced Frobenius algebra structure in $\tilde \X$ is counital.
\end{lemma}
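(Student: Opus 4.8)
The plan is to exhibit a counit for the induced comultiplication and to check a single counit equation, the other one following by symmetry. The induced comultiplication on $X$ is $\iota(\delta)=(\delta,I)$, where $\delta$ is the comultiplication of the semi-Frobenius algebra of $\X$; one checks immediately from the restriction-product formula with $f=g=1_X$ that this coincides with the diagonal $\Delta_X=\langle 1_X,1_X\rangle$. For the counit I would take the restriction terminal map ${!}_X\colon X\to\top$, which in $\tilde\X$ is the morphism $(1_X,X)$ whose underlying $\X$-map $1_X\colon X\to I\otimes X$ sends all of $X$ into the garbage register. It is worth stressing that ${!}_X$ is \emph{not} in the image of $\iota$: it is built from the freely added ability to discard, which is precisely why the (unitless) semi-Frobenius algebra of $\X$ becomes counital only after completion. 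It then suffices to verify $\iota(\delta)\,(1_X\otimes {!}_X)=1_X$, since cocommutativity of $\delta$ yields the mirrored law $\iota(\delta)\,({!}_X\otimes 1_X)=1_X$ for free.

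First I would reduce the left-hand side to one representative using the composition and tensor formulas fixed in the definition of $\tilde\X$. Tensoring $(1_X,I)$ with $(1_X,X)$ gives $1_X\otimes {!}_X=(1_{X\otimes X},X)\colon X\otimes X\to X\otimes\top\cong X$, keeping the first tensor factor as output and pushing the second into the garbage. Composing with $\iota(\delta)=(\delta,I)$ by the rule $(f,S)\,(g,U)=\bigl(f\,(g\otimes 1_S),\,U\otimes S\bigr)$ gives underlying map $\delta\,(1_{X\otimes X}\otimes 1_I)=\delta$ and garbage $X\otimes I\cong X$; that is, $\iota(\delta)\,(1_X\otimes {!}_X)=(\delta,X)$. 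The counit law is thereby reduced to the single equivalence $(\delta,X)\sim(1_X,I)$ in $\tilde\X$.

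It remains to check this equivalence against the relation defining the homsets of $\tilde\X$. I would instantiate that relation with $f=\delta$ (so that $f^\circ=\mu$ by the dagger, and both the output component and the garbage are $X$) and $g=1_X$; the resulting diagram is built entirely from $\delta$, $\mu=\delta^\circ$, and the Frobenius copy/merge dots, with right-hand side the bare wire $1_X$. The reduction uses the special law $\delta\mu=1_X$ to cancel the copy produced by $f$ against the merge produced by $f^\circ$, together with the Frobenius identity and (co)commutativity to straighten the remaining wires, leaving exactly $1_X$. Semantically this is transparent: $(\delta,X)$ copies its input, retains one copy and discards the other, which as a morphism of $\tilde\X$ is just the identity, as one sees concretely through the example $\tilde\Pinj=\Par$. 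The hard part will be exactly this diagram chase, because $\X$ has no counit: discarding the spare copy can only be simulated via the partial-inverse dagger and the interacting $\delta,\mu$, so the bookkeeping of which wire is output and which is garbage, and the correct deployment of speciality and the Frobenius law, must be handled with care; everything else is mechanical.
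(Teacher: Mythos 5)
Your proof is correct and follows essentially the same route as the paper: the counit is the unitor $X\to I\otimes X$ (the restriction terminal map, sending the input to garbage), and the counit law reduces to the single equivalence $(\delta,X)\sim(1_X,I)$, verified in $\X$ via the defining relation with $f=\delta$, $g=1_X$. The diagram chase you defer is exactly the paper's one-line verification, $\delta;(\delta\otimes 1);(1\otimes\mu);\mu=\delta;\mu;\delta;\mu=1_X$ by two applications of speciality after the Frobenius move, so it is easier than you anticipate.
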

\begin{proof}
For all $X$, the map $X \to (X\otimes X) \otimes I$ in $\tilde\X$ induced by the Frobenius algebra in $\X$ has a counit given by the  unitor $X\to I\otimes X$ since, in $\X$:
$$
\begin{tikzpicture}
	\begin{pgfonlayer}{nodelayer}
		\node [style=X] (0) at (3.75, -0) {};
		\node [style=none] (1) at (3, -0) {};
		\node [style=X] (2) at (4.75, 0.25) {};
		\node [style=X] (3) at (5.75, -0) {};
		\node [style=X] (4) at (7.25, 0.25) {};
		\node [style=none] (5) at (8, 0.25) {};
		\node [style=none] (6) at (8, -0.25) {};
		\node [style=none] (7) at (6.5, -0) {};
	\end{pgfonlayer}
	\begin{pgfonlayer}{edgelayer}
		\draw (0) to (1.center);
		\draw [in=180, out=30, looseness=1.00] (0) to (2);
		\draw (2) to (3);
		\draw (5.center) to (4);
		\draw [in=30, out=150, looseness=0.75] (4) to (2);
		\draw [in=-30, out=-150, looseness=1.00] (3) to (0);
		\draw (3) to (7.center);
		\draw [in=-135, out=0, looseness=0.75] (7.center) to (4);
		\draw [style=dashed, in=180, out=-15, looseness=1.00] (7.center) to (6.center);
	\end{pgfonlayer}
\end{tikzpicture}
=
\begin{tikzpicture}
	\begin{pgfonlayer}{nodelayer}
		\node [style=none] (0) at (3, -0) {};
		\node [style=none] (1) at (4.5, -0) {};
		\node [style=none] (2) at (4.5, -0.5) {};
		\node [style=none] (3) at (3.75, -0) {};
	\end{pgfonlayer}
	\begin{pgfonlayer}{edgelayer}
		\draw [style=dashed, in=180, out=-75, looseness=1.00] (3.center) to (2.center);
		\draw [style=simple] (3.center) to (0.center);
		\draw [style=simple] (3.center) to (1.center);
	\end{pgfonlayer}
\end{tikzpicture}
$$
\end{proof}

\section{Categorical quantum mechanics and completely positive maps}
\label{sec:cpm}
The $\sf CPM$ construction gives a notion of quantum channels for any $\dag$-compact closed category \cite{selinger2007dagger}.
The \dag-Frobenius algebras in the base category induce idempotents in $\sf CPM$ corresponding to decohering quantum channels.  By considering the full subcategory of the Karoubi envelope whose objects are such idempotents one obtains the $\STOCH$ construction of \cite{coecke2016categories}: yielding classical channels between finite dimensional $C^*$-algebras when applied to $\FHilb$.   However, the $\sf CPM$ construction  can not be applied to $\Hilb$ in general because unlike $\FHilb$, it is not compact closed. 
The $\CP^\infty$ construction  \cite{coecke2016pictures} generalizes the $\sf CPM$ construction to (non compact closed) $\dag$-symmetric monoidal categories, by unbending the cups/caps and, identifying two super-maps  when they act the same on all positive test maps: recovering the usual notion of purely quantum channels.

\begin{figure}

$$
\begin{tikzpicture}
	\begin{pgfonlayer}{nodelayer}
		\node [style=map] (0) at (-4, -0.25) {$f$};
		\node [style=none] (1) at (-4.75, -0.25) {};
		\node [style=none] (2) at (-3.25, -0) {};
		\node [style=none] (3) at (-3.25, -0.5) {};
	\end{pgfonlayer}
	\begin{pgfonlayer}{edgelayer}
		\draw [style=simple] (1.center) to (0);
		\draw [style=simple, in=180, out=34, looseness=1.00] (0) to (2.center);
		\draw [style=simple, in=-34, out=180, looseness=1.00] (3.center) to (0);
	\end{pgfonlayer}
\end{tikzpicture}
;
\begin{tikzpicture}
	\begin{pgfonlayer}{nodelayer}
		\node [style=map] (0) at (-4, -0.25) {$g$};
		\node [style=none] (1) at (-4.75, -0.25) {};
		\node [style=none] (2) at (-3.25, -0) {};
		\node [style=none] (3) at (-3.25, -0.5) {};
	\end{pgfonlayer}
	\begin{pgfonlayer}{edgelayer}
		\draw [style=simple] (1.center) to (0);
		\draw [style=simple, in=180, out=34, looseness=1.00] (0) to (2.center);
		\draw [style=simple, in=-34, out=180, looseness=1.00] (3.center) to (0);
	\end{pgfonlayer}
\end{tikzpicture}
:=
\begin{tikzpicture}
	\begin{pgfonlayer}{nodelayer}
		\node [style=map] (0) at (-4, -0.25) {$f$};
		\node [style=none] (1) at (-4.75, -0.25) {};
		\node [style=none] (2) at (-3.25, -0) {};
		\node [style=none] (3) at (-2.5, -0.5) {};
		\node [style=none] (4) at (-2.5, -0.5) {};
		\node [style=none] (5) at (-2, 0.25) {};
		\node [style=map] (6) at (-3.25, -0) {$g$};
		\node [style=otimes] (7) at (-2.5, -0.5) {};
		\node [style=none] (8) at (-2, -0.5) {};
	\end{pgfonlayer}
	\begin{pgfonlayer}{edgelayer}
		\draw [style=simple] (1.center) to (0);
		\draw [style=simple, in=180, out=34, looseness=1.00] (0) to (2.center);
		\draw [style=simple, in=-34, out=180, looseness=1.00] (3.center) to (0);
		\draw [style=simple, in=180, out=34, looseness=1.00] (6) to (5.center);
		\draw [style=simple, in=-34, out=150, looseness=1.00] (4.center) to (6);
		\draw [style=simple] (8.center) to (3.center);
	\end{pgfonlayer}
\end{tikzpicture}
\hspace*{.5cm}
\begin{tikzpicture}
	\begin{pgfonlayer}{nodelayer}
		\node [style=map] (0) at (-4, -0.25) {$h$};
		\node [style=none] (1) at (-4.75, -0.25) {};
		\node [style=none] (2) at (-2, 0.25) {};
		\node [style=map] (3) at (-2.75, -0.25) {$h^\circ$};
		\node [style=none] (4) at (-2, -0.25) {};
		\node [style=none] (5) at (-4.75, 0.25) {};
	\end{pgfonlayer}
	\begin{pgfonlayer}{edgelayer}
		\draw [style=simple] (1.center) to (0);
		\draw [style=simple, in=180, out=34, looseness=0.75] (0) to (2.center);
		\draw [style=simple] (4.center) to (3);
		\draw [style=simple, in=0, out=146, looseness=0.75] (3) to (5.center);
		\draw [bend right, looseness=0.75] (0) to (3);
	\end{pgfonlayer}
\end{tikzpicture}
=
\begin{tikzpicture}
	\begin{pgfonlayer}{nodelayer}
		\node [style=map] (0) at (-4, -0.25) {$k$};
		\node [style=none] (1) at (-4.75, -0.25) {};
		\node [style=none] (2) at (-2, 0.25) {};
		\node [style=map] (3) at (-2.75, -0.25) {$k^\circ$};
		\node [style=none] (4) at (-2, -0.25) {};
		\node [style=none] (5) at (-4.75, 0.25) {};
	\end{pgfonlayer}
	\begin{pgfonlayer}{edgelayer}
		\draw [style=simple] (1.center) to (0);
		\draw [style=simple, in=180, out=34, looseness=0.75] (0) to (2.center);
		\draw [style=simple] (4.center) to (3);
		\draw [style=simple, in=0, out=146, looseness=0.75] (3) to (5.center);
		\draw [bend right, looseness=0.75] (0) to (3);
	\end{pgfonlayer}
\end{tikzpicture}
\hspace*{.5cm}
\begin{tikzpicture}
	\begin{pgfonlayer}{nodelayer}
		\node [style=X] (0) at (-7, -0) {};
		\node [style=none] (1) at (-7.5, -0) {};
		\node [style=none] (2) at (-6.25, 0.25) {};
		\node [style=none] (3) at (-6.25, -0.25) {};
	\end{pgfonlayer}
	\begin{pgfonlayer}{edgelayer}
		\draw (1.center) to (0);
		\draw [in=180, out=18, looseness=1.00] (0) to (2.center);
		\draw [in=-18, out=180, looseness=0.75] (3.center) to (0);
	\end{pgfonlayer}
\end{tikzpicture}
$$

\caption{
Composition of representatives $f;g$;  equivalence relation $h\sim k$; decoherence map.}
\label{fig:kraus}
\end{figure}

To generalize the $\STOCH$ construction to  \dag-semi-Frobenius algebras, one must combine the  
 $\STOCH$ and $\CP^\infty$ constructions, as the compact closed structure is no longer taken for granted.   We show that the Cartesian completion is the same as first applying a modified version of the  $\CP^\infty$ construction (without quantifying over all test maps, as seen in Figure \ref{fig:kraus}) to a discrete inverse category and then taking the full subcategory of the Karoubi envelope whose objects are  the decoherence maps \footnote{Although, composition in this version of the ${\sf CP}^\infty$ construction, without universally quantifying over test maps, when applied to a discrete inverse category is not obviously well-defined unless the base category embeds in a compact closed category.}.  The following Lemma is needed to prove this fact:

\begin{lemma}
\label{lem:latching}

Given two parallel maps $X\xrightarrow{f,g} Y\otimes Z$ in a discrete inverse category:

$$

\end{align*}

\end{proof}

\label{sec:env}

The natural question arises: can we characterize classical channels in this setting, algebraically in terms of a discarding morphism, without performing any doubling.  In other words, is there some notion of ``environment structure'' \cite{coecke2010environment} for the {\em classical} channels of discrete inverse categories:

\begin{definition}
Given a discrete inverse category $\X$, define the counital completion of $\X$, $c(\X)$ to have the same objects and maps of $\X$, except with a freely adjoined counit $!_X:X\to I$ to the chosen semi-Frobenius algebra on $X$, for each object in $\X$ compatible with the monoidal structure.
\end{definition}

\begin{lemma}
$c(\X)$ is a discrete Cartesian restriction category.
\end{lemma}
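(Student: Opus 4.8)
The plan is to recognise $c(\X)$ as a counital copy category and then quote the characterisation \cite[Thm. 5.2]{restiii}, which identifies counital copy categories with Cartesian restriction categories; discreteness will fall out of the semi-Frobenius data. (Equivalently one could identify $c(\X)$ with the Cartesian completion $\tilde\X$ and quote \cite[Thm. 5.2.6]{giles}, but I prefer the copy-category route since every computation stays inside $\X$.)

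First I would observe that $c(\X)$ is symmetric monoidal, inheriting $\otimes$, its unit, the symmetry, and all $\X$-maps, with the freely adjoined counits compatible with $\otimes$ by fiat ($!_I=1_I$ and $!_{X\otimes Y}=!_X\otimes !_Y$). On each object the pair $(\delta_X,!_X)$, where $\delta_X$ is the comultiplication of the chosen semi-Frobenius algebra, is a commutative comonoid: coassociativity and cocommutativity are the cosemigroup axioms already holding in $\X$, the two counit laws hold by construction of the free adjunction, and compatibility with $\otimes$ comes from the Frobenius compatibility in $\X$ together with the monoidal compatibility of the $!_X$. By \cite[Thm. 5.2]{restiii}, a monoidal category carrying such a natural, tensor-compatible family of commutative comonoids is precisely a Cartesian restriction category, with restriction $\bar f=\delta_X(1_X\otimes(f\,!_Y))$, restriction terminal object the tensor unit, restriction product $\otimes$, projections $1\otimes !$, and pairing $\langle f,g\rangle=\delta_X(f\otimes g)$. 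Hence the only hypothesis of the theorem still to be checked is that $\delta$ is natural.

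This naturality, $h\,\delta_Y=\delta_X(h\otimes h)$ for every $h\colon X\to Y$, is the main obstacle, and it is governed by how the discard $!$ interacts with arbitrary maps. This is exactly what the two preceding lemmas supply: Lemma~\ref{lem:latching} reduces an equality of maps into a tensor product to an equality after copying and folding one leg back, and Lemma~\ref{theorem:cpstartheorem} converts the ``equal after discarding the environment'' condition into a doubled, compact-closed form that can be rewritten entirely within $\X$. Concretely I would prove naturality by copying the input of $h$, bending one copy through $h^\circ$ using $h\,h^\circ h=h$ and the specialness $\delta\mu=1$, and then reading off the two sides; the same computation shows that the copy-and-discard restriction of an $\X$-map $h$ agrees with its inverse-category restriction $h\,h^\circ$, so that $c(\X)$ extends $\X$ without collapsing it.

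Finally I would settle discreteness. The diagonal is $\Delta_X=\langle 1_X,1_X\rangle=\delta_X$, whose candidate partial inverse is the multiplication $\mu_X=\delta_X^{\circ}$. Since the algebra is special and $\dag$, we have $\delta_X\mu_X=\delta_X\delta_X^{\circ}=\bar{\delta_X}=1_X$, so $\delta_X$ is total, and $\mu_X\delta_X=\mu_X\mu_X^{\circ}=\bar{\mu_X}$; these are the two partial-isomorphism equations, so $\Delta_X$ is a partial isomorphism and every object is discrete. Thus $c(\X)$ is a discrete Cartesian restriction category. I expect the naturality argument of the third paragraph to be the one genuinely non-formal step, the remaining verifications being direct transcriptions of the comonoid and restriction-product axioms.
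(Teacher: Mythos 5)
Your overall route is exactly the paper's: its proof is the one-liner that $c(\X)$ is ``clearly a counital copy category, with a restriction terminal object given by the tensor unit,'' plus ``because the Frobenius structure is special, it is also discrete,'' so your first, second and fourth paragraphs are a legitimate expansion of the intended argument via \cite[Thm. 5.2]{restiii}. The genuine problem is your third paragraph. Naturality of the comultiplication against maps of $\X$ is not an obstacle to be overcome: it is part of the \emph{definition} of a discrete inverse category (the paper explicitly demands a \emph{natural}, special commutative $\dag$-semi-Frobenius algebra), and it is not derivable from $hh^{\circ}h=h$ together with specialness $\delta\mu=1$ --- naturality is an independent axiom, which is precisely why it appears in the definition, so the computation you sketch (``bending one copy through $h^{\circ}$'') has nothing to get off the ground from. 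What actually remains to be checked is naturality of $\delta$ against the \emph{new} maps of $c(\X)$, i.e.\ composites and tensors involving the freely adjoined counits. This follows from the counit equations (for $h={!}_X$ the naturality square reduces to $\delta_X({!}_X\otimes{!}_X)={!}_X$, which is a counit law applied twice) together with the observation that naturality squares are closed under composition and tensoring, and $c(\X)$ is generated by $\X$-maps and the ${!}_X$. Lemmas \ref{lem:latching} and \ref{theorem:cpstartheorem} are the wrong tools here: they govern the equivalence relation of $\tilde\X$ and are used in the paper only to prove Lemma \ref{lemma:envstruct} ($c(\X)\cong\tilde\X$), which is logically \emph{downstream} of the present lemma; invoking them here risks circularity.

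Relatedly, your discreteness argument routes through the claim that the copy-and-discard restriction of an $\X$-map in $c(\X)$ agrees with its inverse-category restriction $hh^{\circ}$; ``the same computation'' does not establish this for general $h$ --- it amounts to the lax naturality $h\,{!}_Y=\bar h\,{!}_X$ of discarding, which the freely imposed counit equations do not visibly yield without the comparison to $\tilde\X$. Fortunately the general agreement is not needed. Discreteness requires only the two partial-isomorphism equations for $\Delta_X=\delta_X$ with candidate inverse $\mu_X=\delta_X^{\circ}$, where the bars are the copy-and-discard restrictions of $c(\X)$, and both instances are derivable from the counit laws alone: first, $\bar{\delta}=\delta\,(1\otimes \delta\,({!}\otimes{!}))=\delta\,(1\otimes{!})=1$, which matches $\delta\mu=1$ from specialness; second, rewriting $\bar{\mu}$ with the Frobenius law $(1\otimes\delta)(\mu\otimes 1)=\mu\delta$ (plus cocommutativity) and then cancelling the discarded leg by a counit law gives $\bar{\mu}=\mu\delta$ directly, matching $\mu\delta=\mu\mu^{\circ}$. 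With your third paragraph replaced by the generator-closure argument for naturality, and the discreteness check done by these two explicit computations rather than by the unproven restriction-agreement, your proof is correct and coincides with the paper's.
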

\begin{proof}
This is clearly a counital copy category, with a restriction terminal object given by the tensor unit.  Moreover, because the Frobenius structure is special, it is also discrete.
\end{proof}

\begin{lemma}
\label{lemma:envstruct}
Given a discrete inverse category $\X$, $c(\X)$ and $\tilde \X$ are isomorphic as discrete Cartesian restriction categories.
\end{lemma}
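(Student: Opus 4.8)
The plan is to produce an identity-on-objects functor $F:\tilde\X\to c(\X)$ and show it is a structure-preserving bijection on hom-sets; since both categories have the same objects, being fully faithful then gives the isomorphism. The guiding idea is that a map $X\xrightarrow{(f,S)}Y$ of $\tilde\X$, with underlying arrow $f:X\to Y\otimes S$ of $\X$, should correspond in $c(\X)$ to the map obtained by \emph{discarding the garbage} $S$, namely $f\,;(1_Y\otimes\, !_S)$, where $!_S$ is the freely adjoined counit. So I would define $F$ to be the identity on objects and to send the class of $(f,S)$ to $f\,;(1_Y\otimes\, !_S)$.

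First I would check that $F$ is well defined and functorial. Well-definedness is exactly one direction of Lemma \ref{theorem:cpstartheorem}: the equivalence relation defining the maps of $\tilde\X$ is, after the biconditional chain there (built on Lemma \ref{lem:latching}), the internal form in $\X$ of the assertion that the two discarded maps agree; composing the latched condition with the counits turns it into equality in $c(\X)$. Functoriality is then a short diagram calculation. The $\tilde\X$-identity $(1_X,I)$ discards nothing and maps to $1_X$; and since $\tilde\X$-composition tensors the two pieces of garbage into $T\otimes S$ while $g$ acts only on the $Y$-wire, monoidal compatibility of $!_{(-)}$ together with counitality of the semi-Frobenius comultiplication lets one discard $S$ early or late interchangeably, matching the composite in $c(\X)$. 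The same bookkeeping shows $F$ carries the restriction, the restriction product $\langle-,-\rangle$, the restriction terminal map and the tensor product of $\tilde\X$ to the corresponding operations of $c(\X)$.

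For the inverse I would invoke the freeness of the counit completion. By the preceding lemma $c(\X)$ is a discrete Cartesian restriction category, hence a counital copy category; and by the construction of the Cartesian completion $\tilde\X$ is too, its counits being the restriction terminal maps $t_S:S\to I$ (with $t_X=(1_X,X)$), whose existence on the induced Frobenius structure was checked just above. The embedding $\iota:\X\to\tilde\X$ of Lemma \ref{lemma:xtildefaithful} is a restriction functor preserving the semi-Frobenius structure, and $\tilde\X$ equips each object with a compatible counit; so by the universal property defining $c(\X)$ as the \emph{free} adjunction of such counits, $\iota$ extends uniquely to an identity-on-objects functor $G:c(\X)\to\tilde\X$ sending each $\X$-generator $f$ to $(f,I)$ and each adjoined counit $!_X$ to $t_X$.

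Finally I would verify $FG=1_{c(\X)}$ and $GF=1_{\tilde\X}$. The first holds on generators, since $F(f,I)=f$ and $F(t_X)=1_X;(1_I\otimes\,!_X)=\,!_X$, so $FG$ fixes every generator of the freely generated $c(\X)$ and hence equals the identity by the uniqueness half of the universal property. The second is the delicate point and the main obstacle: on a representative one computes $GF(f,S)=G(f\,;(1_Y\otimes\, !_S))=\iota(f)\,;(1_Y\otimes t_S)$, and one must show this equals $(f,S)$ in $\tilde\X$. Unwinding $\tilde\X$'s own composition and terminal map reduces the claim to precisely the latched equation of Lemma \ref{theorem:cpstartheorem}, so that the two sides name the same equivalence class — this is where faithfulness really relies on that lemma, and the argument must ensure that passing through the freely generated $c(\X)$ introduces no extra identifications beyond the internal condition. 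With both composites equal to identities, $F$ is a structure-preserving bijection on hom-sets, giving $c(\X)\cong\tilde\X$ as discrete Cartesian restriction categories.
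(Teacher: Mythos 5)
Your proposal is correct and follows essentially the same route as the paper's proof: two identity-on-objects functors, with $c(\X)\to\tilde\X$ induced by the freeness of the counit completion (the adjoined counits landing on the restriction terminal maps of $\tilde\X$, whose counitality the paper checks separately), and $\tilde\X\to c(\X)$ given by discarding the garbage wire, whose well-definedness on equivalence classes is the crux and is settled exactly as you say via the latched characterization of Lemma \ref{theorem:cpstartheorem} together with the faithfulness of $\iota$ from Lemma \ref{lemma:xtildefaithful}. The one point of misplaced emphasis is harmless: $GF=1_{\tilde\X}$ is essentially definitional rather than delicate, since $\iota(f);(1_Y\otimes t_S)$ has the same underlying map and garbage as $(f,S)$ up to coherence isomorphisms, so the equivalence relation holds trivially there; the genuinely delicate step is the one you handled first.
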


The proof is contained in \S \ref{proof:envstruct}.

\section{\texorpdfstring{$\ZXA$}{ZX\&}}
\label{sec:ZXA}
In this section, we add a unit and counit to the Frobenius algebra in $\TOF$ by glueing its counital completion and unital completion together.  We then give a presentation of this category in terms of the self-dual compact closed prop $\ZXA$ generated by the copy and addition spiders, the not gate and the {\sf and} gate via a two-way translation.

\begin{definition}\cite{tof}
The category $\TOF$ is the prop generated by the Toffoli gate and ancillary bits, satisfying the equations in \S \ref{sec:tof} Figure \ref{fig:TOF}.
\end{definition}

\begin{thm}\cite{tof}
$\TOF$ is isomorphic to the category of partial isomorphisms between ordinals $2^n$, $n\in \N$.
\end{thm}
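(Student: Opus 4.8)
The plan is to construct an identity-on-generators prop morphism $\llbracket - \rrbracket : \TOF \to \FPinj$ landing in the full subcategory on the objects $2^n$, and to show it is an isomorphism by proving it well-defined (sound), full (universal), and faithful (complete). On objects the functor sends the generating object to $2$, hence $n \mapsto 2^n$, a bijection onto $\{2^n : n\in\N\}$. On generators I would send the Toffoli gate to the total bijection $(a,b,c)\mapsto (a,b,c\oplus ab)$ of $2^3$, send $|1\rangle$ to the point $1 \hookrightarrow 2$ picking out the element $1$, and send $\langle 1|$ to its partial inverse $2 \rightharpoonup 1$ (defined only on $1$). Soundness then amounts to checking that each defining equation of $\TOF$ (Figure \ref{fig:TOF}) is validated by these partial injections, which is a routine finite verification.

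For universality (fullness) I would first derive the $\Not$ and $\CNOT$ gates inside $\TOF$ from the Toffoli gate and $|1\rangle$ ancillae in the standard way, by feeding $1$'s into the control wires. This reduces the problem to the classical fact that $\{\tof,\CNOT,\Not\}$, together with freely available clean ancillae introduced by $|1\rangle$ and removed by $\langle 1|$, generates every permutation of $2^N$; the only delicate point is the parity obstruction, since without ancillae these circuits realize only the even permutations, which is resolved by enlarging the register with an extra ancillary bit whose creation and (clean) destruction are available because $|1\rangle$ and $\langle 1|$ are mutually partial-inverse. To reach an arbitrary partial isomorphism $f : 2^n \rightharpoonup 2^m$, presented as a span of injections $2^n \hookleftarrow D \hookrightarrow 2^m$, I would extend the underlying bijection $D \to R$ to a genuine permutation $g$ of some large $2^N$ containing both $2^n$ and $2^m$ (any partial bijection of finite sets extends to a bijection of a common superset, and $N$ may be taken large enough that this superset is itself a power of $2$). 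Then $f$ is recovered as the composite that injects $2^n$ into $2^N$ by setting the ancilla wires with $|1\rangle$, applies the circuit realizing $g$, and projects back onto $2^m$ by discarding the ancilla wires with $\langle 1|$.

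The genuinely hard part will be faithfulness (completeness): showing the equations of $\TOF$ already identify any two circuits with the same denotation. The approach I would take is to establish a normal form for morphisms of $\TOF$ mirroring the span decomposition of a partial isomorphism, namely a domain-restriction block assembled from $|1\rangle\langle 1|$ idempotents conjugated by permutations, followed by a reversible-permutation block in a canonical generating-set normal form, followed by a codomain block, and to prove using only the relations of Figure \ref{fig:TOF} that every circuit rewrites to such a form. Since the denotation map is already surjective onto partial isomorphisms and the normal forms are in bijection with partial isomorphisms between the objects $2^n$, two circuits with equal denotation must share a normal form and hence be equal in $\TOF$. I expect the bulk of the work, and the main obstacle, to lie precisely in the permutation-normal-form step, since controlling the interaction of the multiplicative Toffoli gate with the additive $\CNOT$ and $\Not$ gates is exactly where completeness of reversible-circuit presentations is subtle, whereas reducing the restriction-idempotent blocks is comparatively mechanical once the prephase and Frobenius identities recorded earlier in the excerpt are in hand.
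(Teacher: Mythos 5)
First, a framing point: the paper you were asked about does not prove this theorem at all --- it is imported wholesale from \cite{tof}, and the only traces of its proof here are the identities of Figure \ref{fig:TOF} and the supporting machinery in the appendix (notably the Iwama-et-al.\ commutation identities of Lemma \ref{lemma:Iwama} and the derived $\cnot$/$\Not$ gates). Your overall strategy --- an evaluation functor sending $\tof$ to $(a,b,c)\mapsto(a,b,c\oplus ab)$ and the ancill\ae\ to the point $1$ and its partial inverse, fullness via generation of permutations with a clean ancilla to kill the parity obstruction, and faithfulness via a normal form built from restriction idempotents and a reversible block --- is in outline the same route taken in the cited work, so you have correctly reconstructed the shape of the argument.

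There are, however, two genuine problems. The first is a concrete error in your fullness step: merely ``extending the underlying bijection $D\to R$ to a permutation $g$ of $2^N$'' does not recover $f$. The composite $(|1\rangle\text{-embedding});\,g;\,(\langle 1|\text{-projection})$ has domain \emph{exactly} $\{x\in 2^n : g(x,1,\dots,1)\in 2^m\times\{1\}^{N-m}\}$, so you must additionally arrange that $g$ sends every point of $(2^n\setminus D)\times\{1\}^{N-n}$ \emph{outside} the target slice; a generic extension violates this and realizes a partial isomorphism whose domain strictly contains $D$, hence differs from $f$. The fix is easy --- for $N>\max(n,m)$ the complement of the target slice has at least $2^N-2^m\ge 2^n-|D|$ points, so a suitable $g$ exists, or one can flip an ancilla conditioned on membership in $2^n\setminus D$ --- but as written the step is wrong, not merely terse. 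The second problem is that your faithfulness section is a promissory note rather than a proof: the existence of the domain-block/permutation-block/codomain-block normal form, and above all the claim that the relations of Figure \ref{fig:TOF} suffice to rewrite an arbitrary circuit into canonical form (commuting and cancelling generalized controlled-not gates, as in Lemma \ref{lemma:Iwama}, and reducing the idempotent blocks), is precisely the mathematical content of the theorem, and you give no indication of how the rewriting terminates or why distinct normal forms have distinct denotations beyond a counting slogan. You correctly identify where the difficulty lies, but identifying it is not discharging it; in \cite{tof} this normalization occupies the bulk of the proof, so your proposal should be read as a correct plan with one repairable error and its central step still open.
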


By adding a unit and counit, we obtain a full subcategory of spans of sets and finite ordinals:

\begin{lemma}
\label{lemma:unitcounit}

 The full subcategory of $\Span^\sim(\FinOrd)$ generated by powers of 2 is presented by the pushout,  $\hat \TOF$, of the following diagram of props:

$$c(\TOF)^\op \leftarrow \TOF \rightarrow c(\TOF)$$
\end{lemma}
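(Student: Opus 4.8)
The plan is to construct an isomorphism of props between $\hat\TOF$ and the full subprop $\mathcal S$ of $\Span^\sim(\FinOrd)$ on the objects $2^n$, using the universal property of the pushout in one direction and a factorisation of spans in the other. First I would pin down the two legs concretely. By Lemma~\ref{lemma:envstruct} we have $c(\TOF)\cong\tilde\TOF$; since $\TOF$ is the full subcategory of $\Pinj$ on the objects $2^n$, a class closed under $\otimes$ because $2^n\otimes 2^m=2^{n+m}$, the identification $\tilde\Pinj=\Par$ exhibits $c(\TOF)$ as the partial functions $2^n\rightharpoonup 2^m$. (Each is representable with power-of-two history: take $S$ to be the domain and the partial isomorphism $x\mapsto(f(x),x)$.) Dually $c(\TOF)^\op$ is the co-partial functions between powers of two, and as $\TOF$ is an inverse category it is self-dual, so both legs of the span of props make sense. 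Writing $\lambda\colon c(\TOF)\to\hat\TOF$ and $\rho\colon c(\TOF)^\op\to\hat\TOF$ for the pushout injections, the pushout square says the two composites out of $\TOF$ coincide, so $\rho$ and $\lambda$ are identified on the common copy of $\TOF$.

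Next I would build the forward functor. The faithful functor $\Par(\X,\M)\to\Span^\sim(\X)$ recalled above sends a partial function to its span $2^n\xleftarrow{m}D\xrightarrow{p}2^m$ with monic left leg; post-composing with the leg swap gives a functor $c(\TOF)^\op\to\mathcal S$ landing on spans with monic right leg. These two functors agree on $\TOF$ (partial isomorphisms become spans monic on both sides), so by the universal property of the pushout they induce a single $F\colon\hat\TOF\to\mathcal S$.

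To invert $F$ I would factorise spans. Given $2^n\xleftarrow{f}Z\xrightarrow{g}2^m$ with $Z$ an arbitrary finite ordinal, choose a monic $\iota\colon Z\hookrightarrow 2^k$ into a power of two. Then $2^n\xleftarrow{f}Z\xhookrightarrow{\iota}2^k$ is a co-partial function $\phi$ and $2^k\xhookleftarrow{\iota}Z\xrightarrow{g}2^m$ a partial function $\psi$, and their composite in $\mathcal S$ is the pullback of $\iota$ against itself, namely $Z$, recovering the original span; hence $F$ is full. I would then set $G(\text{span})=\rho(\phi^\op)\,;\,\lambda(\psi)$. Changing the span representative or the embedding $\iota$ only alters $\iota$ by a partial injection between powers of two, which lives in $\TOF$ and is therefore identified by $\rho$ and $\lambda$, so $G$ is well-defined. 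For $G\circ F=\mathrm{id}$ it suffices to check the generators: given $\psi\in c(\TOF)$, choosing $2^k=2^n$ and $\iota=m$ makes the co-partial factor the restriction idempotent $\bar\psi$, which lies in $\TOF$, whence $\rho(\bar\psi)=\lambda(\bar\psi)$ and $G(F(\lambda\psi))=\lambda(\bar\psi\,;\,\psi)=\lambda(\psi)$ by~\ref{R.1}; the $\rho$-generators are dual.

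The genuine obstacle is functoriality of $G$, equivalently faithfulness of $F$. It amounts to a Beck--Chevalley law: any composite $\lambda(\psi)\,;\,\rho(\chi^\op)$ of a partial function followed by a co-partial function must be rewritable in $\hat\TOF$ as $\rho(\alpha^\op)\,;\,\lambda(\beta)$ matching the pullback of $\psi$ and $\chi$ in $\FinOrd$. I expect this to be the hard step, since it has to be derived purely from the pushout relations. The natural route is to observe that $\hat\TOF$ carries a special commutative Frobenius algebra on every object — the comonoid and counit coming from $c(\TOF)$, the monoid and unit from $c(\TOF)^\op$, their compatibility supplied by the semi-Frobenius structure already present in $\TOF$ — and that the Frobenius law is exactly the pullback identity organising composition of spans. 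Once this relation is established, $F$ is fully faithful and bijective on objects, hence an isomorphism of props.
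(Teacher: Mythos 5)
Your overall architecture is the paper's: the paper also presents $\TOF$ and $c(\TOF)\cong\tilde\TOF$ concretely inside $\Span^\sim(\FinOrd)$ (as $\FPinj_2$, the doubly monic spans between powers of two, and the spans with one monic leg), forms the pushout of these presentations, obtains the comparison functor from the universal property of the pushout, proves fullness by embedding the apex $k$ into $2^{\lceil\log_2 k\rceil}$ (the paper uses a three-factor canonical form, co-function; doubly monic span; function, but your two-factor version is the same idea since the jointly monic middle piece can be absorbed into either factor), and finally transfers the result to $\hat\TOF$ via a ladder of pushout squares with vertical isomorphisms, which you instead fold into the setup via Lemma \ref{lemma:envstruct}. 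All of that is sound and matches the paper.

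The genuine gap is exactly the step you flag and then leave as an expectation, and the route you sketch would not close it as stated. A special commutative Frobenius algebra on every object of $\hat\TOF$ gives self-duality, but it does not by itself yield the exchange law $\lambda(\psi);\rho(\chi^\op)=\rho(\alpha^\op);\lambda(\beta)$: the pushout of props identifies \emph{only} the common copy of $\TOF$, so no relation between a $\lambda$-generator and a $\rho$-generator is available except one mediated by partial isomorphisms, and ``the Frobenius law is the pullback identity'' is a fact about the target $\Span^\sim(\FinOrd)$, not a relation the pushout hands you. What makes the exchange derivable is that every move across the interface can be channelled through a doubly monic span, i.e.\ a $\TOF$-morphism on which $\lambda$ and $\rho$ agree. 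For instance the mate law $(g\otimes 1);\mathrm{cap}=(1\otimes g^\op);\mathrm{cap}$ for a total $g\colon 2^b\to 2^c$ holds in the pushout because, computing inside each leg separately, both sides factor as the graph embedding $y\mapsto(y,gy)$ (jointly monic, hence in $\TOF$) followed by the counit; snake equations and the general $\lambda$-then-$\rho$ rewrite reduce similarly. This is precisely what the paper's faithfulness computation is engineered around: its chain of equalities consists exclusively of span compositions performed inside a single leg (where composition is already by pullback) and transfers of jointly monic spans, with the comparison isomorphism $\alpha$ absorbed into them, across the $\TOF$ interface. Two further consequences of the missing step in your write-up: your verification of $G\circ F=\mathrm{id}$ on generators presupposes functoriality of $G$, so it proves nothing until the exchange law is established; and your well-definedness argument for $G$ is too quick, since a span isomorphism only gives commutativity of the composite legs ($e_1f_1=\alpha e_1'f_1'$), not a partial injection directly relating the two chosen embeddings $\iota$, $\iota'$ --- recovering a $\TOF$-mediated comparison again requires the within-leg span calculus that the paper carries out explicitly. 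So: right skeleton, same route as the paper, but the entire content of the lemma sits in the step you postponed, and the Frobenius remark as stated would not discharge it.
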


The proof is contained in \S\ref{proof:unitcounit}.

If $f$ is a partial isomorphism between finite sets, then the white spiders correspond to the classical structure for the chosen computational basis.  For the interpretation into $\FHilb$ via the $\ell_2$ functor, this means that in the  qubit case, the unit and counit correspond to $\sqrt{2}|+\rangle$ and $\sqrt{2}\langle +|$.

We give a more elegant presentation of this category in terms of interacting monoids and 
 comonoids:

\begin{definition}
Consider the self dual prop $\ZXA$ generated by the addition spider with phases in $\{0,\pi\}$, the copy spider and the monoid for conjunction satisfying the  identities given in Figure \ref{fig:ZXA}.

\begin{figure}
	\noindent
	\scalebox{1.0}{%
		\vbox{%
			\begin{mdframed}
				\begin{multicols}{2}
					\begin{enumerate}[label={\bf [ZX{\it \&}.\arabic*]}, ref={\bf [ZX{\it \&}.\arabic*]}, wide = 0pt, leftmargin = 2em]
						\item
						\label{ZXA.1}
						{\hfil
							$

							$
						}

					\end{enumerate}
				\end{multicols}
				\
			\end{mdframed}
	}}
	\caption{The identities of \texorpdfstring{$\ZXA$}{ZX\&}, where \texorpdfstring{$\alpha,\beta \in \{0,\pi\}$}{alpha and beta are either 0 or pi} and a blank grey spider has angle 0.}
	\label{fig:ZXA}
\end{figure}

\end{definition}
One can interpret the generators as logical connectives and open wires as variables, similar to the regular logic \cite{butz}, or the logic of a Cartesian bicategory \cite{carboni}, except we forget the 2-cells in $\ZXA$.  The decorated black spiders correspond to fixed variables and xor.  White (co)multiplications (co)copy variables; the white unit is existential quantification and the counit is discarding. The relations are open $\Sigma_1$ Boolean formulas augmented with copying and discarding as well as duals; the open variables correspond to distinguished inputs and outputs.


The identities of $\ZXA$ can also be interpreted by freely taking the coproduct of the free prop of commutative (co)monoids \dag-PROP $3\times 2$ times, modulo various (undirected) distributive laws, and monoid maps.  The distributive laws are summarized in Figure \ref{fig:table} (the duals under diagonal are omitted). Te spider rules implicitly identify the (co)units of the \dag-compact closed structure induced by $Z$ and $X$; which is needed for completeness.
%
%

\begin{figure}[H]

\begin{minipage}[b]{\textwidth}
\setlength\dashlinedash{0.2pt}
\setlength\dashlinegap{1.5pt}
\setlength\arrayrulewidth{0.3pt}
\resizebox{\textwidth}{!}{%
\begin{tabular}{l|l:l:l:p{45mm}:p{38mm}:l}
    $\lambda$    & $Z$    & $X$    & $\&$      & $Z^\dag$                                        & $X^\dag$                           & $\&^\dag$\\ \hline
$Z$       & Comm. monoid &        &           & \noindent\begin{tabular}{@{}l} Extra special comm.\\ \dag-Frobenius algebra\end{tabular} &                                   Hopf algebra with $s=1$ &  Special bialgebra  \\ \hdashline
$X$       & \bcell & Comm. monoid &           & Hopf algebra with $s=1$              &  \noindent\begin{tabular}{@{}l} Comm. \dag-Frobenius\\ algebra \end{tabular}&         \\ \hdashline
$\&$      & \bcell & \bcell & Comm. monoid    & Special bialgebra                                       &                                    &         \\ \hdashline
$Z^\dag$  & \bcell & \bcell &   \bcell  & Cocomm. comonoid                                          &                                    &         \\ \hdashline
$X^\dag$  & \bcell & \bcell &   \bcell  &             \bcell                              & Cocomm. comonoid                             &         \\ \hdashline
$\&^\dag$ & \bcell & \bcell &   \bcell  &              \bcell                             &           \bcell                   &  Cocomm. comonoid       \\
\end{tabular}
}
\end{minipage}
\caption{Generating distributive laws of \texorpdfstring{$\ZXA$}{ZX\&}.}
\label{fig:table}
\end{figure}

Additionally, \ref{ZXA.16} states that the counit of $\&^\dag$ is copied by $\&$; ie. the counit is a monad map from $\&$ to the trivial monad.  
\ref{ZXA.17} expreses the multiplication part of the distributive law of Lawvere theories between the props for multiplication and addition mod 2 (see \cite{lawvere} for distributive laws of Lawvere theories).

\begin{proposition}
\label{prop:TOFZXA}
Consider the interpretation $\llbracket\_\rrbracket_{\ZXA}:\ZXA\to\hat \TOF$ taking:

\begin{center}
\begin{tabular}{c}
$
\begin{tikzpicture}
	\begin{pgfonlayer}{nodelayer}
		\node [style=none] (0) at (2, -0.25) {};
		\node [style=none] (1) at (2, -0.75) {};
		\node [style=Z] (2) at (1.25, -0.5) {};
		\node [style=none] (3) at (0.5, -0.5) {};
	\end{pgfonlayer}
	\begin{pgfonlayer}{edgelayer}
		\draw [style=simple, in=180, out=34, looseness=1.00] (2) to (0.center);
		\draw [style=simple, in=180, out=-34, looseness=1.00] (2) to (1.center);
		\draw [style=simple] (2) to (3.center);
	\end{pgfonlayer}
\end{tikzpicture}
\mapsto
\begin{tikzpicture}
	\begin{pgfonlayer}{nodelayer}
		\node [style=none] (0) at (2, -0) {};
		\node [style=none] (1) at (0.5, -0) {};
		\node [style=dot] (2) at (1.25, -0.5) {};
		\node [style=oplus] (3) at (1.25, -0) {};
		\node [style=X] (4) at (0.5, -0.5) {};
		\node [style=none] (5) at (2, -0.5) {};
	\end{pgfonlayer}
	\begin{pgfonlayer}{edgelayer}
		\draw [style=simple] (5.center) to (2);
		\draw [style=simple] (2) to (4);
		\draw [style=simple] (2) to (3);
		\draw [style=simple] (3) to (1.center);
		\draw [style=simple] (3) to (0.center);
	\end{pgfonlayer}
\end{tikzpicture}
\hspace*{.5cm}
\begin{tikzpicture}[xscale=-1]
	\begin{pgfonlayer}{nodelayer}
		\node [style=none] (0) at (2, -0.25) {};
		\node [style=none] (1) at (2, -0.75) {};
		\node [style=Z] (2) at (1.25, -0.5) {};
		\node [style=none] (3) at (0.5, -0.5) {};
	\end{pgfonlayer}
	\begin{pgfonlayer}{edgelayer}
		\draw [style=simple, in=180, out=34, looseness=1.00] (2) to (0.center);
		\draw [style=simple, in=180, out=-34, looseness=1.00] (2) to (1.center);
		\draw [style=simple] (2) to (3.center);
	\end{pgfonlayer}
\end{tikzpicture}
\mapsto
\begin{tikzpicture}[xscale=-1]
	\begin{pgfonlayer}{nodelayer}
		\node [style=none] (0) at (2, -0) {};
		\node [style=none] (1) at (0.5, -0) {};
		\node [style=dot] (2) at (1.25, -0.5) {};
		\node [style=oplus] (3) at (1.25, -0) {};
		\node [style=X] (4) at (0.5, -0.5) {};
		\node [style=none] (5) at (2, -0.5) {};
	\end{pgfonlayer}
	\begin{pgfonlayer}{edgelayer}
		\draw [style=simple] (5.center) to (2);
		\draw [style=simple] (2) to (4);
		\draw [style=simple] (2) to (3);
		\draw [style=simple] (3) to (1.center);
		\draw [style=simple] (3) to (0.center);
	\end{pgfonlayer}
\end{tikzpicture}
\hspace*{.5cm}
\begin{tikzpicture}
	\begin{pgfonlayer}{nodelayer}
		\node [style=Z] (0) at (1, -0) {};
		\node [style=none] (1) at (2, -0) {};
	\end{pgfonlayer}
	\begin{pgfonlayer}{edgelayer}
		\draw [style=simple] (1.center) to (0);
	\end{pgfonlayer}
\end{tikzpicture}
\mapsto
\begin{tikzpicture}
	\begin{pgfonlayer}{nodelayer}
		\node [style=zeroin] (0) at (1, -0) {};
		\node [style=none] (1) at (2, -0) {};
	\end{pgfonlayer}
	\begin{pgfonlayer}{edgelayer}
		\draw [style=simple] (1.center) to (0);
	\end{pgfonlayer}
\end{tikzpicture}
\hspace*{.5cm}
\begin{tikzpicture}[xscale=-1]
	\begin{pgfonlayer}{nodelayer}
		\node [style=Z] (0) at (1, -0) {};
		\node [style=none] (1) at (2, -0) {};
	\end{pgfonlayer}
	\begin{pgfonlayer}{edgelayer}
		\draw [style=simple] (1.center) to (0);
	\end{pgfonlayer}
\end{tikzpicture}
\mapsto
\begin{tikzpicture}[xscale=-1]
	\begin{pgfonlayer}{nodelayer}
		\node [style=zeroin] (0) at (1, -0) {};
		\node [style=none] (1) at (2, -0) {};
	\end{pgfonlayer}
	\begin{pgfonlayer}{edgelayer}
		\draw [style=simple] (1.center) to (0);
	\end{pgfonlayer}
\end{tikzpicture}
$
\\\\
$
\begin{tikzpicture}
	\begin{pgfonlayer}{nodelayer}
		\node [style=none] (0) at (2, -0.25) {};
		\node [style=none] (1) at (2, -0.75) {};
		\node [style=X] (2) at (1.25, -0.5) {};
		\node [style=none] (3) at (0.5, -0.5) {};
	\end{pgfonlayer}
	\begin{pgfonlayer}{edgelayer}
		\draw [style=simple, in=180, out=34, looseness=1.00] (2) to (0.center);
		\draw [style=simple, in=180, out=-34, looseness=1.00] (2) to (1.center);
		\draw [style=simple] (2) to (3.center);
	\end{pgfonlayer}
\end{tikzpicture}
\mapsto
\begin{tikzpicture}
	\begin{pgfonlayer}{nodelayer}
		\node [style=none] (0) at (2, -0) {};
		\node [style=none] (1) at (0.5, -0) {};
		\node [style=oplus] (2) at (1.25, -0.5) {};
		\node [style=dot] (3) at (1.25, -0) {};
		\node [style=zeroin] (4) at (0.5, -0.5) {};
		\node [style=none] (5) at (2, -0.5) {};
	\end{pgfonlayer}
	\begin{pgfonlayer}{edgelayer}
		\draw [style=simple] (5.center) to (2);
		\draw [style=simple] (2) to (4);
		\draw [style=simple] (2) to (3);
		\draw [style=simple] (3) to (1.center);
		\draw [style=simple] (3) to (0.center);
	\end{pgfonlayer}
\end{tikzpicture}
\hspace*{.5cm}
\begin{tikzpicture}[xscale=-1]
	\begin{pgfonlayer}{nodelayer}
		\node [style=none] (0) at (2, -0.25) {};
		\node [style=none] (1) at (2, -0.75) {};
		\node [style=X] (2) at (1.25, -0.5) {};
		\node [style=none] (3) at (0.5, -0.5) {};
	\end{pgfonlayer}
	\begin{pgfonlayer}{edgelayer}
		\draw [style=simple, in=180, out=34, looseness=1.00] (2) to (0.center);
		\draw [style=simple, in=180, out=-34, looseness=1.00] (2) to (1.center);
		\draw [style=simple] (2) to (3.center);
	\end{pgfonlayer}
\end{tikzpicture}
\mapsto
\begin{tikzpicture}[xscale=-1]
	\begin{pgfonlayer}{nodelayer}
		\node [style=none] (0) at (2, -0) {};
		\node [style=none] (1) at (0.5, -0) {};
		\node [style=oplus] (2) at (1.25, -0.5) {};
		\node [style=dot] (3) at (1.25, -0) {};
		\node [style=zeroin] (4) at (0.5, -0.5) {};
		\node [style=none] (5) at (2, -0.5) {};
	\end{pgfonlayer}
	\begin{pgfonlayer}{edgelayer}
		\draw [style=simple] (5.center) to (2);
		\draw [style=simple] (2) to (4);
		\draw [style=simple] (2) to (3);
		\draw [style=simple] (3) to (1.center);
		\draw [style=simple] (3) to (0.center);
	\end{pgfonlayer}
\end{tikzpicture}
\hspace*{.5cm}
\begin{tikzpicture}
	\begin{pgfonlayer}{nodelayer}
		\node [style=X] (0) at (1, -0) {};
		\node [style=none] (1) at (2, -0) {};
	\end{pgfonlayer}
	\begin{pgfonlayer}{edgelayer}
		\draw [style=simple] (1.center) to (0);
	\end{pgfonlayer}
\end{tikzpicture}
\mapsto
\begin{tikzpicture}
	\begin{pgfonlayer}{nodelayer}
		\node [style=X] (0) at (1, -0) {};
		\node [style=none] (1) at (2, -0) {};
	\end{pgfonlayer}
	\begin{pgfonlayer}{edgelayer}
		\draw [style=simple] (1.center) to (0);
	\end{pgfonlayer}
\end{tikzpicture}
\hspace*{.5cm}
\begin{tikzpicture}[xscale=-1]
	\begin{pgfonlayer}{nodelayer}
		\node [style=X] (0) at (1, -0) {};
		\node [style=none] (1) at (2, -0) {};
	\end{pgfonlayer}
	\begin{pgfonlayer}{edgelayer}
		\draw [style=simple] (1.center) to (0);
	\end{pgfonlayer}
\end{tikzpicture}
\mapsto
\begin{tikzpicture}[xscale=-1]
	\begin{pgfonlayer}{nodelayer}
		\node [style=X] (0) at (1, -0) {};
		\node [style=none] (1) at (2, -0) {};
	\end{pgfonlayer}
	\begin{pgfonlayer}{edgelayer}
		\draw [style=simple] (1.center) to (0);
	\end{pgfonlayer}
\end{tikzpicture}
$
\\\\
$
\begin{tikzpicture}
	\begin{pgfonlayer}{nodelayer}
		\node [style=Z] (0) at (1, -0) {$\pi$};
		\node [style=none] (1) at (2, -0) {};
		\node [style=none] (2) at (0, -0) {};
	\end{pgfonlayer}
	\begin{pgfonlayer}{edgelayer}
		\draw [style=simple] (1.center) to (0);
		\draw [style=simple] (0) to (2.center);
	\end{pgfonlayer}
\end{tikzpicture}
\mapsto
\begin{tikzpicture}
	\begin{pgfonlayer}{nodelayer}
		\node [style=oplus] (0) at (1, -0) {};
		\node [style=none] (1) at (2, -0) {};
		\node [style=none] (2) at (0, -0) {};
	\end{pgfonlayer}
	\begin{pgfonlayer}{edgelayer}
		\draw [style=simple] (1.center) to (0);
		\draw [style=simple] (0) to (2.center);
	\end{pgfonlayer}
\end{tikzpicture}
\hspace*{.5cm}
\begin{tikzpicture}
	\begin{pgfonlayer}{nodelayer}
		\node [style=none] (0) at (0, -0) {};
		\node [style=andin] (1) at (0.75, -0.5) {};
		\node [style=none] (2) at (1.5, -0.5) {};
		\node [style=none] (3) at (0, -1) {};
	\end{pgfonlayer}
	\begin{pgfonlayer}{edgelayer}
		\draw [style=simple] (2.center) to (1);
		\draw [style=simple, in=0, out=146, looseness=1.00] (1) to (0.center);
		\draw [style=simple, in=0, out=-146, looseness=1.00] (1) to (3.center);
	\end{pgfonlayer}
\end{tikzpicture}
\mapsto
\begin{tikzpicture}
	\begin{pgfonlayer}{nodelayer}
		\node [style=dot] (0) at (1, -0) {};
		\node [style=dot] (1) at (1, -0.5) {};
		\node [style=oplus] (2) at (1, -1) {};
		\node [style=X] (3) at (1.75, -0) {};
		\node [style=X] (4) at (1.75, -0.5) {};
		\node [style=none] (5) at (2, -1) {};
		\node [style=zeroin] (6) at (0.25, -1) {};
		\node [style=none] (7) at (0, -0) {};
		\node [style=none] (8) at (0, -0.5) {};
	\end{pgfonlayer}
	\begin{pgfonlayer}{edgelayer}
		\draw [style=simple] (5.center) to (2);
		\draw [style=simple] (2) to (6);
		\draw [style=simple] (2) to (1);
		\draw [style=simple] (1) to (0);
		\draw [style=simple] (3) to (0);
		\draw [style=simple] (0) to (7.center);
		\draw [style=simple] (8.center) to (1);
		\draw [style=simple] (1) to (4);
	\end{pgfonlayer}
\end{tikzpicture}
\hspace*{.5cm}
\begin{tikzpicture}
	\begin{pgfonlayer}{nodelayer}
		\node [style=none] (0) at (1.5, -0) {};
		\node [style=andout] (1) at (0.75, -0.5) {};
		\node [style=none] (2) at (0, -0.5) {};
		\node [style=none] (3) at (1.5, -1) {};
	\end{pgfonlayer}
	\begin{pgfonlayer}{edgelayer}
		\draw [style=simple] (2.center) to (1);
		\draw [style=simple, in=180, out=34, looseness=1.00] (1) to (0.center);
		\draw [style=simple, in=180, out=-34, looseness=1.00] (1) to (3.center);
	\end{pgfonlayer}
\end{tikzpicture}
\mapsto
\begin{tikzpicture}
	\begin{pgfonlayer}{nodelayer}
		\node [style=dot] (0) at (1, -0) {};
		\node [style=dot] (1) at (1, -0.5) {};
		\node [style=oplus] (2) at (1, -1) {};
		\node [style=X] (3) at (0.25, -0) {};
		\node [style=X] (4) at (0.25, -0.5) {};
		\node [style=none] (5) at (0, -1) {};
		\node [style=zeroout] (6) at (1.75, -1) {};
		\node [style=none] (7) at (2, -0) {};
		\node [style=none] (8) at (2, -0.5) {};
	\end{pgfonlayer}
	\begin{pgfonlayer}{edgelayer}
		\draw [style=simple] (5.center) to (2);
		\draw [style=simple] (2) to (6);
		\draw [style=simple] (2) to (1);
		\draw [style=simple] (1) to (0);
		\draw [style=simple] (3) to (0);
		\draw [style=simple] (0) to (7.center);
		\draw [style=simple] (8.center) to (1);
		\draw [style=simple] (1) to (4);
	\end{pgfonlayer}
\end{tikzpicture}
$
\end{tabular}
\end{center}

This interpretation is a strict symmetric \dag-monoidal functor.
\end{proposition}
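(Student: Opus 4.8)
The plan is to exploit the fact that $\ZXA$ is, by construction, a symmetric $\dag$-monoidal prop \emph{presented} by generators and relations (Figure~\ref{fig:ZXA}). By the universal property of such a presentation, a strict symmetric $\dag$-monoidal functor out of $\ZXA$ is precisely the data of (i) a strict symmetric $\dag$-monoidal assignment on objects, (ii) a choice of morphism in $\hat\TOF$ of the correct arity for each generator, and (iii) a proof that these chosen images satisfy every defining relation. Once (i)--(iii) hold, functoriality, compatibility with composition, with the tensor, and with the symmetries is \emph{automatic}, since all of that structure in $\ZXA$ is generated freely modulo the relations. So first I would record that the object assignment is $n \mapsto 2^n$, which is strictly monoidal because $2^{m+n}=2^m\cdot 2^n$ and $2^0=1$ is the monoidal unit of $\hat\TOF$, and then check that each generator in the table is sent to a morphism of the matching input/output arity. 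These are immediate from reading off the interpretation.

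Next I would dispatch the $\dag$-condition. The interpretation table is deliberately organised so that each generator appears alongside its reflection (the \texttt{xscale=-1} variant), which is its adjoint for the self-dual compact closed structure; correspondingly its image is the reflected circuit. Thus verifying $\llbracket g^\dag \rrbracket_{\ZXA} = \llbracket g \rrbracket_{\ZXA}^\dag$ on generators reduces to observing that the dagger of $\hat\TOF$ (converse of spans between the ordinals $2^n$, via Lemma~\ref{lemma:unitcounit}) acts on each listed image exactly by this reflection. Since the dagger is a strict symmetric monoidal functor that is the identity on objects and reverses generators, it then agrees with $\llbracket\_\rrbracket_{\ZXA}$ on all of $\ZXA$ by induction on diagrams. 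This step is essentially by inspection.

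The bulk of the work is soundness of the seventeen relations \ref{ZXA.1}--\ref{ZXA.17}. My plan is to verify each in $\hat\TOF$ through the isomorphism of Lemma~\ref{lemma:unitcounit} with the full subcategory of $\Span^\sim(\FinOrd)$ on powers of $2$, equivalently with $2^n\times 2^m$ matrices over $\N$: each relation then becomes a finite matrix identity. The additive and copy relations (spider fusion \ref{ZXA.1}, \ref{ZXA.3}, the (co)commutativity/(co)unit laws \ref{ZXA.2}, \ref{ZXA.4}, and the bialgebra/Hopf interactions \ref{ZXA.5}, \ref{ZXA.6}, \ref{ZXA.8}) reduce to computations already living inside $\Span^\sim(\Mat(\F_2))$, so their images agree by the corresponding matrix equalities; the AND-relations \ref{ZXA.9}--\ref{ZXA.13} and \ref{ZXA.15}--\ref{ZXA.16} require unfolding the Toffoli-based image of the $\&$-monoid and simplifying using the equations of $\TOF$ (Figure~\ref{fig:TOF}) together with the freely adjoined (co)unit.

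The hard part will be the relations coupling conjunction with the additive and copy structure, especially the distributive law \ref{ZXA.17}, the bialgebra-type law \ref{ZXA.12}, and \ref{ZXA.9}, where the Toffoli expansion of $\&$ interacts with copying and must be reassociated before it collapses; and the scalar relation \ref{ZXA.7}, which is the most error-prone because it pins down the normalisation of the freely added units. Concretely, the X-unit is sent to the adjoined unit, which as an $\N$-vector is $|0\rangle+|1\rangle=\binom{1}{1}$, so its composite with $|0\rangle=\binom{1}{0}$ computes to the scalar $1$, matching the empty diagram; it is precisely here that the $\sqrt2$ bookkeeping of the $\FHilb$ picture becomes the clean integer identity that makes \ref{ZXA.7} (and the extra-special condition underlying it) hold. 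I expect all of these to go through, but they are the checks that genuinely use the specific form of the pushout $\hat\TOF$ rather than generic prop manipulations, so I would carry them out last and most carefully.
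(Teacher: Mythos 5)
Your overall skeleton---use the universal property of the presentation of $\ZXA$, check arities and the dagger on generators by inspection, then verify the seventeen relations---is the same as the paper's, but your method for the core verification is genuinely different. The paper proves each of \ref{ZXA.1}--\ref{ZXA.17} \emph{equationally inside} $\hat\TOF$, using the $\TOF$ axioms of Figure \ref{fig:TOF} together with a toolkit of auxiliary lemmas it develops for the purpose (the generalized-controlled-not commutation Lemma \ref{lemma:Iwama}, the copy Remark \ref{cor:copy}, Lemma \ref{lemma:whiteunit}, naturality on targets Lemma \ref{lemma:natoplus}); this occupies many pages of diagram rewriting. You instead propose to verify the relations \emph{semantically}, transporting them through Lemma \ref{lemma:unitcounit} to the full subcategory of $\Span^\sim(\FinOrd)$ on powers of $2$, i.e.\ to identities of $2^n\times 2^m$ matrices over $\N$. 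This is logically sound and not circular: the paper's proof of Lemma \ref{lemma:unitcounit} uses only the known characterization of $\TOF$ as partial isomorphisms and the Cartesian/counital completion machinery, nowhere Proposition \ref{prop:TOFZXA}, and faithfulness of the isomorphism $\hat\TOF\cong\mathsf{FSpan}_2$ is exactly what lets you conclude equality in $\hat\TOF$ from equality of matrices. Your route buys much shorter and less error-prone checks (each relation is a finite computation on matrices of size at most $2^3\times 2^3$, once the variadic spider laws are reduced by the usual arity induction to the binary/nullary cases---a reduction you should state, since the paper instead splits \ref{ZXA.1} into unitality, associativity, Frobenius and phase amalgamation); and your treatment of \ref{ZXA.7} via $(1,1)\cdot(1,0)^{T}=1$ is exactly the integer identity at stake. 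What the paper's route buys is that its auxiliary equational lemmas are reused where no semantic shortcut exists: the dual Proposition \ref{prop:ZXATOF} must be verified by rewriting inside $\ZXA$, which at that point has no independently known model, and Theorem \ref{theorem:TOFZXAiso} again argues equationally, so the syntactic machinery is not dispensable for the paper as a whole even though it is for this proposition.

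Two small corrections. First, $\hat\TOF$ and $\ZXA$ are props, so the interpretation is identity on objects (wire counts), with the assignment $n\mapsto 2^n$ appearing only after composing with the isomorphism of Lemma \ref{lemma:unitcounit}; your phrasing conflates the two, harmlessly but it should be fixed. Second, be careful saying the additive/copy relations ``live in $\Span^\sim(\Mat(\F_2))$'': multiplicities matter over $\N$ (loops can create factors of $2$), so the clean statement is simply that you compute both sides as $\N$-matrices, where the axioms of $\ZXA$ were chosen to hold on the nose.
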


See \S \ref{proof:TOFZXA} for the proof.

\begin{proposition}
\label{prop:ZXATOF}
Consider the interpretation $\llbracket\_\rrbracket_{\hat \TOF}:\hat \TOF\to \ZXA$ taking:

\begin{center}
\begin{tabular}{c}
$
\begin{tikzpicture}
	\begin{pgfonlayer}{nodelayer}
		\node [style=dot] (0) at (0, -0) {};
		\node [style=oplus] (1) at (0, -0.5) {};
		\node [style=dot] (2) at (0, 0.5) {};
		\node [style=none] (3) at (0.75, -0.5) {};
		\node [style=none] (4) at (0.75, -0) {};
		\node [style=none] (5) at (0.75, 0.5) {};
		\node [style=none] (6) at (-0.75, 0.5) {};
		\node [style=none] (7) at (-0.75, -0) {};
		\node [style=none] (8) at (-0.75, -0.5) {};
	\end{pgfonlayer}
	\begin{pgfonlayer}{edgelayer}
		\draw [style=simple] (3.center) to (1);
		\draw [style=simple] (1) to (0);
		\draw [style=simple] (0) to (2);
		\draw [style=simple] (2) to (5.center);
		\draw [style=simple] (4.center) to (0);
		\draw [style=simple] (0) to (7.center);
		\draw [style=simple] (6.center) to (2);
		\draw [style=simple] (1) to (8.center);
	\end{pgfonlayer}
\end{tikzpicture}
\mapsto
\begin{tikzpicture}
	\begin{pgfonlayer}{nodelayer}
		\node [style=none] (0) at (0, -0) {};
		\node [style=none] (1) at (0, -1) {};
		\node [style=none] (2) at (0, -1.5) {};
		\node [style=Z] (3) at (2, -1.5) {};
		\node [style=X] (4) at (0.5, -1) {};
		\node [style=X] (5) at (0.5, -0) {};
		\node [style=andin] (6) at (1.5, -0.5) {};
		\node [style=none] (7) at (2.75, -0) {};
		\node [style=none] (8) at (2.75, -1.5) {};
		\node [style=none] (9) at (2.75, -1) {};
	\end{pgfonlayer}
	\begin{pgfonlayer}{edgelayer}
		\draw [style=simple, in=0, out=90, looseness=0.75] (3) to (6);
		\draw [style=simple, in=-45, out=150, looseness=1.00] (6) to (5);
		\draw [style=simple] (5) to (0.center);
		\draw [style=simple] (1.center) to (4);
		\draw [style=simple] (3) to (2.center);
		\draw [style=simple] (9.center) to (4);
		\draw [style=simple] (8.center) to (3);
		\draw [style=simple, in=-150, out=45, looseness=1.00] (4) to (6);
		\draw [style=simple] (7.center) to (5);
	\end{pgfonlayer}
\end{tikzpicture}
\hspace*{.5cm}
\begin{tikzpicture}
	\begin{pgfonlayer}{nodelayer}
		\node [style=onein] (0) at (0, -0) {};
		\node [style=none] (1) at (1, -0) {};
	\end{pgfonlayer}
	\begin{pgfonlayer}{edgelayer}
		\draw [style=simple] (1.center) to (0);
	\end{pgfonlayer}
\end{tikzpicture}
\mapsto
\begin{tikzpicture}
	\begin{pgfonlayer}{nodelayer}
		\node [style=Z] (0) at (0, -0) {$\pi$};
		\node [style=none] (1) at (1, -0) {};
	\end{pgfonlayer}
	\begin{pgfonlayer}{edgelayer}
		\draw [style=simple] (1.center) to (0);
	\end{pgfonlayer}
\end{tikzpicture}
\hspace*{.5cm}
\begin{tikzpicture}[xscale=-1]
	\begin{pgfonlayer}{nodelayer}
		\node [style=onein] (0) at (0, -0) {};
		\node [style=none] (1) at (1, -0) {};
	\end{pgfonlayer}
	\begin{pgfonlayer}{edgelayer}
		\draw [style=simple] (1.center) to (0);
	\end{pgfonlayer}
\end{tikzpicture}
\mapsto
\begin{tikzpicture}
	\begin{pgfonlayer}{nodelayer}
		\node [style=Z] (0) at (1, -0) {$\pi$};
		\node [style=none] (1) at (0, -0) {};
	\end{pgfonlayer}
	\begin{pgfonlayer}{edgelayer}
		\draw [style=simple] (1.center) to (0);
	\end{pgfonlayer}
\end{tikzpicture}
$
\\\\
$
\begin{tikzpicture}
	\begin{pgfonlayer}{nodelayer}
		\node [style=X] (0) at (0, -0) {};
		\node [style=none] (1) at (1, -0) {};
	\end{pgfonlayer}
	\begin{pgfonlayer}{edgelayer}
		\draw [style=simple] (1.center) to (0);
	\end{pgfonlayer}
\end{tikzpicture}
\mapsto
\begin{tikzpicture}
	\begin{pgfonlayer}{nodelayer}
		\node [style=X] (0) at (0, -0) {};
		\node [style=none] (1) at (1, -0) {};
	\end{pgfonlayer}
	\begin{pgfonlayer}{edgelayer}
		\draw [style=simple] (1.center) to (0);
	\end{pgfonlayer}
\end{tikzpicture}
\hspace*{.5cm}
\begin{tikzpicture}[xscale=-1]
	\begin{pgfonlayer}{nodelayer}
		\node [style=X] (0) at (0, -0) {};
		\node [style=none] (1) at (1, -0) {};
	\end{pgfonlayer}
	\begin{pgfonlayer}{edgelayer}
		\draw [style=simple] (1.center) to (0);
	\end{pgfonlayer}
\end{tikzpicture}
\mapsto
\begin{tikzpicture}[xscale=-1]
	\begin{pgfonlayer}{nodelayer}
		\node [style=X] (0) at (0, -0) {};
		\node [style=none] (1) at (1, -0) {};
	\end{pgfonlayer}
	\begin{pgfonlayer}{edgelayer}
		\draw [style=simple] (1.center) to (0);
	\end{pgfonlayer}
\end{tikzpicture}
$
\end{tabular}
\end{center}

This interepretation is a strict symmetric \dag-monoidal functor.
\end{proposition}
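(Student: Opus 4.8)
The plan is to use the presentation of $\hat\TOF$ as the pushout of props $c(\TOF)^\op \leftarrow \TOF \to c(\TOF)$, so that exhibiting a strict symmetric \dag-monoidal functor out of it reduces, by the universal property of the pushout, to producing prop morphisms $c(\TOF)\to\ZXA$ and $c(\TOF)^\op\to\ZXA$ that agree on the image of $\TOF$. The symmetry and strict monoidality are then automatic, since any prop morphism preserves the tensor (juxtaposition) and the canonical symmetry; and $\dag$-preservation I would check on generators: the Toffoli gate is self-inverse, so I must confirm its image is a self-transpose diagram, which holds because that image is assembled symmetrically from the self-dual conjunction monoid and a self-dagger CNOT-like core; the state $|1\rangle$ and effect $\langle 1|$ go to the $Z(\pi)$ state and effect, which are mutually dagger; and the freely adjoined unit and counit go to the white $X$-state and $X$-effect, again mutually dagger.

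To build $c(\TOF)\to\ZXA$ I would use that $c(\TOF)$ is just $\TOF$ with a counit freely adjoined to its semi-Frobenius algebra, so giving the morphism amounts to (a) a well-defined morphism $\TOF\to\ZXA$ on the Toffoli, $|1\rangle$ and $\langle 1|$ generators, together with (b) the choice of the white counit as a genuine counit for the image Frobenius structure that is monoidal-natural. For (b) the required (co)unit and Frobenius laws are exactly the white spider axioms \ref{ZXA.3} and \ref{ZXA.4} together with the relevant entries of Figure \ref{fig:table}, while the compatibility of discarding with the translated Toffoli follows from the copy and distributive laws of the AND gate, chiefly \ref{ZXA.9}, \ref{ZXA.11}, \ref{ZXA.16} and \ref{ZXA.17}. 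The dual morphism $c(\TOF)^\op\to\ZXA$ is then obtained by applying $\dag$, sending the freely adjoined unit to the white $X$-unit, and the two morphisms agree on $\TOF$ by construction; this is what the universal property needs.

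The remaining substance is part (a): verifying that every defining equation of $\TOF$ in Figure \ref{fig:TOF} becomes an identity provable in $\ZXA$ after translation. I would work through them in order. Relations that only involve $|1\rangle$, $\langle 1|$ and the derived xor/copy structure translate directly into the bialgebra and Frobenius laws of the black and white spiders. The genuinely laborious ones are those involving the Toffoli gate, whose image is a composite of the AND monoid with CNOT-like copying: its involutivity, its slide/commutation rules, and the manner in which it absorbs $|1\rangle$ and $\langle 1|$ must all be re-derived from the AND-gate axioms \ref{ZXA.9}--\ref{ZXA.17} together with spider fusion. I expect this to be the main obstacle, since the Toffoli gate is the only truly multiplicative generator, so each equation in which it appears unfolds into a nontrivial rewrite coupling the AND monoid to both the copy ($X$) and xor ($Z$) structures, with the scalars and (co)units (via \ref{ZXA.7}, \ref{ZXA.10} and \ref{ZXA.13}) needing to be tracked carefully throughout.
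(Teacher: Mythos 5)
Your organization is the paper's own: Lemma \ref{lemma:unitcounit} already presents $\hat \TOF$ as the pushout $c(\TOF)^\op \leftarrow \TOF \rightarrow c(\TOF)$, so the paper's proof likewise reduces well-definedness to (i) verifying that each identity of Figure \ref{fig:TOF} becomes derivable in $\ZXA$ after translation, (ii) the (co)unit laws for the adjoined white unit and counit, which it dispatches by noting the white spiders form a Frobenius algebra, and (iii) the observation that both Frobenius algebras induce the same compact closed structure (the spider law). Your pushout and dagger bookkeeping is sound, and your guess about which axioms carry the load (spider fusion together with \ref{ZXA.9}, \ref{ZXA.11}, \ref{ZXA.16}, \ref{ZXA.17}) is accurate.

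The genuine gap is that the entire mathematical content of the statement --- the derivations showing \ref{TOF.1}--\ref{TOF.16} hold in $\ZXA$ --- is announced rather than performed; in the paper those rewrites \emph{are} the proof, running to several pages. Two concrete consequences. First, your step (b) claims counitality of the white counit ``is exactly'' the spider axioms \ref{ZXA.3}--\ref{ZXA.4}, but this is circular until you have shown that the \emph{translated} derived comultiplication (the fanout, assembled from the Toffoli image, $Z(\pi)$ and the $\Not$ gate) literally equals the $X$ comultiplication in $\ZXA$. The paper therefore opens its proof with exactly this preliminary computation: the translated $\cnot$ and $\Not$ simplify to the $Z$--$X$ cnot and the $Z(\pi)$ phase, using \ref{ZXA.14}, \ref{ZXA.10}, \ref{ZXA.1} and the scalar fact that the isolated $Z$ dot is the empty diagram (Lemma \ref{lem:blackdot}); none of the later verifications go through without these simplifications. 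Second, some identities need auxiliary facts your plan does not anticipate, e.g.\ Lemma \ref{lem:oldaxiom}, and the paper avoids attacking \ref{TOF.10} directly by proving it redundant given \ref{TOF.9}, \ref{TOF.6} and \ref{TOF.12} --- a straight-line derivation of \ref{TOF.10} in $\ZXA$ would be substantially harder. So the route is the paper's route, but as written the proposal is a plan whose execution is precisely the missing proof.
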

See \S \ref{proof:ZXATOF} for the proof.

\begin{thm}
\label{theorem:TOFZXAiso}
The interpretation functors $\llbracket\_\rrbracket_{\ZXA}$ and $\llbracket\_\rrbracket_{\hat \TOF}$ are inverses, so that $\hat \TOF$ and $\ZXA$ are isomorphic as strongly compact closed props.
\end{thm}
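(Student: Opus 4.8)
The plan is to use that, by Propositions~\ref{prop:TOFZXA} and~\ref{prop:ZXATOF}, both interpretations are already known to be well-defined strict symmetric \dag-monoidal functors, so the only thing left is to check that they are mutually inverse. Being morphisms of props, both are the identity on objects, so it suffices to deal with the morphisms; and since a prop morphism is completely determined by its values on the generating morphisms, it is enough to verify that the two round–trip composites act as the identity on generators. Concretely, I would check that $\llbracket\_\rrbracket_{\ZXA}$ followed by $\llbracket\_\rrbracket_{\hat\TOF}$ fixes each generator of $\ZXA$, and that $\llbracket\_\rrbracket_{\hat\TOF}$ followed by $\llbracket\_\rrbracket_{\ZXA}$ fixes each generator of $\hat\TOF$, all equalities holding in the respective equational theories.

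First I would treat the endofunctor of $\ZXA$. Running each generator of $\ZXA$ through $\llbracket\_\rrbracket_{\ZXA}$ and back through $\llbracket\_\rrbracket_{\hat\TOF}$ yields a (generally larger) $\ZXA$-diagram that must be rewritten back to the original generator using the identities of Figure~\ref{fig:ZXA}. The black and white spider generators (the $Z$ and $X$ (co)multiplications, (co)units and the $\pi$ phase) are the mild cases: after substitution they collapse by the spider fusion laws \ref{ZXA.1}--\ref{ZXA.4}, the Hopf/bialgebra interaction of $Z$ and $X$ recorded in Figure~\ref{fig:table}, together with \ref{ZXA.6}--\ref{ZXA.8} and \ref{ZXA.14}. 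Using that the functors are \dag-monoidal halves this bookkeeping, since a generator and its dagger are handled by a single computation. The genuinely laborious generator is the conjunction monoid: its image under $\llbracket\_\rrbracket_{\ZXA}$ is the Toffoli gadget that computes an {\sf and} into a $|0\rangle$-ancilla, and its image back in $\ZXA$ must be simplified to the bare {\sf and} node; this is where the and-specific relations \ref{ZXA.9}--\ref{ZXA.13} and \ref{ZXA.15}--\ref{ZXA.17} do the real work.

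Symmetrically, for the endofunctor of $\hat\TOF$ I would check the generators of $\hat\TOF$: those inherited from $\TOF$ --- the Toffoli gate and the ancillary states $|1\rangle,\langle1|$ --- together with the unit and counit freely adjoined in the pushout of Lemma~\ref{lemma:unitcounit}. The ancillae and the adjoined (co)units round-trip almost immediately. The main computation is the Toffoli gate: sending it into $\ZXA$ produces the interacting and/xor/copy diagram of Proposition~\ref{prop:ZXATOF}, and pushing that back into $\hat\TOF$ must be shown, using the defining equations of $\TOF$ (Figure~\ref{fig:TOF}) together with the unitality and counitality relations imposed by the pushout, to return the Toffoli gate itself.

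I expect these Toffoli/{\sf and} round-trips to be the principal obstacle: both unfold into sizeable diagrams, and closing them back up requires the full strength of the bialgebra and and-gate interaction laws rather than just the spider rules; the spider generators, by contrast, are essentially routine. Finally, for the \emph{strongly} (i.e.\ \dag-) compact closed claim, I would note that the cups and caps of both props are definable from the $Z$ (equivalently $X$) semi-Frobenius (co)multiplications and (co)units, which the generator-wise checks above already match up. Since the two functors are strict \dag-monoidal and, by the above, mutually inverse, they preserve this compact structure and the dagger, so the resulting isomorphism is one of strongly compact closed props.
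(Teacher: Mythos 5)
Your proposal is correct and follows essentially the same route as the paper: given Propositions~\ref{prop:TOFZXA} and~\ref{prop:ZXATOF}, the paper likewise reduces the theorem to checking that both round-trip composites fix the generating morphisms (units, counits and $\pi$-phases being trivial, and the spider laws reducing arbitrary arities to the binary cases), with the compact closed structure handled exactly as you indicate via the spider-induced cups and caps. The only minor divergence is in your forecast of where the labour lies: in the paper's actual computations the {\sf and}-gate round trip in $\ZXA$ closes up quickly, while the lengthy calculation is the Toffoli round trip carried out in $\hat\TOF$ using Lemma~\ref{lemma:Iwama} and \ref{TOF.2}.
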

See \S \ref{proof:TOFZXAiso} for the proof.

Recall the following proposition:

\begin{proposition}\cite[Prop. 2.6]{multirelations}\footnote{In \cite{multirelations}, they do not prove this equivalence is monoidal, but it is an obvious corollary. They also do not consider the finite case.}
The category $\Span^\sim(\FinOrd)$ equipped with the Cartesian product is monoidally equivalent to the category of (finite)  matrices over the natural numbers and the Kronecker product.
\end{proposition}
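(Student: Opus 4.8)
The plan is to build an explicit functor $F:\Span^\sim(\FinOrd)\to\Mat(\N)$ which is the identity on objects (identifying the finite ordinal $n$ with the dimension $n\in\N$) and which sends a span $m\xleftarrow{f}k\xrightarrow{g}n$ to the matrix $F(f,g)$ whose $(j,i)$-entry is the cardinality $|f^{-1}(i)\cap g^{-1}(j)|$, that is, the number of elements of the apex lying over the pair $(i,j)\in m\times n$. First I would check that this is well-defined on isomorphism classes of spans: an isomorphism of spans is a bijection of apices commuting with both legs, and bijections preserve fibre cardinalities, so the matrix is an invariant of the iso-class. Conversely, every matrix $M$ arises from a canonical span, obtained by taking the apex to be the disjoint union $\sum_{i,j} M_{ji}$ of fibres with the evident legs, and this span is unique up to iso of apex. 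Hence $F$ restricts to a bijection on each hom-set, so $F$ is full and faithful.

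Next I would establish functoriality. The trivial span $n\xleftarrow{1}n\xrightarrow{1}n$ maps to the identity matrix, and for composition I would verify that span composition by pullback computes matrix multiplication. Composing $m\xleftarrow{f}k\xrightarrow{g}n$ with $n\xleftarrow{h}l\xrightarrow{e}p$, the apex of the composite is the pullback $\{(x,y)\in k\times l : g(x)=h(y)\}$, and the number of its elements lying over $(i,j')\in m\times p$ is $\sum_{j\in n}|f^{-1}(i)\cap g^{-1}(j)|\cdot|h^{-1}(j)\cap e^{-1}(j')|$, which is exactly the $(j',i)$-entry of the matrix product. Since $F$ is fully faithful and bijective on objects, this already yields an isomorphism of categories, and in particular the desired equivalence.

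Finally I would promote $F$ to a strong monoidal functor from the Cartesian product on $\Span^\sim(\FinOrd)$ to the Kronecker product on $\Mat(\N)$. On objects, the Cartesian product of ordinals $m\times n$ has cardinality $mn$, matching the dimension of the Kronecker product; after fixing once and for all a bijection $m\times n\cong mn$ (say lexicographic), the coherence isomorphisms become identities. On morphisms, the Cartesian product of two spans has apex the product of the apices, and its fibre over $((i,k),(j,l))$ has cardinality the product of the two individual fibre cardinalities; this is precisely the defining formula $(M\otimes N)_{(j,l),(i,k)}=M_{ji}\,N_{lk}$ of the Kronecker product. I expect the substance of the argument to be exactly these two verifications — pullback equals matrix multiplication, and Cartesian product of spans equals Kronecker product — while the one point requiring genuine care is the index bookkeeping: one must check that the chosen bijections $m\times n\cong mn$ satisfy the associativity and unit coherences, so that $F$ is strong monoidal on the nose rather than merely monoidal up to non-canonical isomorphism.
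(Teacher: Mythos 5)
Your proof is correct, but it is worth pointing out that the paper itself contains no proof of this proposition at all: it is imported from \cite[Prop.~2.6]{multirelations}, with a footnote conceding that the source neither proves monoidality nor treats the finite case, and asserting these are obvious. Your proposal therefore supplies exactly the direct argument that the footnote gestures at, and it is the standard one: iso-classes of spans $m \xleftarrow{f} k \xrightarrow{g} n$ biject with $\N$-matrices via the fibre cardinalities $|f^{-1}(i)\cap g^{-1}(j)|$, pullback composition realises matrix multiplication through your $\sum_{j}$ formula, and the product of spans matches the Kronecker product entrywise. Each of your verifications goes through; two small points deserve explicit mention in a write-up. First, apices of canonical spans and of pullbacks are a priori finite \emph{sets} (e.g.\ $\{(x,y) \in k\times l : g(x)=h(y)\}$) rather than finite ordinals, so you should either replace them by ordinals of the same cardinality --- harmless, since morphisms are isomorphism classes of spans --- or work in $\FinSets$ and transport along the equivalence with $\FinOrd$. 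Second, faithfulness rests on the observation that the apex of any span is the disjoint union of its fibres over pairs $(i,j)$, so that fibrewise bijections assemble into an isomorphism of spans; this is where uniqueness-up-to-iso of your canonical span is actually used and is worth stating rather than asserting. Your closing concern about coherence resolves cleanly: with the lexicographic pairing $m\times n \to mn$, $(i,j)\mapsto in+j$, one computes that $((i,j),k)$ goes to $(in+j)p+k = inp+jp+k$ while $(i,(j,k))$ goes to $i(np)+(jp+k)$, and these agree, so the associativity and unit coherences hold on the nose --- indeed this pairing is precisely how the monoidal structure on $\FinOrd$ is defined in the first place, making $F$ strict monoidal rather than merely strong.
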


Thus,

\begin{corollary}
$\ZXA$ is complete for the prop of $2^n\times 2^m$ matrices over the natural numbers.
\end{corollary}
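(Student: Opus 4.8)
The plan is to read off the corollary by composing the three results already assembled, with essentially no new computation. By Theorem~\ref{theorem:TOFZXAiso} the prop $\ZXA$ is isomorphic, as a strongly compact closed prop, to $\hat\TOF$. By Lemma~\ref{lemma:unitcounit}, $\hat\TOF$ is precisely the presentation of the full subcategory of $\Span^\sim(\FinOrd)$ on the objects that are powers of two. Finally, the cited proposition identifies $\Span^\sim(\FinOrd)$, with its Cartesian monoidal structure, with the category $\Mat(\N)$ of finite matrices over the natural numbers under the Kronecker product. Chaining these gives a prop isomorphism from $\ZXA$ onto the image, under this equivalence, of the full subcategory on powers of two.

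The key remaining step is to check that this image is exactly the prop of $2^n\times 2^m$ matrices. Under the equivalence of \cite[Prop. 2.6]{multirelations} an object of $\FinOrd$ is sent to its cardinality regarded as a matrix dimension, and a span $2^m \leftarrow Z \rightarrow 2^n$ is sent to the $2^n\times 2^m$ matrix counting, for each pair of basis elements, the number of points of $Z$ lying over them. Hence the full subcategory on the objects $2^n$ corresponds to the full subcategory of $\Mat(\N)$ on the dimensions $\{2^n : n\in\N\}$, whose hom-set from $2^m$ to $2^n$ is exactly the set of $2^n\times 2^m$ matrices over $\N$. Because the equivalence is monoidal and the objects $2^n$ are closed under the Cartesian product (indeed $2^m\times 2^n = 2^{m+n}$), this full subcategory is a sub-prop; reindexing its objects by the exponent recovers the usual prop structure in which the monoidal product of objects is addition, matching the Kronecker product sending a $2^a\times 2^b$ matrix and a $2^c\times 2^d$ matrix to a $2^{a+c}\times 2^{b+d}$ matrix.

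Completeness is then immediate. Since the composite of these identifications is an isomorphism of props between $\ZXA$ and the matrix prop, the semantic interpretation functor is fully faithful, so any two $\ZXA$-diagrams denoting the same matrix over $\N$ are already provably equal using the identities of Figure~\ref{fig:ZXA}. The only point requiring care will be the bookkeeping: confirming at each stage that the identifications respect the monoidal (and dagger/compact) structure, so that the prop of powers of two on the span side really lands on the prop of $2^n\times 2^m$ matrices rather than on some larger or differently indexed category. This is routine, since every functor in the chain has already been shown to be a strict symmetric monoidal functor or a monoidal equivalence.
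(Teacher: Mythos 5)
Your proposal is correct and matches the paper's own argument, which is exactly this chaining: Theorem~\ref{theorem:TOFZXAiso} gives $\ZXA \cong \hat\TOF$, Lemma~\ref{lemma:unitcounit} identifies $\hat\TOF$ with the full subcategory of $\Span^\sim(\FinOrd)$ on powers of two, and the cited monoidal equivalence with matrices over $\N$ under the Kronecker product yields the corollary immediately. The extra bookkeeping you spell out (spans mapping to counting matrices, closure of powers of two under the product, reindexing by exponents) is precisely what the paper leaves implicit in its one-word ``Thus.''
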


\section{Conclusion}
There are various other directions which could be pursued.  One could also ask if there is a normal form for $\ZXA$ induced by the presentation in terms of distributive laws and monoid maps, using the correspondence between strict factorization systems and distributive laws in spans \cite{rosebrugh2002distributive}. 
\nocite{piedeleu} 
It would also be interesting to investigate the 2-categorical structure of $\ZXA$; presenting the corresponding category of relations as a Frobenius theory \cite{functorial} using the partial order enrichment of $\TOF$.

Another immediate direction would be to add the white $\pi$ phase to $\ZXA$ to obtain an approximately universal graphical calculus for quantum computing using only distributive laws and monoid maps.  In such a fragment, one could construct the {\sf and} gate for the $X$ basis; perhaps expanding the table of distributive laws in Figure \ref{fig:table} to be complete for an approximately universal fragment of quantum computing, furthering the general programme of \cite{linrel,duncan} decomposing circuits using distributive laws.  This  approach is contrasted to considering H-boxes as primitives, as in the phase-free fragment of the $\ZH$-calculus \cite{zhpi}---in $\ZXA$+the white $\pi$ phase, the unnormalized Hadamard gate is derived. Perhaps proving the minimality of the axioms using this presentation might be easier, although we do not prove minimality in this paper.

It would also be interesting to investigate the connection to the $\ZH$-calculus and triangle fragments of the $\ZX$-calculus; in particular, in regard to natural number labelled H-boxes, as in \cite{natspiders}.  
These gates can be represented in string diagrams. The diagram of the triangle can be interpreted as the assertion  $x\wedge \neg y =  \bot$ which is equivalent to the material implication  $ x \Rightarrow y$.
\begin{figure}[H]

$$
\begin{tikzpicture}
	\begin{pgfonlayer}{nodelayer}
		\node [style=none] (0) at (1.25, -0.75) {};
		\node [style=none] (1) at (0.5, -0.75) {};
		\node [style=none] (2) at (2, -0.75) {};
		\node [style=triflip] (3) at (1.25, -0.75) {};
	\end{pgfonlayer}
	\begin{pgfonlayer}{edgelayer}
		\draw (2.center) to (0.center);
		\draw (0.center) to (1.center);
	\end{pgfonlayer}
\end{tikzpicture}
:=
\begin{tikzpicture}
	\begin{pgfonlayer}{nodelayer}
		\node [style=none] (0) at (1.5, -0.75) {};
		\node [style=none] (1) at (1.5, -1.25) {};
		\node [style=Z] (2) at (2.25, -0.75) {};
		\node [style=none] (3) at (0.5, -0.75) {};
		\node [style=andin] (4) at (1.5, -0.75) {};
		\node [style=none] (5) at (3, -1.25) {};
		\node [style=Z] (6) at (2.25, -1.25) {$\pi$};
	\end{pgfonlayer}
	\begin{pgfonlayer}{edgelayer}
		\draw [style=simple] (2.center) to (0.center);
		\draw [style=simple, in=-165, out=180, looseness=2.75] (1.center) to (0.center);
		\draw [style=simple] (3.center) to (0.center);
		\draw (5.center) to (6);
		\draw (6) to (1.center);
	\end{pgfonlayer}
\end{tikzpicture}
\hspace*{1cm}
\begin{tikzpicture}
	\begin{pgfonlayer}{nodelayer}
		\node [style=H] (0) at (1, -1) {$n$};
		\node [style=none] (1) at (0.25, -1) {};
		\node [style=none] (2) at (1.75, -1) {};
	\end{pgfonlayer}
	\begin{pgfonlayer}{edgelayer}
		\draw (2.center) to (0);
		\draw (0) to (1.center);
	\end{pgfonlayer}
\end{tikzpicture}
:=
\begin{tikzpicture}
	\begin{pgfonlayer}{nodelayer}
		\node [style=none] (0) at (-0.5, -0.75) {};
		\node [style=none] (1) at (-0.5, -1.25) {};
		\node [style=Z] (2) at (2.75, -0.75) {$\pi$};
		\node [style=none] (3) at (-1.5, -0.75) {};
		\node [style=andin] (4) at (-0.5, -0.75) {};
		\node [style=none] (5) at (3.25, -1.25) {};
		\node [style=triflip] (6) at (2, -0.75) {};
		\node [style=triflip] (7) at (1, -0.75) {};
		\node [style=none] (8) at (1.5, -0.5) {$n$};
		\node [style=Z] (9) at (0.25, -0.75) {$\pi$};
	\end{pgfonlayer}
	\begin{pgfonlayer}{edgelayer}
		\draw [style=simple, in=-165, out=180, looseness=2.75] (1.center) to (0.center);
		\draw [style=simple] (3.center) to (0.center);
		\draw [style=dotted] (6) to (7);
		\draw (6) to (2);
		\draw (7) to (0.center);
		\draw [style=simple] (1.center) to (5.center);
	\end{pgfonlayer}
\end{tikzpicture}
$$

\caption{Triangles and H-boxes in \texorpdfstring{$\ZXA$}{ZX\&}, for \texorpdfstring{$n\in \N$}{n a natural number}.}
\label{fig:gens}
\end{figure}

{%
\textbf{Acknowledgements}

I would like to thank Sivert Aasn\ae ss, Niel de Beaudrap, Robin Cockett, Lukas Heidemann, Aleks Kissinger, JS Lemay and James Hefford, Konstantinos Meichanetzidis and John van de Wetering  for useful discussions.  Especially Robin for pointing out that the conditions for the  $\STOCH$ construction and the Cartesian completion were equivalent, {\em in general}; Aleks for seeing the decomposition of the triangle; Niel for asking about the relation to natural number labeled H boxes; and John for private discussions involving the simplification of the axioms of the phase free $\ZH$-calculus.

I also gratefully acknowledge support from the Clarendon fund.
}

\nocite{coecke2008classical}
\nocite{cnot}
\nocite{tof}
\nocite{Cole}
\nocite{elltwo}
\nocite{coecke2017two}
\nocite{carboni}
\nocite{butz}
\nocite{pqp}
\nocite{lack2004composing}

\bibliographystyle{eptcs}
\bibliography{zxaarxiv}

\newpage

\appendix 

\section{Proof of Lemma \ref{lemma:xtildefaithful}}
\label{proof:xtildefaithful}

Recall the statement of the Lemma:

\textbf{Lemma  \ref{lemma:xtildefaithful}:}
\textit{
The canonical functor $\iota:\X\to \tilde \X$ is faithful.
}

\begin{proof}
Suppose that $\iota(f)\sim\iota(g)$, Then:

\begin{align*}
\begin{tikzpicture}
	\begin{pgfonlayer}{nodelayer}
		\node [style=map] (0) at (0, 0.5) {$g$};
		\node [style=none] (1) at (1, 0.5) {};
		\node [style=none] (2) at (-1, 0.5) {};
	\end{pgfonlayer}
	\begin{pgfonlayer}{edgelayer}
		\draw (1.center) to (0);
		\draw (0) to (2.center);
	\end{pgfonlayer}
\end{tikzpicture}
&=
\begin{tikzpicture}
	\begin{pgfonlayer}{nodelayer}
		\node [style=map] (0) at (0, 0.5) {$f$};
		\node [style=none] (1) at (-0.75, 0.5) {};
		\node [style=map] (2) at (1.75, 0.25) {$f^\circ$};
		\node [style=map] (3) at (2.5, 0.25) {$g$};
		\node [style=X] (4) at (0.75, 0.5) {};
		\node [style=X] (5) at (3.5, 0.5) {};
		\node [style=none] (6) at (4.5, 0.5) {};
	\end{pgfonlayer}
	\begin{pgfonlayer}{edgelayer}
		\draw (0) to (1.center);
		\draw (6.center) to (5);
		\draw [in=30, out=150, looseness=0.75] (5) to (4);
		\draw [in=180, out=-34, looseness=1.00] (4) to (2);
		\draw (0) to (4);
		\draw (3) to (2);
		\draw [in=-153, out=0, looseness=1.00] (3) to (5);
	\end{pgfonlayer}
\end{tikzpicture}
=
\begin{tikzpicture}
	\begin{pgfonlayer}{nodelayer}
		\node [style=none] (0) at (-1.5, 0.5) {};
		\node [style=map] (1) at (2.5, 0.25) {$g$};
		\node [style=X] (2) at (0.75, 0.5) {};
		\node [style=X] (3) at (3.5, 0.5) {};
		\node [style=none] (4) at (4.5, 0.5) {};
		\node [style=map] (5) at (1.75, 0.25) {$f^\circ$};
		\node [style=map] (6) at (-0.75, 0.5) {$f$};
		\node [style=map] (7) at (0, 0.5) {$f^\circ f$};
	\end{pgfonlayer}
	\begin{pgfonlayer}{edgelayer}
		\draw (4.center) to (3);
		\draw [in=30, out=150, looseness=0.75] (3) to (2);
		\draw [in=-153, out=0, looseness=1.00] (1) to (3);
		\draw [in=180, out=-34, looseness=1.00] (2) to (5);
		\draw (1) to (5);
		\draw [style=simple] (2) to (7);
		\draw [style=simple] (7) to (6);
		\draw [style=simple] (6) to (0.center);
	\end{pgfonlayer}
\end{tikzpicture}\\
&=
\begin{tikzpicture}
	\begin{pgfonlayer}{nodelayer}
		\node [style=none] (0) at (-1, 0.5) {};
		\node [style=map] (1) at (2.5, -0) {$g$};
		\node [style=X] (2) at (0.75, 0.5) {};
		\node [style=X] (3) at (3.5, 0.5) {};
		\node [style=none] (4) at (4.5, 0.5) {};
		\node [style=map] (5) at (1.75, -0) {$f^\circ$};
		\node [style=map] (6) at (-0.25, 0.5) {$f$};
		\node [style=map] (7) at (1.75, 1) {$f^\circ f$};
		\node [style=none] (8) at (2.5, 1) {};
	\end{pgfonlayer}
	\begin{pgfonlayer}{edgelayer}
		\draw (4.center) to (3);
		\draw [in=-150, out=0, looseness=1.00] (1) to (3);
		\draw [in=180, out=-34, looseness=1.00] (2) to (5);
		\draw (1) to (5);
		\draw [style=simple] (6) to (0.center);
		\draw [style=simple, in=180, out=30, looseness=1.00] (2) to (7);
		\draw [style=simple] (2) to (6);
		\draw [style=simple, in=0, out=150, looseness=1.00] (3) to (8.center);
		\draw [style=simple] (8.center) to (7);
	\end{pgfonlayer}
\end{tikzpicture}
=
\begin{tikzpicture}
	\begin{pgfonlayer}{nodelayer}
		\node [style=none] (0) at (-1, 0.5) {};
		\node [style=map] (1) at (1.75, -0) {$g$};
		\node [style=X] (2) at (0.75, 0.5) {};
		\node [style=X] (3) at (2.75, 0.5) {};
		\node [style=none] (4) at (3.75, 0.5) {};
		\node [style=map] (5) at (-0.25, 0.5) {$ff^\circ$};
		\node [style=map] (6) at (1.75, 1) {$f$};
	\end{pgfonlayer}
	\begin{pgfonlayer}{edgelayer}
		\draw (4.center) to (3);
		\draw [in=-150, out=0, looseness=1.00] (1) to (3);
		\draw [style=simple] (5) to (0.center);
		\draw [style=simple, in=180, out=30, looseness=1.00] (2) to (6);
		\draw [style=simple] (2) to (5);
		\draw [style=simple, in=-30, out=180, looseness=1.00] (1) to (2);
		\draw [style=simple, in=0, out=150, looseness=1.00] (3) to (6);
	\end{pgfonlayer}
\end{tikzpicture}\\
&=
\begin{tikzpicture}
	\begin{pgfonlayer}{nodelayer}
		\node [style=none] (0) at (-1, 0.5) {};
		\node [style=map] (1) at (1.75, -0) {$g$};
		\node [style=X] (2) at (0, 0.5) {};
		\node [style=X] (3) at (2.75, 0.5) {};
		\node [style=none] (4) at (3.75, 0.5) {};
		\node [style=map] (5) at (1.75, 1) {$f$};
		\node [style=map] (6) at (1, 1) {$ff^\circ$};
		\node [style=none] (7) at (1, -0) {};
	\end{pgfonlayer}
	\begin{pgfonlayer}{edgelayer}
		\draw (4.center) to (3);
		\draw [in=-150, out=0, looseness=1.00] (1) to (3);
		\draw [style=simple, in=0, out=150, looseness=1.00] (3) to (5);
		\draw (1) to (7.center);
		\draw [in=-30, out=180, looseness=1.00] (7.center) to (2);
		\draw (2) to (0.center);
		\draw [in=180, out=30, looseness=1.00] (2) to (6);
		\draw (6) to (5);
	\end{pgfonlayer}
\end{tikzpicture}
=
\begin{tikzpicture}
	\begin{pgfonlayer}{nodelayer}
		\node [style=none] (0) at (-1, 0.5) {};
		\node [style=map] (1) at (1, -0) {$g$};
		\node [style=X] (2) at (0, 0.5) {};
		\node [style=X] (3) at (2, 0.5) {};
		\node [style=none] (4) at (3, 0.5) {};
		\node [style=map] (5) at (1, 1) {$f$};
	\end{pgfonlayer}
	\begin{pgfonlayer}{edgelayer}
		\draw (4.center) to (3);
		\draw [in=-150, out=0, looseness=1.00] (1) to (3);
		\draw [style=simple, in=0, out=150, looseness=1.00] (3) to (5);
		\draw (2) to (0.center);
		\draw [in=-30, out=180, looseness=1.00] (1) to (2);
		\draw [in=180, out=30, looseness=1.00] (2) to (5);
	\end{pgfonlayer}
\end{tikzpicture}\\
&=
\begin{tikzpicture}
	\begin{pgfonlayer}{nodelayer}
		\node [style=none] (0) at (-1, 0.5) {};
		\node [style=map] (1) at (1, .2) {$g$};
		\node [style=X] (2) at (0, 0.5) {};
		\node [style=X] (3) at (2, 0.5) {};
		\node [style=none] (4) at (3, 0.5) {};
		\node [style=map] (5) at (1, .8) {$f$};
	\end{pgfonlayer}
	\begin{pgfonlayer}{edgelayer}
		\draw (4.center) to (3);
		\draw [in=150, out=0, looseness=1.25] (1) to (3);
		\draw [style=simple, in=0, out=-150, looseness=1.25] (3) to (5);
		\draw (2) to (0.center);
		\draw [in=30, out=180, looseness=1.25] (1) to (2);
		\draw [in=180, out=-30, looseness=1.25] (2) to (5);
	\end{pgfonlayer}
\end{tikzpicture}
=
\begin{tikzpicture}
	\begin{pgfonlayer}{nodelayer}
		\node [style=none] (0) at (-1, 0.5) {};
		\node [style=map] (1) at (1, -0) {$f$};
		\node [style=X] (2) at (0, 0.5) {};
		\node [style=X] (3) at (2, 0.5) {};
		\node [style=none] (4) at (3, 0.5) {};
		\node [style=map] (5) at (1, 1) {$g$};
	\end{pgfonlayer}
	\begin{pgfonlayer}{edgelayer}
		\draw (4.center) to (3);
		\draw [in=-150, out=0, looseness=1.00] (1) to (3);
		\draw [style=simple, in=0, out=150, looseness=1.00] (3) to (5);
		\draw (2) to (0.center);
		\draw [in=-30, out=180, looseness=1.00] (1) to (2);
		\draw [in=180, out=30, looseness=1.00] (2) to (5);
	\end{pgfonlayer}
\end{tikzpicture}\\
&=
\begin{tikzpicture}
	\begin{pgfonlayer}{nodelayer}
		\node [style=map] (0) at (0, 0.5) {$g$};
		\node [style=none] (1) at (-0.75, 0.5) {};
		\node [style=map] (2) at (1.75, 0.25) {$g^\circ$};
		\node [style=map] (3) at (2.5, 0.25) {$f$};
		\node [style=X] (4) at (0.75, 0.5) {};
		\node [style=X] (5) at (3.5, 0.5) {};
		\node [style=none] (6) at (4.5, 0.5) {};
	\end{pgfonlayer}
	\begin{pgfonlayer}{edgelayer}
		\draw (0) to (1.center);
		\draw (6.center) to (5);
		\draw [in=30, out=150, looseness=0.75] (5) to (4);
		\draw [in=180, out=-34, looseness=1.00] (4) to (2);
		\draw (0) to (4);
		\draw (3) to (2);
		\draw [in=-153, out=0, looseness=1.00] (3) to (5);
	\end{pgfonlayer}
\end{tikzpicture}
=
\begin{tikzpicture}
	\begin{pgfonlayer}{nodelayer}
		\node [style=map] (0) at (0, 0.5) {$f$};
		\node [style=none] (1) at (1, 0.5) {};
		\node [style=none] (2) at (-1, 0.5) {};
	\end{pgfonlayer}
	\begin{pgfonlayer}{edgelayer}
		\draw (1.center) to (0);
		\draw (0) to (2.center);
	\end{pgfonlayer}
\end{tikzpicture}
\end{align*}

\end{proof}


\section{Proof of lemma \ref{lemma:envstruct}}
\label{proof:envstruct}

Recall the statement of the Lemma:

\textbf{Lemma \ref{lemma:envstruct}}
\textit{Given a discrete inverse category $\X$, $c(\X)$ and $\tilde \X$ are isomorphic as discrete Cartesian restriction categories.}

\begin{proof}
Define an identity on objects functor $F:c(\X)\to \tilde \X$ in the obvious way, sending the counits to the ancillary space.
Similarly, define an identity on objects functor from $G:\tilde \X \to c(\X)$ given by plugging counits into the ancillary space.
These maps are clearly inverses to each other and preserve discrete Cartesian restriction structure; however, once again we mush show that they are actually  functors.

To see  that $F$ is a functor, it suffices to observe that every object in  $\tilde \X $ is equipped with a counital Frobenius algebra, compatible with the monoidal structure, where the unit is in the image of the freely adjoined counit under $F$.


To prove that $G$ is a functor, take some $(f,S)\sim (g,T)$ in $\tilde \X$.
Therefore, in $\tilde \X$, since the Frobenius structure is counital:
$$
\begin{tikzpicture}
	\begin{pgfonlayer}{nodelayer}
		\node [style=map] (0) at (3, -0) {$f^\circ$};
		\node [style=X] (1) at (2, 0.5) {};
		\node [style=map] (2) at (1, -0) {$f$};
		\node [style=none] (3) at (4, -0) {};
		\node [style=none] (4) at (0.25, 1) {};
		\node [style=none] (5) at (0.25, -0) {};
		\node [style=none] (6) at (3.5, -0) {};
		\node [style=none] (7) at (4, -0.5) {};
	\end{pgfonlayer}
	\begin{pgfonlayer}{edgelayer}
		\draw [style=simple] (0) to (3.center);
		\draw [style=simple] (1) to (2);
		\draw [style=simple, in=0, out=166, looseness=1.00] (1) to (4.center);
		\draw [style=simple] (5.center) to (2);
		\draw [style=simple, in=-30, out=-150, looseness=0.75] (0) to (2);
		\draw [in=153, out=0, looseness=1.00] (1) to (0);
		\draw [style=dashed, in=-60, out=180, looseness=1.00] (7.center) to (6.center);
	\end{pgfonlayer}
\end{tikzpicture}
\sim
\begin{tikzpicture}
	\begin{pgfonlayer}{nodelayer}
		\node [style=map] (0) at (2.75, -0) {$f^\circ$};
		\node [style=X] (1) at (1.75, 0.5) {};
		\node [style=X] (2) at (1, 0.5) {};
		\node [style=map] (3) at (0, -0) {$f$};
		\node [style=none] (4) at (3.75, -0) {};
		\node [style=none] (5) at (-1, 1) {};
		\node [style=none] (6) at (-1, -0) {};
		\node [style=none] (7) at (3.75, -0.5) {};
		\node [style=none] (8) at (3, 0.75) {};
	\end{pgfonlayer}
	\begin{pgfonlayer}{edgelayer}
		\draw [style=simple] (1) to (0);
		\draw [style=simple] (0) to (4.center);
		\draw [style=simple] (1) to (2);
		\draw [style=simple] (2) to (3);
		\draw [style=simple, in=0, out=166, looseness=1.00] (2) to (5.center);
		\draw [style=simple] (6.center) to (3);
		\draw [style=simple, in=-30, out=-150, looseness=0.75] (0) to (3);
		\draw [style=simple, in=0, out=180, looseness=0.75] (7.center) to (8.center);
		\draw [style=simple, in=180, out=11, looseness=1.00] (1) to (8.center);
	\end{pgfonlayer}
\end{tikzpicture}
=
\begin{tikzpicture}
	\begin{pgfonlayer}{nodelayer}
		\node [style=map] (0) at (2.75, -0) {$g^\circ$};
		\node [style=X] (1) at (1.75, 0.5) {};
		\node [style=X] (2) at (1, 0.5) {};
		\node [style=map] (3) at (0, -0) {$g$};
		\node [style=none] (4) at (3.75, -0) {};
		\node [style=none] (5) at (-1, 1) {};
		\node [style=none] (6) at (-1, -0) {};
		\node [style=none] (7) at (3.75, -0.5) {};
		\node [style=none] (8) at (3, 0.75) {};
	\end{pgfonlayer}
	\begin{pgfonlayer}{edgelayer}
		\draw [style=simple] (1) to (0);
		\draw [style=simple] (0) to (4.center);
		\draw [style=simple] (1) to (2);
		\draw [style=simple] (2) to (3);
		\draw [style=simple, in=0, out=166, looseness=1.00] (2) to (5.center);
		\draw [style=simple] (6.center) to (3);
		\draw [style=simple, in=-30, out=-150, looseness=0.75] (0) to (3);
		\draw [style=simple, in=0, out=180, looseness=0.75] (7.center) to (8.center);
		\draw [style=simple, in=180, out=11, looseness=1.00] (1) to (8.center);
	\end{pgfonlayer}
\end{tikzpicture}
\sim
\begin{tikzpicture}
	\begin{pgfonlayer}{nodelayer}
		\node [style=map] (0) at (3, -0) {$g^\circ$};
		\node [style=X] (1) at (2, 0.5) {};
		\node [style=map] (2) at (1, -0) {$g$};
		\node [style=none] (3) at (4, -0) {};
		\node [style=none] (4) at (0.25, 1) {};
		\node [style=none] (5) at (0.25, -0) {};
		\node [style=none] (6) at (3.5, -0) {};
		\node [style=none] (7) at (4, -0.5) {};
	\end{pgfonlayer}
	\begin{pgfonlayer}{edgelayer}
		\draw [style=simple] (0) to (3.center);
		\draw [style=simple] (1) to (2);
		\draw [style=simple, in=0, out=166, looseness=1.00] (1) to (4.center);
		\draw [style=simple] (5.center) to (2);
		\draw [style=simple, in=-30, out=-150, looseness=0.75] (0) to (2);
		\draw [in=153, out=0, looseness=1.00] (1) to (0);
		\draw [style=dashed, in=-60, out=180, looseness=1.00] (7.center) to (6.center);
	\end{pgfonlayer}
\end{tikzpicture}
$$

However, since the functor $\X\to \tilde \X $ is faithful by Lemma \ref{lemma:xtildefaithful}, using the alternate equivalence relation of $\tilde \X$ by  Lemma \ref{theorem:cpstartheorem}, we have that in $\X$:

$$
\begin{tikzpicture}
	\begin{pgfonlayer}{nodelayer}
		\node [style=map] (0) at (3, -0) {$f^\circ$};
		\node [style=X] (1) at (2, 0.5) {};
		\node [style=map] (2) at (1, -0) {$f$};
		\node [style=none] (3) at (4, -0) {};
		\node [style=none] (4) at (0.25, 1) {};
		\node [style=none] (5) at (0.25, -0) {};
	\end{pgfonlayer}
	\begin{pgfonlayer}{edgelayer}
		\draw [style=simple] (0) to (3.center);
		\draw [style=simple] (1) to (2);
		\draw [style=simple, in=0, out=166, looseness=1.00] (1) to (4.center);
		\draw [style=simple] (5.center) to (2);
		\draw [style=simple, in=-30, out=-150, looseness=0.75] (0) to (2);
		\draw [in=153, out=0, looseness=1.00] (1) to (0);
	\end{pgfonlayer}
\end{tikzpicture}
=
\begin{tikzpicture}
	\begin{pgfonlayer}{nodelayer}
		\node [style=map] (0) at (3, -0) {$g^\circ$};
		\node [style=X] (1) at (2, 0.5) {};
		\node [style=map] (2) at (1, -0) {$g$};
		\node [style=none] (3) at (4, -0) {};
		\node [style=none] (4) at (0, 1) {};
		\node [style=none] (5) at (0, -0) {};
	\end{pgfonlayer}
	\begin{pgfonlayer}{edgelayer}
		\draw [style=simple] (0) to (3.center);
		\draw [style=simple] (1) to (2);
		\draw [style=simple, in=0, out=166, looseness=1.00] (1) to (4.center);
		\draw [style=simple] (5.center) to (2);
		\draw [style=simple, in=-30, out=-150, looseness=0.75] (0) to (2);
		\draw [in=153, out=0, looseness=1.00] (1) to (0);
	\end{pgfonlayer}
\end{tikzpicture}
\hspace*{.1cm}\text{and thus}\hspace*{.1cm}
\begin{tikzpicture}
	\begin{pgfonlayer}{nodelayer}
		\node [style=map] (0) at (1, -0) {$f$};
		\node [style=X] (1) at (2, 0.5) {};
		\node [style=map] (2) at (3, -0) {$f^\circ$};
		\node [style=none] (3) at (0, -0) {};
		\node [style=none] (4) at (4, 1) {};
		\node [style=none] (5) at (4, -0) {};
	\end{pgfonlayer}
	\begin{pgfonlayer}{edgelayer}
		\draw [style=simple] (0) to (3.center);
		\draw [style=simple] (1) to (2);
		\draw [style=simple, in=180, out=14, looseness=1.00] (1) to (4.center);
		\draw [style=simple] (5.center) to (2);
		\draw [style=simple, in=-150, out=-30, looseness=0.75] (0) to (2);
		\draw [in=27, out=180, looseness=1.00] (1) to (0);
	\end{pgfonlayer}
\end{tikzpicture}
=
\begin{tikzpicture}
	\begin{pgfonlayer}{nodelayer}
		\node [style=map] (0) at (1, -0) {$g$};
		\node [style=X] (1) at (2, 0.5) {};
		\node [style=map] (2) at (3, -0) {$g^\circ$};
		\node [style=none] (3) at (0, -0) {};
		\node [style=none] (4) at (3.75, 1) {};
		\node [style=none] (5) at (3.75, -0) {};
	\end{pgfonlayer}
	\begin{pgfonlayer}{edgelayer}
		\draw [style=simple] (0) to (3.center);
		\draw [style=simple] (1) to (2);
		\draw [style=simple, in=180, out=14, looseness=1.00] (1) to (4.center);
		\draw [style=simple] (5.center) to (2);
		\draw [style=simple, in=-150, out=-30, looseness=0.75] (0) to (2);
		\draw [in=27, out=180, looseness=1.00] (1) to (0);
	\end{pgfonlayer}
\end{tikzpicture}
$$

Therefore in $c(\X)$:

\begin{align*}
&
\begin{tikzpicture}
	\begin{pgfonlayer}{nodelayer}
		\node [style=map] (0) at (1, -0) {$f$};
		\node [style=X] (1) at (2, 0.5) {};
		\node [style=map] (2) at (3, -0) {$f^\circ$};
		\node [style=none] (3) at (0, -0) {};
		\node [style=none] (4) at (4, 1) {};
		\node [style=none] (5) at (3.75, -0) {};
		\node [style=X] (6) at (3.75, -0) {};
	\end{pgfonlayer}
	\begin{pgfonlayer}{edgelayer}
		\draw [style=simple] (0) to (3.center);
		\draw [style=simple] (1) to (2);
		\draw [style=simple, in=180, out=14, looseness=1.00] (1) to (4.center);
		\draw [style=simple] (5.center) to (2);
		\draw [style=simple, in=-150, out=-30, looseness=0.75] (0) to (2);
		\draw [in=27, out=180, looseness=1.00] (1) to (0);
	\end{pgfonlayer}
\end{tikzpicture}
\eq{Rem. \ref{cor:copy}}
\begin{tikzpicture}
	\begin{pgfonlayer}{nodelayer}
		\node [style=map] (0) at (1, -0) {$f$};
		\node [style=X] (1) at (2, 0.5) {};
		\node [style=map] (2) at (3, -0) {$\bar {f^\circ}$};
		\node [style=none] (3) at (0, -0) {};
		\node [style=none] (4) at (4, 1) {};
		\node [style=X] (5) at (4, 0.25) {};
		\node [style=X] (6) at (4, -0.25) {};
	\end{pgfonlayer}
	\begin{pgfonlayer}{edgelayer}
		\draw [style=simple] (0) to (3.center);
		\draw [style=simple] (1) to (2);
		\draw [style=simple, in=180, out=14, looseness=1.00] (1) to (4.center);
		\draw [style=simple, in=-150, out=-30, looseness=0.75] (0) to (2);
		\draw [in=27, out=180, looseness=1.00] (1) to (0);
		\draw [in=14, out=180, looseness=1.00] (5) to (2);
		\draw [in=180, out=-14, looseness=1.00] (2) to (6);
	\end{pgfonlayer}
\end{tikzpicture}
=
\begin{tikzpicture}
	\begin{pgfonlayer}{nodelayer}
		\node [style=map] (0) at (1, -0) {$f$};
		\node [style=X] (1) at (2, 0.5) {};
		\node [style=none] (2) at (0, -0) {};
		\node [style=none] (3) at (4, 1) {};
		\node [style=X] (4) at (4, 0.25) {};
		\node [style=X] (5) at (4, -0.25) {};
		\node [style=map] (6) at (3.25, -0.5) {$\bar {f^\circ}$};
		\node [style=X] (7) at (4.75, 0.25) {};
		\node [style=X] (8) at (4.75, -0.25) {};
		\node [style=X] (9) at (2.5, 0.25) {};
		\node [style=X] (10) at (2.5, -0.25) {};
	\end{pgfonlayer}
	\begin{pgfonlayer}{edgelayer}
		\draw [style=simple] (0) to (2.center);
		\draw [style=simple, in=180, out=14, looseness=1.00] (1) to (3.center);
		\draw [in=27, out=180, looseness=1.00] (1) to (0);
		\draw [in=150, out=30, looseness=1.25] (9) to (4);
		\draw [in=150, out=30, looseness=1.25] (10) to (5);
		\draw [in=-165, out=-18, looseness=1.00] (10) to (6);
		\draw [in=150, out=-45, looseness=1.00] (9) to (6);
		\draw [in=-135, out=30, looseness=1.00] (6) to (4);
		\draw [in=-15, out=-162, looseness=1.00] (5) to (6);
		\draw (5) to (8);
		\draw (7) to (4);
		\draw (9) to (1);
		\draw [in=-30, out=180, looseness=1.00] (10) to (0);
	\end{pgfonlayer}
\end{tikzpicture}\\
&=
\begin{tikzpicture}
	\begin{pgfonlayer}{nodelayer}
		\node [style=map] (0) at (1.5, -0) {$f$};
		\node [style=none] (1) at (1, -0) {};
		\node [style=none] (2) at (5.75, 1) {};
		\node [style=X] (3) at (4, 0.25) {};
		\node [style=X] (4) at (4, -0.25) {};
		\node [style=map] (5) at (3.25, -0.5) {$\bar {f^\circ}$};
		\node [style=X] (6) at (5.5, 0.25) {};
		\node [style=X] (7) at (5.5, -0.25) {};
		\node [style=X] (8) at (2.5, 0.25) {};
		\node [style=X] (9) at (2.5, -0.25) {};
		\node [style=X] (10) at (4.75, 0.25) {};
	\end{pgfonlayer}
	\begin{pgfonlayer}{edgelayer}
		\draw [style=simple] (0) to (1.center);
		\draw [in=150, out=30, looseness=1.25] (8) to (3);
		\draw [in=150, out=30, looseness=1.25] (9) to (4);
		\draw [in=-165, out=-18, looseness=1.00] (9) to (5);
		\draw [in=150, out=-45, looseness=1.00] (8) to (5);
		\draw [in=-135, out=30, looseness=1.00] (5) to (3);
		\draw [in=-15, out=-162, looseness=1.00] (4) to (5);
		\draw (4) to (7);
		\draw (6) to (3);
		\draw [in=-30, out=180, looseness=1.00] (9) to (0);
		\draw [in=37, out=180, looseness=1.00] (2.center) to (10);
		\draw [in=30, out=180, looseness=1.00] (8) to (0);
	\end{pgfonlayer}
\end{tikzpicture}
=
\begin{tikzpicture}
	\begin{pgfonlayer}{nodelayer}
		\node [style=map] (0) at (1.5, -0) {$f$};
		\node [style=none] (1) at (1, -0) {};
		\node [style=none] (2) at (4, 1) {};
		\node [style=map] (3) at (2.25, -0) {$\bar {f^\circ}$};
		\node [style=X] (4) at (3.75, 0.25) {};
		\node [style=X] (5) at (3.75, -0.25) {};
		\node [style=X] (6) at (3, 0.25) {};
	\end{pgfonlayer}
	\begin{pgfonlayer}{edgelayer}
		\draw [style=simple] (0) to (1.center);
		\draw [in=37, out=180, looseness=1.00] (2.center) to (6);
		\draw [in=30, out=150, looseness=1.25] (3) to (0);
		\draw [in=-165, out=-15, looseness=1.25] (0) to (3);
		\draw [in=180, out=30, looseness=1.00] (3) to (6);
		\draw [in=180, out=-15, looseness=1.00] (3) to (5);
		\draw (4) to (6);
	\end{pgfonlayer}
\end{tikzpicture}
=
\begin{tikzpicture}
	\begin{pgfonlayer}{nodelayer}
		\node [style=none] (0) at (1.5, -0) {};
		\node [style=none] (1) at (4, 1) {};
		\node [style=X] (2) at (3.75, -0.25) {};
		\node [style=X] (3) at (3, 0.25) {};
		\node [style=map] (4) at (2, -0) {$f$};
		\node [style=X] (5) at (3.75, 0.25) {};
	\end{pgfonlayer}
	\begin{pgfonlayer}{edgelayer}
		\draw [in=37, out=180, looseness=1.00] (1.center) to (3);
		\draw [style=simple] (4) to (0.center);
		\draw (5) to (3);
		\draw [in=-15, out=180, looseness=1.00] (2) to (4);
		\draw [in=180, out=30, looseness=1.00] (4) to (3);
	\end{pgfonlayer}
\end{tikzpicture}
=
\begin{tikzpicture}
	\begin{pgfonlayer}{nodelayer}
		\node [style=none] (0) at (1.5, -0) {};
		\node [style=X] (1) at (3, -0.25) {};
		\node [style=map] (2) at (2, -0) {$f$};
		\node [style=none] (3) at (3, 0.25) {};
		\node [style=none] (4) at (3.5, 0.25) {};
	\end{pgfonlayer}
	\begin{pgfonlayer}{edgelayer}
		\draw [style=simple] (2) to (0.center);
		\draw [in=-15, out=180, looseness=1.00] (1) to (2);
		\draw (4.center) to (3.center);
		\draw [in=14, out=180, looseness=1.00] (3.center) to (2);
	\end{pgfonlayer}
\end{tikzpicture}
\end{align*}

So that combining the previous two equations:

\begin{align*}
\begin{tikzpicture}
	\begin{pgfonlayer}{nodelayer}
		\node [style=none] (0) at (1.5, -0) {};
		\node [style=X] (1) at (3, -0.25) {};
		\node [style=map] (2) at (2, -0) {$f$};
		\node [style=none] (3) at (3, 0.25) {};
		\node [style=none] (4) at (3.5, 0.25) {};
	\end{pgfonlayer}
	\begin{pgfonlayer}{edgelayer}
		\draw [style=simple] (2) to (0.center);
		\draw [in=-15, out=180, looseness=1.00] (1) to (2);
		\draw (4.center) to (3.center);
		\draw [in=14, out=180, looseness=1.00] (3.center) to (2);
	\end{pgfonlayer}
\end{tikzpicture}
=
\begin{tikzpicture}
	\begin{pgfonlayer}{nodelayer}
		\node [style=map] (0) at (1, -0) {$f$};
		\node [style=X] (1) at (2, 0.5) {};
		\node [style=map] (2) at (3, -0) {$f^\circ$};
		\node [style=none] (3) at (0, -0) {};
		\node [style=none] (4) at (4, 1) {};
		\node [style=none] (5) at (3.75, -0) {};
		\node [style=X] (6) at (3.75, -0) {};
	\end{pgfonlayer}
	\begin{pgfonlayer}{edgelayer}
		\draw [style=simple] (0) to (3.center);
		\draw [style=simple] (1) to (2);
		\draw [style=simple, in=180, out=14, looseness=1.00] (1) to (4.center);
		\draw [style=simple] (5.center) to (2);
		\draw [style=simple, in=-150, out=-30, looseness=0.75] (0) to (2);
		\draw [in=27, out=180, looseness=1.00] (1) to (0);
	\end{pgfonlayer}
\end{tikzpicture}
=
\begin{tikzpicture}
	\begin{pgfonlayer}{nodelayer}
		\node [style=map] (0) at (1, -0) {$g$};
		\node [style=X] (1) at (2, 0.5) {};
		\node [style=map] (2) at (3, -0) {$g^\circ$};
		\node [style=none] (3) at (0, -0) {};
		\node [style=none] (4) at (4, 1) {};
		\node [style=none] (5) at (3.75, -0) {};
		\node [style=X] (6) at (3.75, -0) {};
	\end{pgfonlayer}
	\begin{pgfonlayer}{edgelayer}
		\draw [style=simple] (0) to (3.center);
		\draw [style=simple] (1) to (2);
		\draw [style=simple, in=180, out=14, looseness=1.00] (1) to (4.center);
		\draw [style=simple] (5.center) to (2);
		\draw [style=simple, in=-150, out=-30, looseness=0.75] (0) to (2);
		\draw [in=27, out=180, looseness=1.00] (1) to (0);
	\end{pgfonlayer}
\end{tikzpicture}
=
\begin{tikzpicture}
	\begin{pgfonlayer}{nodelayer}
		\node [style=none] (0) at (1.5, -0) {};
		\node [style=X] (1) at (3, -0.25) {};
		\node [style=map] (2) at (2, -0) {$g$};
		\node [style=none] (3) at (3, 0.25) {};
		\node [style=none] (4) at (3.5, 0.25) {};
	\end{pgfonlayer}
	\begin{pgfonlayer}{edgelayer}
		\draw [style=simple] (2) to (0.center);
		\draw [in=-15, out=180, looseness=1.00] (1) to (2);
		\draw (4.center) to (3.center);
		\draw [in=14, out=180, looseness=1.00] (3.center) to (2);
	\end{pgfonlayer}
\end{tikzpicture}
\end{align*}

\end{proof}
%
%
%
%
%

\section{Proof of Lemma \ref{lemma:unitcounit}}
\label{proof:unitcounit}

Recall the statement of the Lemma:

\textbf{Lemma \ref{lemma:unitcounit}:}
\textit{
 The full subcategory of $\Span^\sim(\FinOrd)$ generated by powers of 2 is presented by the pushout,  $\hat \TOF$, of the following diagram of props:
}

$$c(\TOF)^\op \leftarrow \TOF \rightarrow c(\TOF)$$

\newcommand{\FPar}{\mathsf{FPar}}
\newcommand{\FSpan}{\mathsf{FSpan}}

\begin{proof}
Recall that $\TOF$ is presented by the subcategory $\FPinj_2$ of $(\Span^\sim (\FinOrd),\times)$ with morphisms of the form $ 2^n \xleftarrowtail{e} k \xrightarrowtail{e'} 2^m$ for arbitrary natural numbers $n,m,k$ and monics $e$ and $e'$.

Similarly, $\tilde \TOF$ is presented by the subcategory $\FPar_2$ of  $(\Span^\sim (\FinOrd),\times)$ with morphisms of the form $2^\ell \xleftarrow{f} 2^n \xleftarrowtail{e} k \xrightarrowtail{e'} 2^m$ for arbitrary natural numbers $\ell, n,m,k$ and monics $e$ and $e'$ and function $f$.
Let $\FSpan_2$ denote the full subcategory of $(\Span^\sim(\FinOrd),\times)$ generated by powers of two.
Consider the pushout $\X$ of the following diagram of props:

$$\FPar_2^\op \xleftarrowtail{}  \FPinj_2 \xrightarrowtail{} \FPar_2$$

Consider the functor $F:\X\to\FSpan_2$ induced by the universal property of the pushout.  We show that this functor is an isomorphism.
This functor is clearly the identity on objects.

For fullness consider some span $2^n \xleftarrow{f} k \xrightarrow{g} 2^m$. We can construct a function $f':2^{\lceil \log_2 k \rceil} \rightarrow 2^n$ and monic $e_f: k \xrightarrowtail{} 2^{\lceil \log_2 k \rceil}$ so that $f=ef'$.  Similarly, we can construct some  $g':2^{\lceil \log_2 k \rceil} \rightarrow 2^n$ and monic $e_g: k \xrightarrowtail{} 2^{\lceil \log_2 k \rceil}$ so that $g=e_gg'$.  Therefore:

\begin{align*}
F&\left(
\xymatrix{
         & 2^{\lceil \log_2 k \rceil} \ar[dl]_{f'} \ar@{=}[dr]\\
2^n &                                                                                 &2^{\lceil \log_2 k \rceil}
};
\xymatrix{
         & k \ar@{>->}[dl]_{e_f} \ar@{>->}[dr]^{e_m}\\
2^{\lceil \log_2 k \rceil} &                                                                                 & 2^{\lceil \log_2 k \rceil}
};
\xymatrix{
                                       & 2^{\lceil \log_2 k \rceil} \ar[dr]^{g'} \ar@{=}[dl]\\
2^{\lceil \log_2 k \rceil} &                                                                                 & 2^m
}
\right)\\
&=
\xymatrix{
         &                                                                               &                                             &  k \ar@{=}[dl] \ar@{=}[dr] \ar@/_2.0pc/[dddlll]_{f} \ar@/^2.0pc/[dddrrr]^{g}  \\
         &                                                                               &k \ar@{=}[dr] \ar@{>->}[dl]_{e_f} &                                                & k \ar@{=}[dl] \ar@{>->}[dr]^{e_g}\\
         & 2^{\lceil \log_2 k \rceil}\ar[dl]^{f'}\ar@{=}[dr]   &                                             & k \ar@{>->}[dl]_{e_f} \ar@{>->}[dr]^{e_g} &                                       & 2^{\lceil \log_2 k \rceil}\ar[dr]_{g'}\ar@{=}[dl] \\
2^n  &                                                                                & 2^{\lceil \log_2 k \rceil}     &                                                & 2^{\lceil \log_2 k \rceil} &                   & 2^m
}
\end{align*}

So $F$ is full.

For faithfulness suppose we have any two isomorphic spans in $F(\X)$:

$$
\xymatrix{
                 &                                       & k \ar@{>->}[dl]_{e_1}\ar@{->}[dddd]_\cong^{\alpha} \ar@{>->}[dr]^{e_2} \\ 
                 & 2^{n_2} \ar[dl]_{f_1}   &                                                                              & 2^{n_3} \ar[dr]^{f_2}\\ 
2^{n_1}   &                                       &                                                                              &                 & 2^{n_4}\\
                 & 2^{n_2'} \ar[ul]^{f_1'} &                                                                              & 2^{n_3'} \ar[ur]_{f_2'}\\ 
                 &                                       & k \ar@{>->}[ul]^{e_1'} \ar@{>->}[ur]_{e_2'} \\ 
}
$$

In $\X$, we have:

\begin{align*}
\xymatrix{
                & 2^{n_2} \ar[dl]_{f_1} \ar@{=}[dr] \\
2^{n_1} &                                                             & 2^{n_2}
};&
\xymatrix{
               & k \ar@{>->}[dl]_{e_1} \ar@{>->}[dr]^{e_2}\\
2^{n_2} &                                               & 2^{n_3}
};
\xymatrix{
                & 2^{n_3} \ar@{=}[dl] \ar[dr]^{f_2} \\
2^{n_3} &                                                             & 2^{n_4}
}\\
&=
\xymatrix{
                 &                                                           & k \ar@{>->}[dl]_{e_1} \ar@{=}[dr]  \ar@/_2.0pc/[ddll]_{\alpha e_1' f_1'}\\
                & 2^{n_2} \ar[dl]_{f_1} \ar@{=}[dr]   &                         & k \ar@{>->}[dl]_{e_1 } \ar@{>->}[dr]^{e_2}  \\
2^{n_1} &                                                             & 2^{n_2}          &                                                 & 2^{n_3}
};
\xymatrix{
                & 2^{n_3} \ar@{=}[dl] \ar[dr]^{f_2} \\
2^{n_3} &                                                             & 2^{n_4}
}\\
&=
\xymatrix{
                 &                                                           & k \ar@{>->}[dl]_{\alpha e_1'} \ar@{>->}[ddrr]^{e_2} \\
                & 2^{n_2'} \ar[dl]_{f_1'}                        &                         &  \\
2^{n_1} &                                                             &           &                                                 & 2^{n_3}
};
\xymatrix{
                & 2^{n_3} \ar@{=}[dl] \ar[dr]^{f_2} \\
2^{n_3} &                                                             & 2^{n_4}
}\\
&=
\xymatrix{
                & 2^{n_2'} \ar[dl]_{f_1'} \ar@{=}[dr] \\
2^{n_1} &                                                             & 2^{n_2'}
};
\xymatrix{
               & k \ar@{>->}[dl]_{\alpha e_1'} \ar@{>->}[dr]^{e_2}\\
2^{n_2'} &                                               & 2^{n_3}
};
\xymatrix{
                & 2^{n_3} \ar@{=}[dl] \ar[dr]^{f_2} \\
2^{n_3} &                                                             & 2^{n_4}
}\\
&=
\xymatrix{
                & 2^{n_2'} \ar[dl]_{f_1'} \ar@{=}[dr] \\
2^{n_1} &                                                             & 2^{n_2'}
};
\xymatrix{
                                       & k \ar@{>->}[dl]_{\alpha e_1'} \ar@{>->}[dr]^{\alpha e_2'} \ar[dd]_\cong^\alpha\\
2^{n_2'} &                                                                         & 2^{n_3'} \\
                                       & k  \ar@{>->}[ul]^{e_1'} \ar@{>->}[ur]_{e_2'}
};
\xymatrix{
                & 2^{n_3'} \ar@{=}[dl] \ar[dr]^{f_2} \\
2^{n_3'} &                                                             & 2^{n_4}
}\\
&=
\xymatrix{
                & 2^{n_2'} \ar[dl]_{f_1'} \ar@{=}[dr] \\
2^{n_1} &                                                             & 2^{n_2'}
};
\xymatrix{
               & k \ar@{>->}[dl]_{ e_1'} \ar@{>->}[dr]^{ e_2'}\\
2^{n_2'} &                                               & 2^{n_3'}
};
\xymatrix{
                & 2^{n_3'} \ar@{=}[dl] \ar[dr]^{f_2} \\
2^{n_3'} &                                                             & 2^{n_4}
}
\end{align*}

Therefore $\FSpan_2 \cong \X$.

Two show that $\hat \TOF \cong \FSpan_2$, consider the following diagram where each horizontal face is a pushout:

$$
\xymatrixrowsep{6mm}\xymatrixcolsep{4mm}
\xymatrix{
                                       & {(\FPinj_2,\times)} \ar[dl] \ar@/^.5pc/[rr] \ar@{=}[d]  &                                                  & (\FPar_2,\times) \ar[d]^{\cong} \ar[dl] \\
 (\FPar_2,\times)^\op \ar@/_1pc/[rr]  \ar[d]_{\cong}           &                   {(\FPinj_2,\times)}\ar[dl] \ar@/^.5pc/[rr]    \ar[d]^\cong                                                                       & (\FSpan_2,\times)    \ar@{-->}[d]^(.35){\cong}    & \tilde{(\FPinj_2,\times)} \ar[dl]       \ar[d]^\cong       \\
\tilde{(\FPinj_2,\times)}^\op \ar@/_1pc/[rr]            \ar[d]_{\cong}                               &      \TOF \ar[dl] \ar@/^.5pc/[rr]  \ar@{=}[d]       &                                  \ar@{-->}[d]^(.35){\cong}             & \tilde \TOF  \ar[d]_{\cong} \ar[dl]\\
\tilde{\TOF}^\op \ar@/_1pc/[rr]   \ar[d]_{\cong}   &                  \TOF \ar[dl] \ar@/^.5pc/[rr]                                                                      &  \ar@{-->}[d]^(.35){\cong}  & c(\TOF)  \ar[dl]\\
c(\TOF)^\op        \ar@/_1pc/[rr]                          &                                                                                             &          \hat\TOF &                        &            \\
}
$$

All of the rear and left faces commute. Moreover, the vertical maps are isomorphisms, therefore the maps induced by universal property of the pushout are isomorphisms.

\end{proof}

\section{Identities of \texorpdfstring{$\TOF$}{TOF}}
\label{sec:tof}

Define the category $\TOF$ \cite{tof} to be the PROP, generated by the 1 ancillary bits $| 1\rangle$ and $\langle 1|$ as well as the Toffoli gate, satisfying the identities given in Figure \ref{fig:TOF}.

\begin{figure}[h]
\noindent
\scalebox{1.0}{%
\vbox{%
\begin{mdframed}
\begin{multicols}{2}
\begin{enumerate}[label={\bf [TOF.\arabic*]}, ref={\bf [TOF.\arabic*]}, wide = 0pt, leftmargin = 2em]
\item
\label{TOF.1}
{\hfil
$
  \]

One can moreover construct generalized controlled not gates with arbitrarily many control wires in the obvious way.  Let $[x,X]$ denote a generalized Toffoli gate acting on the $x$th wire, controlled on the wires indexed by a set $X$. Then we can partially commute generalized controlled-not gates:

\begin{lemma} \cite[Lem. 7.2.6]{Cole}
\label{lemma:Iwama}
Let $[x,X]$ and $[y,Y]$ be generalized controlled not gates in $\TOF$ where $x\notin Y$.  We can perform the identities of Iwama et al. \cite{Iwama}, to commute them past each other with a trailing generalized controlled not gate as a side effect:
$$
 [y,{X\cup Y}] [y,{Y\sqcup\{x\}}] [x,X]
$$
\end{lemma}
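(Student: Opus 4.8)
The plan is to descend to the semantics and verify the identity as an equality of bit-string permutations, then lift it back by faithfulness. By the theorem identifying $\TOF$ with the category of partial isomorphisms between the ordinals $2^n$, a generalized controlled-not $[x,X]$ on $n$ wires is interpreted as the involution of $\{0,1\}^n$ that flips the $x$-th coordinate of a string $w$ exactly when the control product $c_X := \prod_{i\in X} w_i$ equals $1$, and fixes every other coordinate. Each $[x,X]$ is thus a total bijection, hence a partial isomorphism, and since the identification $\TOF \cong \FPinj_2$ is an isomorphism of categories it is in particular faithful. It therefore suffices to show that the displayed three-gate product and the commuted pair $[x,X]\,[y,\,Y\sqcup\{x\}]$ induce the same permutation of $\{0,1\}^n$; I will read the displayed expression as the right-hand side of $[x,X]\,[y,\,Y\sqcup\{x\}] = [y,\,X\cup Y]\,[y,\,Y\sqcup\{x\}]\,[x,X]$, which is exactly the content of pushing the $x$-flip past a $y$-flip that is controlled on $x$.

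First I would evaluate both sides on an arbitrary input $w$, tracking only the two target coordinates $x$ and $y$, since no gate on either side alters any control bit: here the hypothesis $x\notin Y$ is essential, as it guarantees that the control product $c_Y := \prod_{i\in Y}w_i$ never witnesses the flip of the $x$-coordinate. Reading left to right in diagrammatic order, the left-hand product sends $x \mapsto x\oplus c_X$ and then $y \mapsto y\oplus c_Y\,(x\oplus c_X)$, while the right-hand product sends $y \mapsto y\oplus c_{X\cup Y}\oplus c_Y\,x$ with $x\mapsto x\oplus c_X$ performed last. The two $x$-coordinates visibly coincide.

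The agreement of the $y$-coordinates rests on the single Boolean observation that the control bits are idempotent: since each $w_i\in\{0,1\}$ satisfies $w_i^2 = w_i$, we have $c_{X\cup Y} = \prod_{i\in X\cup Y}w_i = \big(\prod_{i\in X}w_i\big)\big(\prod_{i\in Y}w_i\big) = c_X\,c_Y$. Substituting this into the right-hand value rewrites $y\oplus c_{X\cup Y}\oplus c_Y x$ as $y\oplus c_Y x\oplus c_X c_Y = y\oplus c_Y(x\oplus c_X)$, which is exactly the left-hand value. As the two permutations agree on every $w$, faithfulness of the interpretation yields the equation in $\TOF$.

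I expect the main obstacle to be purely one of bookkeeping: keeping straight the diagrammatic composition order and, for each gate, whether its control bits are read before or after a target coordinate has been toggled. The hypothesis $x\notin Y$ (together with the standing well-formedness conditions $x\notin X$, $y\notin X\cup Y$, $x\neq y$ that make all three gates legitimate generalized Toffolis) is precisely what ensures that every control product on the right is evaluated at the original bit values, so that no spurious cross-terms survive; I would make this explicit rather than rely on the picture. A fully diagrammatic alternative would induct on $|X|$, expanding each generalized controlled-not into ordinary Toffoli gates with a borrowed ancilla as in the derived gates preceding Figure~\ref{fig:TOF} and commuting them with \ref{TOF.10}--\ref{TOF.13}; this remains inside the calculus but inflates the case analysis, so I would reach for it only if an ancilla-free, purely equational derivation were demanded.
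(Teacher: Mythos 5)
Your proof is correct, but it takes a genuinely different route from the one the paper points to. The paper offers no proof of its own: it defers to \cite[Lem.~7.2.6]{Cole}, where the equation is established \emph{inside the calculus} by the moving rules of Iwama et al.\ \cite{Iwama}, as reflected in the axioms \ref{TOF.9}--\ref{TOF.13} together with the ancilla expansion of generalized gates — essentially the inductive alternative you relegate to your closing paragraph. You instead interpret the gates as permutations of $\{0,1\}^n$, check the identity $[x,X]\,[y,Y\sqcup\{x\}]=[y,X\cup Y]\,[y,Y\sqcup\{x\}]\,[x,X]$ pointwise (your reading of the terse statement is the right one, and consistent with the ``trailing gate'' phrasing), and lift the equality back through the isomorphism $\TOF\cong\FPinj_2$. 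The Boolean bookkeeping is sound: only the coordinates $x$ and $y$ ever change, the side conditions $x\notin Y$, $x\notin X$, $y\notin X\cup Y$ ensure each control product on the right is evaluated on unflipped bits, and $c_{X\cup Y}=c_X c_Y$ by idempotence closes the computation. The trade-off is worth naming explicitly: your argument is a two-line verification, but it spends the completeness theorem for $\TOF$ as a black box, whereas in the cited source this lemma is part of the machinery used to \emph{prove} that very theorem, so there your argument would be circular — the syntactic derivation is what makes the original development go through. Within the present paper, where $\TOF\cong\FPinj_2$ is quoted as an established theorem independently of the lemma, your semantic proof is a legitimate, and considerably shorter, substitute.
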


In $\TOF$, one can define the diagonal map as follows:
$$

\right\rrbracket_{\hat{\TOF}}
\end{align*}
\endgroup

\end{enumerate}
Where unitality and counitality follow from the fact that the white spiders are Frobenius algebras.  Also, we must also note that both Frobenius algebras induce the same compact closed structure, as is implied by the spider law;  this is immediate.

\end{proof}

\subsection{Proof of Theorem \ref{theorem:TOFZXAiso}}
\label{proof:TOFZXAiso}

\textbf{Theorem \ref{theorem:TOFZXAiso}} \textit{The interpretations $\llbracket\_\rrbracket_{\ZXA}$ and $\llbracket\_\rrbracket_{\hat \TOF}$ are inverses, so that $\hat \TOF$ and $\ZXA$ are isomorphic as strongly compact closed props.}

\begin{proof}
First we show that $\llbracket\llbracket\_\rrbracket_{\ZXA}\rrbracket_{\hat \TOF}=1$:
\begin{description}
\item[For the white spider:]
The case for the unit and counit is trivial.  For the (co)multiplication we have:
\begin{align*}
\left\llbracket\left\llbracket
\begin{tikzpicture}
	\begin{pgfonlayer}{nodelayer}
		\node [style=X] (0) at (5, -0) {};
		\node [style=none] (1) at (6, 0.5) {};
		\node [style=none] (2) at (6, -0.5) {};
		\node [style=none] (3) at (4, -0) {};
	\end{pgfonlayer}
	\begin{pgfonlayer}{edgelayer}
		\draw [in=-27, out=180, looseness=1.00] (2.center) to (0);
		\draw [in=180, out=27, looseness=1.00] (0) to (1.center);
		\draw (0) to (3.center);
	\end{pgfonlayer}
\end{tikzpicture}
\right\rrbracket_{\ZXA}\right\rrbracket_{\hat \TOF}
&=
\left\llbracket
\begin{tikzpicture}
	\begin{pgfonlayer}{nodelayer}
		\node [style=none] (0) at (5.5, 0.5) {};
		\node [style=none] (1) at (5.5, -0) {};
		\node [style=none] (2) at (4.25, 0.5) {};
		\node [style=dot] (3) at (5, 0.5) {};
		\node [style=oplus] (4) at (5, -0) {};
		\node [style=zeroin] (5) at (4.5, -0) {};
	\end{pgfonlayer}
	\begin{pgfonlayer}{edgelayer}
		\draw (1.center) to (4);
		\draw (4) to (5);
		\draw (4) to (3);
		\draw (3) to (0.center);
		\draw (3) to (2.center);
	\end{pgfonlayer}
\end{tikzpicture}
\right\rrbracket_{\hat \TOF}
=
\begin{tikzpicture}
	\begin{pgfonlayer}{nodelayer}
		\node [style=andin] (0) at (2.5, 1.5) {};
		\node [style=X] (1) at (1.5, 2) {};
		\node [style=X] (2) at (1.5, 1) {};
		\node [style=Z] (3) at (3, 0.5) {};
		\node [style=none] (4) at (2.5, 1.5) {};
		\node [style=Z] (5) at (0.5, 2) {$\pi$};
		\node [style=Z] (6) at (2.5, 2) {$\pi$};
		\node [style=none] (7) at (3.75, 0.5) {};
		\node [style=none] (8) at (3.75, 1) {};
		\node [style=none] (9) at (0.5, 1) {};
		\node [style=Z] (10) at (1.5, 0.5) {};
	\end{pgfonlayer}
	\begin{pgfonlayer}{edgelayer}
		\draw [in=0, out=90, looseness=1.00] (3) to (4.center);
		\draw (4.center) to (1);
		\draw (2) to (4.center);
		\draw (6) to (1);
		\draw (1) to (5);
		\draw (8.center) to (2);
		\draw (2) to (9.center);
		\draw (7.center) to (3);
		\draw (3) to (10);
	\end{pgfonlayer}
\end{tikzpicture}
=
\begin{tikzpicture}
	\begin{pgfonlayer}{nodelayer}
		\node [style=X] (0) at (2.25, 1) {};
		\node [style=Z] (1) at (2.25, 0.5) {};
		\node [style=none] (2) at (3, 0.5) {};
		\node [style=none] (3) at (3, 1) {};
		\node [style=none] (4) at (1.25, 1) {};
		\node [style=Z] (5) at (1.5, 0.5) {};
	\end{pgfonlayer}
	\begin{pgfonlayer}{edgelayer}
		\draw (3.center) to (0);
		\draw (0) to (4.center);
		\draw (2.center) to (1);
		\draw (1) to (5);
		\draw (1) to (0);
	\end{pgfonlayer}
\end{tikzpicture}
=
\begin{tikzpicture}
	\begin{pgfonlayer}{nodelayer}
		\node [style=X] (0) at (4.75, 0) {};
		\node [style=none] (1) at (5.75, 0.5) {};
		\node [style=none] (2) at (5.75, -0.5) {};
		\node [style=none] (3) at (4, 0) {};
	\end{pgfonlayer}
	\begin{pgfonlayer}{edgelayer}
		\draw [in=-27, out=180] (2.center) to (0);
		\draw [in=180, out=27] (0) to (1.center);
		\draw (0) to (3.center);
	\end{pgfonlayer}
\end{tikzpicture}
\end{align*}

\item[For the grey spider:]
The cases for the unit, counit and $\pi$ phase are trivial.  For the (co) multiplication we have:

\begin{align*}
\left\llbracket\left\llbracket
\begin{tikzpicture}
	\begin{pgfonlayer}{nodelayer}
		\node [style=Z] (0) at (5, -0) {};
		\node [style=none] (1) at (5.75, 0.5) {};
		\node [style=none] (2) at (5.75, -0.5) {};
		\node [style=none] (3) at (4.5, -0) {};
	\end{pgfonlayer}
	\begin{pgfonlayer}{edgelayer}
		\draw [in=-27, out=180, looseness=1.00] (2.center) to (0);
		\draw [in=180, out=27, looseness=1.00] (0) to (1.center);
		\draw (0) to (3.center);
	\end{pgfonlayer}
\end{tikzpicture}
\right\rrbracket_{\ZXA}\right\rrbracket_{\hat \TOF}
&=
\left\llbracket
\begin{tikzpicture}
	\begin{pgfonlayer}{nodelayer}
		\node [style=none] (0) at (5.5, 0.5) {};
		\node [style=none] (1) at (5.5, -0) {};
		\node [style=none] (2) at (4.25, 0.5) {};
		\node [style=dot] (3) at (5, -0) {};
		\node [style=oplus] (4) at (5, 0.5) {};
		\node [style=X] (5) at (4.5, -0) {};
	\end{pgfonlayer}
	\begin{pgfonlayer}{edgelayer}
		\draw (4) to (3);
		\draw (3) to (1.center);
		\draw (5) to (3);
		\draw (4) to (0.center);
		\draw (4) to (2.center);
	\end{pgfonlayer}
\end{tikzpicture}
\right\rrbracket_{\hat \TOF}
=
 \begin{tikzpicture}
	\begin{pgfonlayer}{nodelayer}
		\node [style=X] (0) at (7, 2) {};
		\node [style=X] (1) at (7, 1) {};
		\node [style=none] (2) at (7.75, 1.5) {};
		\node [style=Z] (3) at (8.5, 2.5) {};
		\node [style=Z] (4) at (6, 1) {$\pi$};
		\node [style=Z] (5) at (8.5, 1) {$\pi$};
		\node [style=X] (6) at (6, 2) {};
		\node [style=none] (7) at (5.75, 2.5) {};
		\node [style=none] (8) at (9, 2.5) {};
		\node [style=none] (9) at (9, 2) {};
		\node [style=andin] (10) at (7.75, 1.5) {};
	\end{pgfonlayer}
	\begin{pgfonlayer}{edgelayer}
		\draw (4) to (1);
		\draw (1) to (2.center);
		\draw (5) to (1);
		\draw (2.center) to (0);
		\draw (2.center) to (3);
		\draw (9.center) to (0);
		\draw (0) to (6);
		\draw (7.center) to (3);
		\draw (3) to (8.center);
	\end{pgfonlayer}
\end{tikzpicture}
=
\begin{tikzpicture}
	\begin{pgfonlayer}{nodelayer}
		\node [style=X] (0) at (7, 2) {};
		\node [style=Z] (1) at (7, 2.5) {};
		\node [style=X] (2) at (6.5, 2) {};
		\node [style=none] (3) at (6.25, 2.5) {};
		\node [style=none] (4) at (7.5, 2.5) {};
		\node [style=none] (5) at (7.5, 2) {};
	\end{pgfonlayer}
	\begin{pgfonlayer}{edgelayer}
		\draw (5.center) to (0);
		\draw (0) to (2);
		\draw (3.center) to (1);
		\draw (1) to (4.center);
		\draw (0) to (1);
	\end{pgfonlayer}
\end{tikzpicture}
=
\begin{tikzpicture}
	\begin{pgfonlayer}{nodelayer}
		\node [style=Z] (0) at (5, -0) {};
		\node [style=none] (1) at (5.75, 0.5) {};
		\node [style=none] (2) at (5.75, -0.5) {};
		\node [style=none] (3) at (4.5, -0) {};
	\end{pgfonlayer}
	\begin{pgfonlayer}{edgelayer}
		\draw [in=-27, out=180, looseness=1.00] (2.center) to (0);
		\draw [in=180, out=27, looseness=1.00] (0) to (1.center);
		\draw (0) to (3.center);
	\end{pgfonlayer}
\end{tikzpicture}
\end{align*}

\item[For the {\sf and} gate:]
\begin{align*}
\left\llbracket\left\llbracket
\begin{tikzpicture}
	\begin{pgfonlayer}{nodelayer}
		\node [style=none] (0) at (5, -0) {};
		\node [style=none] (1) at (4, 0.5) {};
		\node [style=none] (2) at (4, -0.5) {};
		\node [style=none] (3) at (6, -0) {};
		\node [style=andin] (4) at (5, -0) {};
	\end{pgfonlayer}
	\begin{pgfonlayer}{edgelayer}
		\draw [in=-153, out=0, looseness=1.00] (2.center) to (0.center);
		\draw [in=0, out=153, looseness=1.00] (0.center) to (1.center);
		\draw (0.center) to (3.center);
	\end{pgfonlayer}
\end{tikzpicture}
\right\rrbracket_{\ZXA}\right\rrbracket_{\hat \TOF}
&=
\left\llbracket
\begin{tikzpicture}
	\begin{pgfonlayer}{nodelayer}
		\node [style=dot] (0) at (6, 2) {};
		\node [style=dot] (1) at (6, 1.5) {};
		\node [style=oplus] (2) at (6, 1) {};
		\node [style=zeroin] (3) at (5.5, 1) {};
		\node [style=X] (4) at (6.5, 2) {};
		\node [style=X] (5) at (6.5, 1.5) {};
		\node [style=none] (6) at (6.75, 1) {};
		\node [style=none] (7) at (5.25, 2) {};
		\node [style=none] (8) at (5.25, 1.5) {};
	\end{pgfonlayer}
	\begin{pgfonlayer}{edgelayer}
		\draw (5) to (1);
		\draw (0) to (1);
		\draw (1) to (2);
		\draw (2) to (3);
		\draw (2) to (6.center);
		\draw (4) to (0);
		\draw (0) to (7.center);
		\draw (8.center) to (1);
	\end{pgfonlayer}
\end{tikzpicture}
\right\rrbracket_{\hat \TOF}
=
\begin{tikzpicture}
	\begin{pgfonlayer}{nodelayer}
		\node [style=X] (0) at (2, 2) {};
		\node [style=X] (1) at (2, 1) {};
		\node [style=none] (2) at (2.75, 1.5) {};
		\node [style=Z] (3) at (3.5, 0.5) {};
		\node [style=Z] (4) at (2.5, 0.5) {};
		\node [style=X] (5) at (4, 1) {};
		\node [style=X] (6) at (4, 2) {};
		\node [style=none] (7) at (4.25, 0.5) {};
		\node [style=none] (8) at (1.5, 2) {};
		\node [style=none] (9) at (1.5, 1) {};
		\node [style=andin] (10) at (2.75, 1.5) {};
	\end{pgfonlayer}
	\begin{pgfonlayer}{edgelayer}
		\draw (2.center) to (0);
		\draw (1) to (2.center);
		\draw (3) to (4);
		\draw [in=0, out=90, looseness=1.00] (3) to (2.center);
		\draw (5) to (1);
		\draw (1) to (9.center);
		\draw (8.center) to (0);
		\draw (0) to (6);
		\draw (7.center) to (3);
	\end{pgfonlayer}
\end{tikzpicture}
=
\begin{tikzpicture}
	\begin{pgfonlayer}{nodelayer}
		\node [style=none] (0) at (5, -0) {};
		\node [style=none] (1) at (4, 0.5) {};
		\node [style=none] (2) at (4, -0.5) {};
		\node [style=none] (3) at (6, -0) {};
		\node [style=andin] (4) at (5, -0) {};
	\end{pgfonlayer}
	\begin{pgfonlayer}{edgelayer}
		\draw [in=-153, out=0, looseness=1.00] (2.center) to (0.center);
		\draw [in=0, out=153, looseness=1.00] (0.center) to (1.center);
		\draw (0.center) to (3.center);
	\end{pgfonlayer}
\end{tikzpicture}
\end{align*}
\end{description}

Next, we show that $\llbracket\llbracket\_\rrbracket_{\hat \TOF}\rrbracket_{\ZXA}=1$:
The ancillae are trivial.  For the Toffoli gate:
\begin{align*}
\left\llbracket\left\llbracket
\begin{tikzpicture}
	\begin{pgfonlayer}{nodelayer}
		\node [style=dot] (0) at (5, 2) {};
		\node [style=dot] (1) at (5, 1.5) {};
		\node [style=oplus] (2) at (5, 1) {};
		\node [style=none] (3) at (5.75, 1) {};
		\node [style=none] (4) at (4.25, 1) {};
		\node [style=none] (5) at (4.25, 2) {};
		\node [style=none] (6) at (4.25, 1.5) {};
		\node [style=none] (7) at (5.75, 1.5) {};
		\node [style=none] (8) at (5.75, 2) {};
	\end{pgfonlayer}
	\begin{pgfonlayer}{edgelayer}
		\draw (2) to (1);
		\draw (1) to (0);
		\draw (8.center) to (0);
		\draw (0) to (5.center);
		\draw (6.center) to (1);
		\draw (1) to (7.center);
		\draw (3.center) to (2);
		\draw (2) to (4.center);
	\end{pgfonlayer}
\end{tikzpicture}
\right\rrbracket_{\hat \TOF}\right\rrbracket_{\ZXA}
&=
\left\llbracket
\begin{tikzpicture}
	\begin{pgfonlayer}{nodelayer}
		\node [style=X] (0) at (2, 2) {};
		\node [style=X] (1) at (2, 1) {};
		\node [style=none] (2) at (3, 1.5) {};
		\node [style=Z] (3) at (4, 0.5) {};
		\node [style=none] (4) at (1.5, 0.5) {};
		\node [style=none] (5) at (4.75, 0.5) {};
		\node [style=none] (6) at (4.75, 1) {};
		\node [style=none] (7) at (4.75, 2) {};
		\node [style=none] (8) at (1.5, 2) {};
		\node [style=none] (9) at (1.5, 1) {};
		\node [style=andin] (10) at (3, 1.5) {};
	\end{pgfonlayer}
	\begin{pgfonlayer}{edgelayer}
		\draw (3) to (5.center);
		\draw (3) to (4.center);
		\draw (9.center) to (1);
		\draw (1) to (6.center);
		\draw (7.center) to (0);
		\draw (0) to (8.center);
		\draw (0) to (2.center);
		\draw [in=90, out=0, looseness=1.00] (2.center) to (3);
		\draw (2.center) to (1);
	\end{pgfonlayer}
\end{tikzpicture}
\right\rrbracket_{\ZXA}
=
\begin{tikzpicture}
	\begin{pgfonlayer}{nodelayer}
		\node [style=dot] (0) at (6.5, 2) {};
		\node [style=dot] (1) at (6.5, 1) {};
		\node [style=oplus] (2) at (6.5, 0.25) {};
		\node [style=X] (3) at (7, 2) {};
		\node [style=X] (4) at (7, 1) {};
		\node [style=oplus] (5) at (7, 0.25) {};
		\node [style=dot] (6) at (7, -0.25) {};
		\node [style=X] (7) at (6.5, -0.25) {};
		\node [style=none] (8) at (7.5, -0.25) {};
		\node [style=dot] (9) at (6, 2.5) {};
		\node [style=oplus] (10) at (6, 2) {};
		\node [style=zeroin] (11) at (5.5, 2) {};
		\node [style=oplus] (12) at (6, 1) {};
		\node [style=dot] (13) at (6, 1.5) {};
		\node [style=zeroin] (14) at (5.5, 1) {};
		\node [style=none] (15) at (7.5, 1.5) {};
		\node [style=none] (16) at (7.5, 2.5) {};
		\node [style=none] (17) at (5, 1.5) {};
		\node [style=none] (18) at (5, 2.5) {};
		\node [style=none] (19) at (5, 0.25) {};
		\node [style=zeroout] (20) at (7.5, 0.25) {};
	\end{pgfonlayer}
	\begin{pgfonlayer}{edgelayer}
		\draw (8.center) to (6);
		\draw (6) to (5);
		\draw (5) to (2);
		\draw (7) to (6);
		\draw (2) to (1);
		\draw (1) to (0);
		\draw (0) to (3);
		\draw (4) to (1);
		\draw (10) to (11);
		\draw (10) to (9);
		\draw (12) to (14);
		\draw (12) to (13);
		\draw (1) to (12);
		\draw (10) to (0);
		\draw (16.center) to (9);
		\draw (9) to (18.center);
		\draw (17.center) to (13);
		\draw (13) to (15.center);
		\draw (20) to (5);
		\draw (2) to (19.center);
	\end{pgfonlayer}
\end{tikzpicture}
\eq{unit}
\begin{tikzpicture}
	\begin{pgfonlayer}{nodelayer}
		\node [style=dot] (0) at (6.5, 2) {};
		\node [style=dot] (1) at (6.5, 1) {};
		\node [style=oplus] (2) at (6.5, 0.25) {};
		\node [style=X] (3) at (7, 2) {};
		\node [style=X] (4) at (7, 1) {};
		\node [style=none] (5) at (7.5, 0.25) {};
		\node [style=dot] (6) at (6, 2.5) {};
		\node [style=oplus] (7) at (6, 2) {};
		\node [style=zeroin] (8) at (5.5, 2) {};
		\node [style=oplus] (9) at (6, 1) {};
		\node [style=dot] (10) at (6, 1.5) {};
		\node [style=zeroin] (11) at (5.5, 1) {};
		\node [style=none] (12) at (7.5, 1.5) {};
		\node [style=none] (13) at (7.5, 2.5) {};
		\node [style=none] (14) at (5, 1.5) {};
		\node [style=none] (15) at (5, 2.5) {};
		\node [style=none] (16) at (5, 0.25) {};
	\end{pgfonlayer}
	\begin{pgfonlayer}{edgelayer}
		\draw (2) to (1);
		\draw (1) to (0);
		\draw (0) to (3);
		\draw (4) to (1);
		\draw (7) to (8);
		\draw (7) to (6);
		\draw (9) to (11);
		\draw (9) to (10);
		\draw (1) to (9);
		\draw (7) to (0);
		\draw (13.center) to (6);
		\draw (6) to (15.center);
		\draw (14.center) to (10);
		\draw (10) to (12.center);
		\draw (2) to (16.center);
		\draw (5.center) to (2);
	\end{pgfonlayer}
\end{tikzpicture}\\
&\eq{Lem. \ref{lemma:Iwama}}
\begin{tikzpicture}
	\begin{pgfonlayer}{nodelayer}
		\node [style=dot] (0) at (6.25, 2) {};
		\node [style=dot] (1) at (6.25, 1) {};
		\node [style=oplus] (2) at (6.25, 0.25) {};
		\node [style=X] (3) at (7.25, 2) {};
		\node [style=X] (4) at (7.25, 1) {};
		\node [style=none] (5) at (7.75, 0.25) {};
		\node [style=dot] (6) at (6.75, 2.5) {};
		\node [style=oplus] (7) at (6.75, 2) {};
		\node [style=zeroin] (8) at (4.75, 2) {};
		\node [style=oplus] (9) at (5.25, 1) {};
		\node [style=dot] (10) at (5.25, 1.5) {};
		\node [style=zeroin] (11) at (4.75, 1) {};
		\node [style=none] (12) at (7.75, 1.5) {};
		\node [style=none] (13) at (7.75, 2.5) {};
		\node [style=none] (14) at (4.25, 1.5) {};
		\node [style=none] (15) at (4.25, 2.5) {};
		\node [style=none] (16) at (4.25, 0.25) {};
		\node [style=dot] (17) at (5.75, 2.5) {};
		\node [style=dot] (18) at (5.75, 1) {};
		\node [style=oplus] (19) at (5.75, 0.25) {};
	\end{pgfonlayer}
	\begin{pgfonlayer}{edgelayer}
		\draw (2) to (1);
		\draw (1) to (0);
		\draw (0) to (3);
		\draw (4) to (1);
		\draw (7) to (8);
		\draw (7) to (6);
		\draw (9) to (11);
		\draw (9) to (10);
		\draw (1) to (9);
		\draw (7) to (0);
		\draw (13.center) to (6);
		\draw (6) to (15.center);
		\draw (14.center) to (10);
		\draw (10) to (12.center);
		\draw (2) to (16.center);
		\draw (5.center) to (2);
		\draw (19) to (18);
		\draw (18) to (17);
	\end{pgfonlayer}
\end{tikzpicture}
\eq{\ref{TOF.2}}
\begin{tikzpicture}
	\begin{pgfonlayer}{nodelayer}
		\node [style=X] (0) at (6.75, 2) {};
		\node [style=X] (1) at (6.75, 1) {};
		\node [style=none] (2) at (7.25, 0.25) {};
		\node [style=dot] (3) at (6.25, 2.5) {};
		\node [style=oplus] (4) at (6.25, 2) {};
		\node [style=zeroin] (5) at (4.75, 2) {};
		\node [style=oplus] (6) at (5.25, 1) {};
		\node [style=dot] (7) at (5.25, 1.5) {};
		\node [style=zeroin] (8) at (4.75, 1) {};
		\node [style=none] (9) at (7.25, 1.5) {};
		\node [style=none] (10) at (7.25, 2.5) {};
		\node [style=none] (11) at (4.25, 1.5) {};
		\node [style=none] (12) at (4.25, 2.5) {};
		\node [style=none] (13) at (4.25, 0.25) {};
		\node [style=dot] (14) at (5.75, 2.5) {};
		\node [style=dot] (15) at (5.75, 1) {};
		\node [style=oplus] (16) at (5.75, 0.25) {};
	\end{pgfonlayer}
	\begin{pgfonlayer}{edgelayer}
		\draw (4) to (5);
		\draw (4) to (3);
		\draw (6) to (8);
		\draw (6) to (7);
		\draw (10.center) to (3);
		\draw (3) to (12.center);
		\draw (11.center) to (7);
		\draw (7) to (9.center);
		\draw (16) to (15);
		\draw (15) to (14);
		\draw (2.center) to (13.center);
		\draw (6) to (1);
		\draw (0) to (4);
	\end{pgfonlayer}
\end{tikzpicture}
\eq{unit}
\begin{tikzpicture}
	\begin{pgfonlayer}{nodelayer}
		\node [style=X] (0) at (6.75, 1) {};
		\node [style=none] (1) at (7.25, 0.25) {};
		\node [style=oplus] (2) at (5.25, 1) {};
		\node [style=dot] (3) at (5.25, 1.5) {};
		\node [style=zeroin] (4) at (4.75, 1) {};
		\node [style=none] (5) at (7.25, 1.5) {};
		\node [style=none] (6) at (7.25, 2) {};
		\node [style=none] (7) at (4.25, 1.5) {};
		\node [style=none] (8) at (4.25, 2) {};
		\node [style=none] (9) at (4.25, 0.25) {};
		\node [style=dot] (10) at (5.75, 2) {};
		\node [style=dot] (11) at (5.75, 1) {};
		\node [style=oplus] (12) at (5.75, 0.25) {};
	\end{pgfonlayer}
	\begin{pgfonlayer}{edgelayer}
		\draw (2) to (4);
		\draw (2) to (3);
		\draw (7.center) to (3);
		\draw (3) to (5.center);
		\draw (12) to (11);
		\draw (11) to (10);
		\draw (1.center) to (9.center);
		\draw (2) to (0);
		\draw (6.center) to (8.center);
	\end{pgfonlayer}
\end{tikzpicture}\\
&\eq{Lem.  \ref{lemma:Iwama}}
\begin{tikzpicture}
	\begin{pgfonlayer}{nodelayer}
		\node [style=X] (0) at (6.75, 1) {};
		\node [style=none] (1) at (7.25, 0.25) {};
		\node [style=oplus] (2) at (6.25, 1) {};
		\node [style=dot] (3) at (6.25, 1.5) {};
		\node [style=zeroin] (4) at (4.75, 1) {};
		\node [style=none] (5) at (7.25, 1.5) {};
		\node [style=none] (6) at (7.25, 2) {};
		\node [style=none] (7) at (4.25, 1.5) {};
		\node [style=none] (8) at (4.25, 2) {};
		\node [style=none] (9) at (4.25, 0.25) {};
		\node [style=dot] (10) at (5.75, 2) {};
		\node [style=dot] (11) at (5.75, 1) {};
		\node [style=oplus] (12) at (5.75, 0.25) {};
		\node [style=dot] (13) at (5.25, 2) {};
		\node [style=dot] (14) at (5.25, 1.5) {};
		\node [style=oplus] (15) at (5.25, 0.25) {};
	\end{pgfonlayer}
	\begin{pgfonlayer}{edgelayer}
		\draw (2) to (4);
		\draw (2) to (3);
		\draw (7.center) to (3);
		\draw (3) to (5.center);
		\draw (12) to (11);
		\draw (11) to (10);
		\draw (1.center) to (9.center);
		\draw (2) to (0);
		\draw (6.center) to (8.center);
		\draw (15) to (13);
	\end{pgfonlayer}
\end{tikzpicture}
\eq{\ref{TOF.2}}
\begin{tikzpicture}
	\begin{pgfonlayer}{nodelayer}
		\node [style=X] (0) at (6.25, 1) {};
		\node [style=none] (1) at (6.75, 0.25) {};
		\node [style=oplus] (2) at (5.75, 1) {};
		\node [style=dot] (3) at (5.75, 1.5) {};
		\node [style=zeroin] (4) at (4.75, 1) {};
		\node [style=none] (5) at (6.75, 1.5) {};
		\node [style=none] (6) at (6.75, 2) {};
		\node [style=none] (7) at (4.25, 1.5) {};
		\node [style=none] (8) at (4.25, 2) {};
		\node [style=none] (9) at (4.25, 0.25) {};
		\node [style=dot] (10) at (5.25, 2) {};
		\node [style=dot] (11) at (5.25, 1.5) {};
		\node [style=oplus] (12) at (5.25, 0.25) {};
	\end{pgfonlayer}
	\begin{pgfonlayer}{edgelayer}
		\draw (2) to (4);
		\draw (2) to (3);
		\draw (7.center) to (3);
		\draw (3) to (5.center);
		\draw (1.center) to (9.center);
		\draw (2) to (0);
		\draw (6.center) to (8.center);
		\draw (12) to (10);
	\end{pgfonlayer}
\end{tikzpicture}
\eq{unit}
\begin{tikzpicture}
	\begin{pgfonlayer}{nodelayer}
		\node [style=dot] (0) at (5, 2) {};
		\node [style=dot] (1) at (5, 1.5) {};
		\node [style=oplus] (2) at (5, 1) {};
		\node [style=none] (3) at (5.75, 1) {};
		\node [style=none] (4) at (4.25, 1) {};
		\node [style=none] (5) at (4.25, 2) {};
		\node [style=none] (6) at (4.25, 1.5) {};
		\node [style=none] (7) at (5.75, 1.5) {};
		\node [style=none] (8) at (5.75, 2) {};
	\end{pgfonlayer}
	\begin{pgfonlayer}{edgelayer}
		\draw (2) to (1);
		\draw (1) to (0);
		\draw (8.center) to (0);
		\draw (0) to (5.center);
		\draw (6.center) to (1);
		\draw (1) to (7.center);
		\draw (3.center) to (2);
		\draw (2) to (4.center);
	\end{pgfonlayer}
\end{tikzpicture}
\end{align*}
\end{proof}

\section{The identities of \texorpdfstring{$\CNOT$}{CNOT}}

The category $\CNOT$  \cite{cnot} is the $\dag$-symmetric monoidal subcategory of $\TOF$ generated by the controlled not gate and ancillary bits $|1\rangle$, $\langle 1|$.  A complete set of identities is presented in the following figure, because some of the identities are used in the translation between $\ZXA$ and the (co)unital completion of $\TOF$.

\begin{figure}[H]
	\noindent
	\scalebox{1.0}{%
		\vbox{%
			\begin{mdframed}
				\begin{multicols}{2}
					\begin{enumerate}[label={\bf [CNOT.\arabic*]}, ref={\bf [CNOT.\arabic*]}, wide = 0pt, leftmargin = 2em]
						\item
						\label{CNOT.1}
						{\hfil
							$
							\begin{tikzpicture}
							\begin{pgfonlayer}{nodelayer}
							\node [style=nothing] (0) at (0, 0) {};
							\node [style=nothing] (1) at (0, .5) {};
							\node [style=oplus] (2) at (.5, 0) {};
							\node [style=dot] (3) at (.5, .5) {};
							\node [style=dot] (4) at (1, 0) {};
							\node [style=oplus] (5) at (1, .5) {};
							\node [style=oplus] (6) at (1.5, 0) {};
							\node [style=dot] (7) at (1.5, .5) {};
							\node [style=nothing] (8) at (2, 0) {};
							\node [style=nothing] (9) at (2, .5) {};
							\end{pgfonlayer}
							\begin{pgfonlayer}{edgelayer}
							\draw [style=simple] (0) to (8);
							\draw [style=simple] (1) to (9);
							\draw [style=simple] (2) to (3);
							\draw [style=simple] (4) to (5);
							\draw [style=simple] (6) to (7);
							\end{pgfonlayer}
							\end{tikzpicture}
							=
							\begin{tikzpicture}
	\begin{pgfonlayer}{nodelayer}
		\node [style=nothing] (0) at (0, -0) {};
		\node [style=nothing] (1) at (0, 0.5000002) {};
		\node [style=nothing] (2) at (1, 0.5000002) {};
		\node [style=nothing] (3) at (1, -0) {};
	\end{pgfonlayer}
	\begin{pgfonlayer}{edgelayer}
		\draw [in=180, out=0, looseness=1.25] (1) to (3);
		\draw [in=180, out=0, looseness=1.25] (0) to (2);
	\end{pgfonlayer}
\end{tikzpicture}$}

						\item
						\label{CNOT.2}
						\hfil{
							$
							\begin{tikzpicture}
							\begin{pgfonlayer}{nodelayer}
							\node [style=nothing] (0) at (0, 0) {};
							\node [style=nothing] (1) at (0, .5) {};
							\node [style=oplus] (2) at (.5, 0) {};
							\node [style=dot] (3) at (.5, .5) {};
							\node [style=oplus] (6) at (1, 0) {};
							\node [style=dot] (7) at (1, .5) {};
							\node [style=nothing] (8) at (1.5, 0) {};
							\node [style=nothing] (9) at (1.5, .5) {};
							\end{pgfonlayer}
							\begin{pgfonlayer}{edgelayer}
							\draw [style=simple] (0) to (8);
							\draw [style=simple] (1) to (9);
							\draw [style=simple] (2) to (3);
							\draw [style=simple] (6) to (7);
							\end{pgfonlayer}
							\end{tikzpicture}
							=
							\begin{tikzpicture}
							\begin{pgfonlayer}{nodelayer}
							\node [style=nothing] (0) at (0, 0) {};
							\node [style=nothing] (1) at (0, .5) {};
							\node [style=nothing] (3) at (1.5, 0) {};
							\node [style=nothing] (4) at (1.5, .5) {};
							\end{pgfonlayer}
							\begin{pgfonlayer}{edgelayer}
							\draw [style=simple] (0) to (3);
							\draw [style=simple] (1) to (4);
							\end{pgfonlayer}
							\end{tikzpicture}
							$}
						
						\item
						\label{CNOT.3}
						\hfil{
							$
							\begin{tikzpicture}
							\begin{pgfonlayer}{nodelayer}
							\node [style=nothing] (0) at (0, 1) {};
							\node [style=nothing] (1) at (0, .5) {};
							\node [style=nothing] (2) at (0, 0) {};
							\node [style=oplus] (3) at (.75, 1) {};
							\node [style=dot] (4) at (.75, .5) {};
							\node [style=dot] (5) at (1.25, .5) {};
							\node [style=oplus] (6) at (1.25, 0) {};
							\node [style=nothing] (7) at (2, 1) {};
							\node [style=nothing] (8) at (2, .5) {};
							\node [style=nothing] (9) at (2, 0) {};
							\end{pgfonlayer}
							\begin{pgfonlayer}{edgelayer}
							\draw [style=simple] (0) to (7);
							\draw [style=simple] (1) to (8);
							\draw [style=simple] (2) to (9);
							\draw [style=simple] (3) to (4);
							\draw [style=simple] (5) to (6);
							\end{pgfonlayer}
							\end{tikzpicture}
							=
							\begin{tikzpicture}
							\begin{pgfonlayer}{nodelayer}
							\node [style=nothing] (0) at (0, 1) {};
							\node [style=nothing] (1) at (0, .5) {};
							\node [style=nothing] (2) at (0, 0) {};
							\node [style=oplus] (3) at (1.25, 1) {};
							\node [style=dot] (4) at (1.25, .5) {};
							\node [style=dot] (5) at (.75, .5) {};
							\node [style=oplus] (6) at (.75, 0) {};
							\node [style=nothing] (7) at (2, 1) {};
							\node [style=nothing] (8) at (2, .5) {};
							\node [style=nothing] (9) at (2, 0) {};
							\end{pgfonlayer}
							\begin{pgfonlayer}{edgelayer}
							\draw [style=simple] (0) to (7);
							\draw [style=simple] (1) to (8);
							\draw [style=simple] (2) to (9);
							\draw [style=simple] (3) to (4);
							\draw [style=simple] (5) to (6);
							\end{pgfonlayer}
							\end{tikzpicture}
							$}
						
						\item 
						\label{CNOT.4}
						\hfil{
							\begin{tabular}{c}
							$
							\begin{tikzpicture}
							\begin{pgfonlayer}{nodelayer}
							\node [style=onein] (0) at (0, .5) {};
							\node [style=nothing] (1) at (0, 0) {};
							\node [style=dot] (2) at (.5, .5) {};
							\node [style=oplus] (3) at (.5, 0) {};
							\node [style=nothing] (4) at (1, .5) {};
							\node [style=nothing] (5) at (1, 0) {};
							\end{pgfonlayer}
							\begin{pgfonlayer}{edgelayer}
							\draw [style=simple] (0) to (4);
							\draw [style=simple] (1) to (5);
							\draw [style=simple] (2) to (3);
							\end{pgfonlayer}
							\end{tikzpicture}
							=
							\begin{tikzpicture}
							\begin{pgfonlayer}{nodelayer}
							\node [style=onein] (0) at (0, .5) {};
							\node [style=nothing] (1) at (0, 0) {};
							\node [style=dot] (2) at (.5, .5) {};
							\node [style=oplus] (3) at (.5, 0) {};
							\node [style=oneout] (4) at (1, .5) {};
							\node [style=nothing] (5) at (2, 0) {};
							\node [style=onein] (6) at (1.5, .5) {};
							\node [style=nothing] (7) at (2, 0.5) {};
							\end{pgfonlayer}
							\begin{pgfonlayer}{edgelayer}
							\draw [style=simple] (0) to (4);
							\draw [style=simple] (1) to (5);
							\draw [style=simple] (2) to (3);
							\draw [style=simple] (6) to (7);
							\end{pgfonlayer}
							\end{tikzpicture}$\\
							$ $\\
							$\begin{tikzpicture}
							\begin{pgfonlayer}{nodelayer}
							\node [style=nothing] (0) at (0, .5) {};
							\node [style=nothing] (1) at (0, 0) {};
							\node [style=dot] (2) at (.5, .5) {};
							\node [style=oplus] (3) at (.5, 0) {};
							\node [style=oneout] (4) at (1, .5) {};
							\node [style=nothing] (5) at (1, 0) {};
							\end{pgfonlayer}
							\begin{pgfonlayer}{edgelayer}
							\draw [style=simple] (0) to (4);
							\draw [style=simple] (1) to (5);
							\draw [style=simple] (2) to (3);
							\end{pgfonlayer}
							\end{tikzpicture}
							=
							\begin{tikzpicture}
							\begin{pgfonlayer}{nodelayer}
							\node [style=oneout] (0) at (2, .5) {};
							\node [style=nothing] (1) at (2, 0) {};
							\node [style=dot] (2) at (1.5, .5) {};
							\node [style=oplus] (3) at (1.5, 0) {};
							\node [style=onein] (4) at (1, .5) {};
							\node [style=nothing] (5) at (0, 0) {};
							\node [style=oneout] (6) at (.5, .5) {};
							\node [style=nothing] (7) at (0, 0.5) {};
							\end{pgfonlayer}
							\begin{pgfonlayer}{edgelayer}
							\draw [style=simple] (0) to (4);
							\draw [style=simple] (1) to (5);
							\draw [style=simple] (2) to (3);
							\draw [style=simple] (6) to (7);
							\end{pgfonlayer}
							\end{tikzpicture}$
							\end{tabular}
							}
						
						\item 
						\label{CNOT.5}
						\hfil{
							$
							\begin{tikzpicture}
							\begin{pgfonlayer}{nodelayer}
							\node [style=nothing] (0) at (0, 1) {};
							\node [style=nothing] (1) at (0, .5) {};
							\node [style=nothing] (2) at (0, 0) {};
							\node [style=dot] (3) at (.75, 1) {};
							\node [style=oplus] (4) at (.75, .5) {};
							\node [style=oplus] (5) at (1.25, .5) {};
							\node [style=dot] (6) at (1.25, 0) {};
							\node [style=nothing] (7) at (2, 1) {};
							\node [style=nothing] (8) at (2, .5) {};
							\node [style=nothing] (9) at (2, 0) {};
							\end{pgfonlayer}
							\begin{pgfonlayer}{edgelayer}
							\draw [style=simple] (0) to (7);
							\draw [style=simple] (1) to (8);
							\draw [style=simple] (2) to (9);
							\draw [style=simple] (3) to (4);
							\draw [style=simple] (5) to (6);
							\end{pgfonlayer}
							\end{tikzpicture}
							=
							\begin{tikzpicture}
							\begin{pgfonlayer}{nodelayer}
							\node [style=nothing] (0) at (0, 1) {};
							\node [style=nothing] (1) at (0, .5) {};
							\node [style=nothing] (2) at (0, 0) {};
							\node [style=dot] (3) at (1.25, 1) {};
							\node [style=oplus] (4) at (1.25, .5) {};
							\node [style=oplus] (5) at (.75, .5) {};
							\node [style=dot] (6) at (.75, 0) {};
							\node [style=nothing] (7) at (2, 1) {};
							\node [style=nothing] (8) at (2, .5) {};
							\node [style=nothing] (9) at (2, 0) {};
							\end{pgfonlayer}
							\begin{pgfonlayer}{edgelayer}
							\draw [style=simple] (0) to (7);
							\draw [style=simple] (1) to (8);
							\draw [style=simple] (2) to (9);
							\draw [style=simple] (3) to (4);
							\draw [style=simple] (5) to (6);
							\end{pgfonlayer}
							\end{tikzpicture}
							$}
						
						\item 
						\label{CNOT.6}
						\hfil{
							$
							\begin{tikzpicture}
							\begin{pgfonlayer}{nodelayer}
							\node [style=onein] (0) at (0, 0) {};
							\node [style=oneout] (1) at (1, 0) {};
							\end{pgfonlayer}
							\begin{pgfonlayer}{edgelayer}
							\draw [style=simple] (0) to (1);
							\end{pgfonlayer}
							\end{tikzpicture}
							=
							\begin{tikzpicture}
							\begin{pgfonlayer}{nodelayer}
							\node [style=rn] (0) at (0, 0) {};
							\node [style=rn] (1) at (1, 0) {};
							\end{pgfonlayer}
							\begin{pgfonlayer}{edgelayer}
							\end{pgfonlayer}
							\end{tikzpicture}
							$}
						
						\item 
						\label{CNOT.7}
						\hfil{
							\begin{tabular}{c}
							$\begin{tikzpicture}
							\begin{pgfonlayer}{nodelayer}
							\node [style=onein] (0) at (0, 1) {};
							\node [style=onein] (1) at (0, .5) {};
							\node [style=nothing] (2) at (0, 0) {};
							\node [style=dot] (3) at (.5, 1) {};
							\node [style=oplus] (4) at (.5, .5) {};
							\node [style=dot] (5) at (1, .5) {};
							\node [style=oplus] (6) at (1, 0) {};
							\node [style=oneout] (7) at (1, 1) {};
							\node [style=nothing] (8) at (1.5, .5) {};
							\node [style=nothing] (9) at (1.5, 0) {};
							\end{pgfonlayer}
							\begin{pgfonlayer}{edgelayer}
							\draw [style=simple] (0) to (7);
							\draw [style=simple] (1) to (8);
							\draw [style=simple] (2) to (9);
							\draw [style=simple] (3) to (4);
							\draw [style=simple] (5) to (6);
							\end{pgfonlayer}
							\end{tikzpicture}
							=
							\begin{tikzpicture}
							\begin{pgfonlayer}{nodelayer}
							\node [style=onein] (0) at (0, 1) {};
							\node [style=onein] (1) at (0, .5) {};
							\node [style=nothing] (2) at (0, 0) {};
							\node [style=dot] (3) at (.5, 1) {};
							\node [style=oplus] (4) at (.5, .5) {};
							\node [style=oneout] (7) at (1, 1) {};
							\node [style=nothing] (8) at (1.5, .5) {};
							\node [style=nothing] (9) at (1.5, 0) {};
							\end{pgfonlayer}
							\begin{pgfonlayer}{edgelayer}
							\draw [style=simple] (0) to (7);
							\draw [style=simple] (1) to (8);
							\draw [style=simple] (2) to (9);
							\draw [style=simple] (3) to (4);
							\end{pgfonlayer}
							\end{tikzpicture}$\\
							$ $\\
							$\begin{tikzpicture}
							\begin{pgfonlayer}{nodelayer}
							\node [style=oneout] (0) at (1.5, 1) {};
							\node [style=oneout] (1) at (1.5, .5) {};
							\node [style=nothing] (2) at (1.5, 0) {};
							\node [style=dot] (3) at (1, 1) {};
							\node [style=oplus] (4) at (1, .5) {};
							\node [style=dot] (5) at (.5, .5) {};
							\node [style=oplus] (6) at (.5, 0) {};
							\node [style=onein] (7) at (.5, 1) {};
							\node [style=nothing] (8) at (0, .5) {};
							\node [style=nothing] (9) at (0, 0) {};
							\end{pgfonlayer}
							\begin{pgfonlayer}{edgelayer}
							\draw [style=simple] (0) to (7);
							\draw [style=simple] (1) to (8);
							\draw [style=simple] (2) to (9);
							\draw [style=simple] (3) to (4);
							\draw [style=simple] (5) to (6);
							\end{pgfonlayer}
							\end{tikzpicture}
							=
							\begin{tikzpicture}
							\begin{pgfonlayer}{nodelayer}
							\node [style=oneout] (0) at (1.5, 1) {};
							\node [style=oneout] (1) at (1.5, .5) {};
							\node [style=nothing] (2) at (1.5, 0) {};
							\node [style=dot] (3) at (1, 1) {};
							\node [style=oplus] (4) at (1, .5) {};
							\node [style=onein] (7) at (.5, 1) {};
							\node [style=nothing] (8) at (0, .5) {};
							\node [style=nothing] (9) at (0, 0) {};
							\end{pgfonlayer}
							\begin{pgfonlayer}{edgelayer}
							\draw [style=simple] (0) to (7);
							\draw [style=simple] (1) to (8);
							\draw [style=simple] (2) to (9);
							\draw [style=simple] (3) to (4);
							\end{pgfonlayer}
							\end{tikzpicture}$
							\end{tabular}
							}
						
						\item 
						\label{CNOT.8}
						\hfil{
							$
							\begin{tikzpicture}
							\begin{pgfonlayer}{nodelayer}
							\node [style=nothing] (0) at (0, 1) {};
							\node [style=nothing] (1) at (0, .5) {};
							\node [style=nothing] (2) at (0, 0) {};
							\node [style=dot] (3) at (.5, 1) {};
							\node [style=oplus] (4) at (.5, .5) {};
							\node [style=dot] (5) at (1, .5) {};
							\node [style=oplus] (6) at (1, 0) {};
							\node [style=dot] (7) at (1.5, 1) {};
							\node [style=oplus] (8) at (1.5, .5) {};
							\node [style=nothing] (9) at (2, 1) {};
							\node [style=nothing] (10) at (2, .5) {};
							\node [style=nothing] (11) at (2, 0) {};
							\end{pgfonlayer}
							\begin{pgfonlayer}{edgelayer}
							\draw [style=simple] (0) to (9);
							\draw [style=simple] (1) to (10);
							\draw [style=simple] (2) to (11);
							\draw [style=simple] (3) to (4);
							\draw [style=simple] (5) to (6);
							\draw [style=simple] (7) to (8);
							\end{pgfonlayer}
							\end{tikzpicture}
							=
							\begin{tikzpicture}
							\begin{pgfonlayer}{nodelayer}
							\node [style=nothing] (0) at (0, 1) {};
							\node [style=nothing] (1) at (0, .5) {};
							\node [style=nothing] (2) at (0, 0) {};
							\node [style=dot] (5) at (.5, .5) {};
							\node [style=oplus] (6) at (.5, 0) {};
							\node [style=dot] (7) at (1, 1) {};
							\node [style=oplus] (8) at (1, 0) {};
							\node [style=nothing] (9) at (1.5, 1) {};
							\node [style=nothing] (10) at (1.5, .5) {};
							\node [style=nothing] (11) at (1.5, 0) {};
							\end{pgfonlayer}
							\begin{pgfonlayer}{edgelayer}
							\draw [style=simple] (0) to (9);
							\draw [style=simple] (1) to (10);
							\draw [style=simple] (2) to (11);
							\draw [style=simple] (5) to (6);
							\draw [style=simple] (7) to (8);
							\end{pgfonlayer}
							\end{tikzpicture}
							$}
						
						\item 
						\label{CNOT.9}
						\hfil{
							$
							\begin{tikzpicture}
							\begin{pgfonlayer}{nodelayer}
							\node [style=onein] (0) at (0, 1) {};
							\node [style=onein] (1) at (0, .5) {};
							\node [style=nothing] (2) at (0, 0) {};
							\node [style=dot] (3) at (.5, 1) {};
							\node [style=oplus] (4) at (.5, .5) {};
							\node [style=oneout] (7) at (1, 1) {};
							\node [style=oneout] (8) at (1, .5) {};
							\node [style=nothing] (9) at (1, 0) {};
							\end{pgfonlayer}
							\begin{pgfonlayer}{edgelayer}
							\draw [style=simple] (0) to (7);
							\draw [style=simple] (1) to (8);
							\draw [style=simple] (2) to (9);
							\draw [style=simple] (3) to (4);
							\end{pgfonlayer}
							\end{tikzpicture}
							=
							\begin{tikzpicture}
							\begin{pgfonlayer}{nodelayer}
							\node [style=onein] (0) at (0, 1) {};
							\node [style=onein] (1) at (0, .5) {};
							\node [style=nothing] (2) at (0, 0) {};
							\node [style=dot] (3) at (1, 1) {};
							\node [style=oplus] (4) at (1, .5) {};
							\node [style=oneout] (7) at (2, 1) {};
							\node [style=oneout] (8) at (2, .5) {};
							\node [style=nothing] (9) at (2, 0) {};
							\node [style=oneout] (10) at (0.75, 0) {};
							\node [style=onein] (11) at (1.25, 0) {};
							\end{pgfonlayer}
							\begin{pgfonlayer}{edgelayer}
							\draw [style=simple] (0) to (7);
							\draw [style=simple] (1) to (8);
							\draw [style=simple] (2) to (10);
							\draw [style=simple] (11) to (9);
							\draw [style=simple] (3) to (4);
							\end{pgfonlayer}
							\end{tikzpicture}
							$}
					\end{enumerate}
				\end{multicols}
				\
			\end{mdframed}
	}}
	\caption{The identities of \texorpdfstring{$\CNOT$}{CNOT}}
	\label{fig:CNOT}
\end{figure}

\end{document}